\documentclass[twoside,11pt]{article}

\usepackage{hyperref}

\usepackage{authblk}

\RequirePackage{algorithm}

\usepackage[utf8]{inputenc}
\usepackage[dvipsnames]{xcolor}

\usepackage{hyperref}
\hypersetup{
    colorlinks=true,
    citecolor=blue,
    filecolor=blue,
    linkcolor=blue,
    urlcolor=blue,
}
\usepackage{natbib}
\usepackage{enumitem}
\usepackage{microtype}
\usepackage{amsmath,amsthm,amssymb}
\DeclareMathOperator*{\argmin}{arg\,min}
\usepackage{dsfont}
\usepackage{mathrsfs}
\usepackage[noabbrev]{cleveref}
\usepackage{stmaryrd}

\usepackage{graphicx}
\usepackage{subcaption}
\usepackage{booktabs}
\usepackage{dblfloatfix} 
\usepackage{multirow}
\usepackage{wrapfig}
\usepackage{diagbox}

\makeatletter
\newcommand{\dashrule}[1][black]{%
  \color{#1}\rule[\dimexpr.5ex-.2pt]{4pt}{.4pt}\xleaders\hbox{\rule{4pt}{0pt}\rule[\dimexpr.5ex-.2pt]{4pt}{.4pt}}\hfill\kern0pt%
}
\makeatother

\usepackage{tikz}
\usetikzlibrary{calc}
\usetikzlibrary{positioning}

\theoremstyle{plain}
\newtheorem{theorem}{Theorem}[section]
\newtheorem{proposition}[theorem]{Proposition}
\newtheorem{lemma}[theorem]{Lemma}
\newtheorem{corollary}[theorem]{Corollary}

\theoremstyle{definition}
\newtheorem{definition}[theorem]{Definition}
\newtheorem{assumption}{Assumption}

\theoremstyle{remark}
\newtheorem{remark}[theorem]{Remark}

\newtheorem{model}[theorem]{Model}

\usepackage{algcompatible}

\usepackage{breqn}

\definecolor{blindorange}{RGB}{215,131,37}
\definecolor{blindblue} {RGB}{81,172,226}
\definecolor{blindpurple} {RGB}{193,114,177}
\definecolor{blindgreen} {RGB}{33,145,106}
\definecolor{blindpink} {RGB}{252,174,226}
\definecolor{blindbrown} {RGB}{203,145,100}
\definecolor{mLightBrown}{RGB}{230,120,20}

\newcommand{\ind}{\perp\!\!\!\!\perp} 

\newcommand{\yindmx}{$Y\!\!\ind\!\!M\; | X$\xspace}
\newcommand{\yindmxbold}{$\mathbf{Y\!\!\ind\!\!M\; | X}$\xspace}
\newcommand{\ynotindmx}{$Y\!\!\not\ind\!\!M\; | X$\xspace}

\usepackage[hang,flushmargin]{footmisc}

\usepackage{xspace}

\usepackage{wasysym}
\usepackage{pifont}

\def\mis{\mathrm{mis}}
\def\obs{\mathrm{obs}}

\def\Tr{\mathrm{Tr}}
\def\Cal{\mathrm{Cal}}

\def\ind{\perp\!\!\!\!\perp }

\def\p{\mathcal{P}}
\def\exch{^{\mathrm{exch}(n+1)}}
\def\iid{^{\otimes(n+1)}}
\def\mcar{_{\mathrm{MCAR}}}
\def\mar{_{\mathrm{MAR}}}
\def\mnar{_{\mathrm{MNAR}}}
\def\ymx{_{\mathrm{Y\!\!\ind\!\!M\; | X}}}
\def\mcarymx{_{\mathrm{MCAR,Y\!\!\ind\!\!M\; | X}}}


\DeclareMathAlphabet{\mathdutchcal}{U}{dutchcal}{m}{n}

\newcommand{\mask}{\texttt{CP-MDA-Nested}\xspace}
\newcommand{\masksub}{\texttt{CP-MDA-Exact}\xspace}
\newcommand{\newnested}{\texttt{CP-MDA-Nested}$^{\star}$\xspace}

\usepackage[most]{tcolorbox}
\newtcolorbox{mybox}[1][]{%
  enhanced,
  opacityback=0, 
  frame hidden,
  overlay unbroken and first ={%
    \draw[thick] ([xshift=50pt]frame.north west) -| ([yshift=-50pt]frame.north west);
    \draw[thick] ([xshift=-50pt]frame.south east) -| ([yshift=50pt]frame.south east);
  }
}

\newcommand{\cna}{{\mathcal{C}}_{\alpha}}

\newcommand{\cmarkuq}{\textcolor{blindblue}{\ding{51}}}
\newcommand{\xmarkuq}{\textcolor{mLightBrown}{\ding{55}}}

\def\missing{\mathrm{missing}}
\def\observed{\mathrm{observed}}

\usepackage{cancel}

\definecolor{mydarkblue}{rgb}{0,0.08,0.45}
\hypersetup{
    colorlinks=true,
    linkcolor=mydarkblue,
    citecolor=mydarkblue,
    filecolor=mydarkblue,
    urlcolor=mydarkblue,
}

\usepackage{geometry}
\geometry{
left=32mm,
right=32mm,
top=25mm,
bottom=25mm,
}
\providecommand{\keywords}[1]{\textbf{Keywords:} #1}

\usepackage{times}

\footskip = 25pt 

\begin{document}

\title{Predictive Uncertainty Quantification with Missing Covariates}

\author[,1,2]{Margaux Zaffran\thanks{Corresponding author: \texttt{margaux.zaffran@inria.fr}}}
\author[1]{Julie Josse}
\author[3]{Yaniv Romano}
\author[2]{Aymeric Dieuleveut}

\date{}

\affil[1]{PreMeDICaL project team, INRIA Sophia-Antipolis, Montpellier, France}
\affil[2]{CMAP, CNRS, École polytechnique, Institut Polytechnique de Paris, Palaiseau, France}
\affil[3]{Departments of Electrical Engineering and of Computer Science, Technion - Israel Institute of Technology, Haifa, Israel}

\maketitle

\begin{abstract} 

Predictive uncertainty quantification is crucial in decision-making problems. We investigate how to adequately quantify predictive uncertainty with missing covariates. A bottleneck is that missing values induce heteroskedasticity on the response's predictive distribution given the observed covariates. Thus, we focus on building predictive sets for the response that are valid \emph{conditionally} to the missing values pattern. We show that this goal is impossible to achieve informatively in a distribution-free fashion, and we propose useful restrictions on the distribution class. Motivated by these hardness results, we characterize how missing values and predictive uncertainty intertwine. Particularly, we rigorously formalize the idea that the more missing values, the higher the predictive uncertainty. Then, we introduce a generalized framework, coined \newnested, outputting predictive sets in both regression and classification. Under independence between the missing value pattern and both the features and the response (an assumption justified by our hardness results), these predictive sets are valid conditionally to any pattern of missing values. Moreover, it provides great flexibility in the trade-off between \emph{statistical variability} and \emph{efficiency}. Finally, we experimentally assess the performances of \newnested beyond its scope of theoretical validity, demonstrating promising outcomes in more challenging configurations than independence.
 
\end{abstract}

\keywords{predictive uncertainty quantification, missing values, conformal prediction, distribution-free inference} 

\section{Introduction}

\paragraph{Predictive uncertainty quantification.} Over the last decades, major research efforts on statistical and machine learning algorithms have enabled them to leverage large data sets. They are now used to support high-stakes decision-making problems such as medical, energy, or civic applications, to name just a few. To ensure the safe deployment of these models and their adoption by society, it is crucial to acknowledge that these point predictions remain uncertain, and to quantify this uncertainty, communicating the limits of predictive performance. Therefore, uncertainty quantification has received much attention in recent years, particularly in the form of building prediction sets. 

Formally, the aim is to build a predictive set for the response $Y \in \mathcal{Y}$, after observing the random vector $X \in \mathcal{X} \subseteq \mathds{R}^d$ which contains $d \in \mathds{N}^*$ explanatory variables. Given a \textit{miscoverage level} $\alpha \in [0,1]$, a \textit{marginally valid} predictive set $\mathcal{C}_\alpha (\cdot)$ is a function satisfying

\begin{equation}
\mathds{P}( Y \in \mathcal{C}_\alpha (X) ) \geq 1 - \alpha.
\label{eq:pi_mv}
\end{equation}

The goal is that $\mathcal{C}_\alpha(\cdot)$ is as small as possible while being marginally valid. Distribution-free uncertainty quantification tools are powerful as they require minimal assumptions on the data generation process---typically only access to a sequence of $n$ exchangeable data points---making them usable on a wide range of applications, unlike traditional probabilistic approaches.

Importantly, it has to be noted that \Cref{eq:pi_mv} averages among all probable $(X,Y)$, and thus might over-cover easy data points (say, e.g., young patients) at the cost of under-covering harder data points (say, e.g., older patients). Therefore, one branch of the literature studied how \Cref{eq:pi_mv} could be turned into a stronger goal. Specifically, \citet{vovk_conditional_2012,lei_distribution-free_2014,barber_limits_2021} emphasize trade-offs between theory and practice. They investigate the implications of designing a practical distribution-free method, that is one which outputs sets $\mathcal{C}_\alpha (\cdot)$ such that
\begin{equation}
\mathds{P}( Y \in \mathcal{C}_\alpha (x) | X = x) \geq 1 - \alpha, \text{ for any } x \in \mathcal{X}.
\label{eq:pi_cv}
\end{equation}

Unfortunately, they showed that \Cref{eq:pi_cv} is impossible to achieve in an informative way (i.e., typically $ \mathcal{C}_\alpha (\cdot) \equiv \mathcal{Y}$ with high probability) if no assumptions on the data distributions are made. Moreover, finding natural relaxations that are compatible with informative distribution-free predictive sets seems also hard: restrictions to conditioning on $x\in\mathdutchcal{X}$, for an arbitrary mass positive $\mathdutchcal{X}\subseteq\mathcal{X}$,  is still hard to achieve informatively \citep{barber_limits_2021}.

\paragraph{Missing values.} Somewhat paradoxically, as the quantity of data rises, the number of missing data also increases. This phenomenon is modeled through the introduction of a third random variable called the \textit{mask} or \textit{missing pattern}, denoted by $M \in \mathcal{M} \subseteq \{0,1\}^d$, encoding which variables have not been observed. That is, the mask $M$ is the indicator vector such that for any $j \in \llbracket 1,d \rrbracket$, $M_j = 1$ whenever $X_j$ is missing (not observed), and $M_j = 0$ otherwise. As a consequence, we are working on $\p := $\{distributions on $(\mathcal{X},\mathcal{M},\mathcal{Y})$\}. For a given pattern $m \in \mathcal{M}, X_{\text{obs}(m)}$ is the random vector of observed features, and $X_{\text{mis}(m)}$ is the random vector of unobserved ones. For example, if we observe $(\texttt{NA},6,2)$ then $m = (1,0,0)$ and $x_{\text{obs}(m)} = (6,2)$. Notice that the number of different missing patterns, i.e., the size or cardinality of $\mathcal{M} := \#\mathcal{M}$, typically grows exponentially in the dimension (for $\mathcal{M} = \{0,1\}^d$ there are $2^d$ different patterns).

The way we deal with those missing values will typically depend on the downstream task at hand. While there is a vast range of studies in the inferential setting \citep{Little2019,Josse2018StatScience} with numerous implementations \citep{mayerrmisstastic}, the research effort is scarcer on the prediction framework \citep{josse2019, lemorvan2020, lemorvan2020neumiss, lemorvan2021, ayme2022, vanness, pmlr-v202-ayme23a, pmlr-v202-zaffran23a, ayme2024random}. It is mostly limited to \textit{point prediction}, except for \citet{pmlr-v202-zaffran23a}. The literature on both inference and prediction highlights the necessity of taking into account the missingness distribution. Following \citet{rubin1976inference}, we consider three well-known missingness mechanisms.

\begin{definition}[Missing Completely At Random (MCAR)] 
The missing pattern distribution is said to be Missing Completely At Random (MCAR) if $M\ind X$.  We denote $\p\mcar$ the corresponding set of distributions, i.e.~$\p\mcar := \{ P \in \p,$ such that for any $m \in \mathcal{M}, \mathds{P}_P\left(M = m | X\right) = \mathds{P}_P\left(M = m\right)$, that is $M\ind X\}$.
\label[definition]{def:MCAR}
\end{definition}

\begin{definition}[Missing At Random (MAR)]
\label[definition]{def:MAR}
The missing pattern distribution is said to be Missing At Random (MAR) if $M$ only depends on the observed components of $X$. We denote $\p\mar$ the corresponding set of distributions, i.e.~$\p\mar := \{ P \in \p,$ such that for any $m \in \mathcal{M}$, $\left.\mathds{P}_P\left(M = m | X\right) = \mathds{P}_P\left(M = m | X_{\obs(m)}\right)\right\}$.
\end{definition}

\begin{definition}[Missing Non At Random (MNAR)]
\label[definition]{def:MNAR}
The missing pattern distribution is said to be Missing Non At Random (MNAR) if $M$ can depend on the observed values of $X$ but also on its missing components. We denote $\p\mnar$ the corresponding set of distributions, i.e.~$\p\mnar := \p$. 
\end{definition}

\begin{remark}
We thus have $\p\mcar \subset \p\mar \subset \p\mnar = \p$.
\end{remark}

\paragraph{Predictive framework with missing covariates.} In a predictive framework, the dependence between $Y$ and $M$ plays a key role, maybe even bigger than the relationship between $(X,M)$. Indeed, in some situations, $Y$ can be a direct function of $M$: the missingness conveys in itself information about the label. Therefore, these cases are particularly challenging in a predictive framework, as some patterns on the one hand can induce an important label distributional shift, and on the other hand be rarely observed due to the high cardinality of $\mathcal{M}$. Thus, we focus on settings where there is \textit{not} such a direct dependency, that is \Cref{ass:y_ind_m}. Yet, as we will show in the paper, it remains that the lack of observation of some features influences the uncertainty of the prediction of~$Y$ from $X_{\obs(M)}$.

\begin{assumption}[$M$ does not explain $Y$]
\label[assumption]{ass:y_ind_m} We say that $Y$ is independent of $M$ given $X$ if \yindmx. The associated distribution belongs to~$\p\ymx$.
\end{assumption}

\Cref{def:MAR,def:MCAR,def:MNAR} and \Cref{ass:y_ind_m} will be our main assumptions on the  joint distribution of $(X,M,Y)$  throughout the manuscript. Our interest is in building predictive sets from $n$ observations $\left( X^{(k)}, M^{(k)}, Y^{(k)} \right)_{k=1}^{n}$ on a new test point $\left( X^{(n+1)}, M^{(n+1)}, Y^{(n+1)} \right)$. We thus also make assumptions on the \textit{links between those samples}: the usual backbone assumption is that we have access to $n+1$ independent and identically distributed (i.i.d.)~draws from a distribution $Q$ in a set $\mathcal Q$, with  $\mathcal Q$ being typically one of $\p_\mcar$, $\p_\mar$, $\p$, etc. The data distribution thus belongs to $\left\{Q\iid, Q \in \mathcal Q \right\}$, which we denote $\mathcal Q \iid$. Furthermore, we also consider here a relaxation of i.i.d., namely \emph{exchangeability}, which is often sufficient to obtain guarantees in distribution-free predictive inference.

\begin{assumption}[exchangeability]
\label[assumption]{ass:iid}
The random variables $\left( X^{(k)}, M^{(k)}, Y^{(k)} \right)_{k=1}^{n+1}$ are exchangeable, i.e., their distribution does not change when we permute them. We denote $\mathcal Q \exch = \left\{Q\exch, Q \in \mathcal Q \right\}$ the set of distributions of exchangeable random variables, with marginal distribution in $\mathcal Q$.
\end{assumption}

An i.i.d. sequence is a fortiori exchangeable, while the reverse is not true (for example, sampling without replacement leads to a sequence that is exchangeable but not i.i.d.). 

\begin{remark}
We thus have that for any $\mathcal{Q}$, $\mathcal{Q}\iid\subset \mathcal{Q}\exch$.
\end{remark}

\paragraph{Predictive uncertainty quantification under missing covariates.} When features are missing, \Cref{eq:pi_mv} extends with $\mathcal{C}_\alpha$ a function of two arguments: $X$ and $M$. Specifically, $\mathcal{C}_\alpha$ is a \textit{marginally valid} predictive set for the test response $Y$ given its corresponding covariates $X$ and the mask $M$ if:

\begin{equation}
\mathds{P}( Y \in \mathcal{C}_\alpha (X,M) ) \geq 1 - \alpha.
\label{eq:na_mv}
\tag{MV}
\end{equation}

However, marginal validity \eqref{eq:na_mv} is not enough from an equity stand point and might result in discriminating between observations depending on their missing pattern \citep{pmlr-v202-zaffran23a}. Indeed, missing values create heteroskedasticity in the resulting distribution of $Y$ given $X_{\obs(M)}$. Therefore, they argue that when facing missing values one should aim at \textit{mask-conditional-validity} (MCV) even in the MCAR setting, i.e.:

\begin{equation}
\mathds{P}( Y \in \mathcal{C}_\alpha (X,M) | M ) \geq 1 - \alpha.
\label{eq:na_mcv}
\tag{MCV}
\end{equation}

\Cref{eq:na_mcv} is similar in spirit and motivation than \Cref{eq:pi_cv} but on a discrete space. Hence the impossibility results on $X$-conditional coverage do not hold anymore. However, \eqref{eq:na_mcv} is a challenging goal as it requires the coverage to be controlled on \emph{any} mask $m \in \mathcal{M}$, even those rarely observed at train time. 

In the sequel, to highlight the underlying dependencies and randomness, any estimator of $\mathcal{C}_{\alpha}(\cdot, \cdot)$ fitted on a data set $\left( X^{(k)}, M^{(k)}, Y^{(k)} \right)_{k=1}^{n}$ is denoted as $\widehat{C}_{n,\alpha} (\cdot, \cdot)$. We call a \emph{method} a function that, for any $\alpha\in[0,1]$, takes as input $\left( X^{(k)}, M^{(k)}, Y^{(k)} \right)_{k=1}^{n}$ and outputs an estimator $\widehat{C}_{n,\alpha} (\cdot, \cdot)$. \Cref{tab:notations} reminds the notations.

\begin{table}[!b]
    \centering
\begin{center}
\resizebox{\linewidth}{!}{
\begin{tabular}{lll}
\toprule
Name & Definition & Comment\\
\midrule
\#$A$ & Cardinal of the set $A$ & \\
$\mathdutchcal{P}(A)$ & Power set of $A$ & \\
\multicolumn{3}{@{}c@{}}{{\dashrule}} \\
$d$ & Number of features & \\
$\mathcal{X}$ & Features space & $\mathcal{X} \subseteq \mathds{R}^d$ \\
$\mathcal{Y}$ & Label space & \\
\multicolumn{3}{@{}c@{}}{{\dashrule}} \\
$\mathcal{M}$ & Missing values pattern space & $\mathcal{M} \subseteq \{0,1\}^d$ \\
$\texttt{NA}$ & Not Available (or missing value) & \\
$\obs(m)$ & Indices of the observed components for mask $m\in\mathcal{M}$ & $\obs(m)\in\mathds{N}^{|\obs(m)|}$\\
& (there are $|\obs(m)| := \sum_{i=1}^d m_i$ of them) & \\
$\mis(m)$ & Indices of the missing components for mask $m\in\mathcal{M}$ & $\mis(m)\in\mathds{N}^{|\mis(m)|}$\\
& (there are $|\mis(m)| := d - \sum_{i=1}^d m_i$ of them) & \\
$\mathcal{P}$ & Set of distributions on $\left( \mathcal{X},\mathcal{M},\mathcal{Y} \right)$ & \\
$\mathcal{P}\mar$ & Set of distributions on $\left( \mathcal{X},\mathcal{M},\mathcal{Y} \right)$ such that $X$ is Missing At Random & \\
$\mathcal{P}\mcar$ & Set of distributions on $\left( \mathcal{X},\mathcal{M},\mathcal{Y} \right)$ such that $X$ is Missing Completely At Random & \\
$\mathcal{P}\ymx$ & Set of distributions on $\left( \mathcal{X},\mathcal{M},\mathcal{Y} \right)$ such that \yindmx & \\
\multicolumn{3}{@{}c@{}}{{\dashrule}} \\
$n$ & Number of training observations & $n+1$ is the test index \\
$P\iid$ & Product distribution of $P$ with itself $n+1$ times & $P \in \p$ \\
& (i.e., distribution of $\left( X^{(k)}, M^{(k)}, Y^{(k)} \right)_{k=1}^{n+1}$ drawn i.i.d. with marginal $P$) & \\
$\mathcal{Q}\iid$ & $\left\{Q\iid, Q \in \mathcal Q \right\}$ & $\mathcal{Q} \subseteq \p$ \\
$P\exch$ & Exchangeable distribution of $n+1$ random variables of distribution $P$ & $P \in \p$ \\
$\mathcal{Q}\exch$ & $\left\{Q\exch, Q \in \mathcal Q \right\}$ & $\mathcal{Q} \subseteq \p$ \\
\multicolumn{3}{@{}c@{}}{{\dashrule}} \\
$\alpha$ & Miscoverage rate & $\alpha \in [0,1]$ \\
$\mathcal{C}_\alpha\left(\cdot,\cdot\right)$ & Predictive set function aiming at $1-\alpha$ coverage & $\mathcal{C}_\alpha : \mathcal{X} \times \mathcal{M} \longrightarrow \mathdutchcal{P}\left(\mathcal{Y}\right)$  \\
$\widehat{C}_{n,\alpha}\left(\cdot,\cdot\right)$ & Estimator for $\mathcal{C}_\alpha\left(\cdot,\cdot\right)$ based on $\left( X^{(k)}, M^{(k)}, Y^{(k)} \right)_{k=1}^n$, through a \emph{method} & \\
MV & Marginal validity & \\
MCV & Mask-conditional-validity & \\
\bottomrule
\end{tabular}
}
\end{center}
    \caption{Summary of notations.}
    \label{tab:notations}
\end{table}

\subsection{Literature's background}

Very recent papers have investigated uncertainty quantification with missing values. Both \citet{gui2023conformalized} and \citet{shao2023distributionfree} consider the question of distribution-free uncertainty quantification for matrix completion tasks. While the former considers building predictive sets for all of the missing entries, the latter focuses on what they call \textit{matrix prediction} where predictive sets are required only for the last ``individual'' of the data set.
\citet{pmlr-v206-seedat23a} addresses the related but distinct problem of missing values in the responses, which is generally known as the semi-supervised setting. They introduce a self-supervised learning approach for incorporating unlabeled training data into the conformalization process. In the same framework, \citet{lee2024simultaneous} leverages tools from the causal inference literature to achieve stronger guarantees such as feature and outcome's missingness conditional coverage, which are, in spirit, close to our focus (yet in a different framework). 

Closer to our work of predictive uncertainty quantification under missing covariates is \citet{pmlr-v202-zaffran23a}, as they focus on the same setting (i.e., to predict $Y$ given $X$, where $X$ might suffer from missing values both at train time and test time). After showing that \textit{impute-then-predict+conformalization} is marginally valid \eqref{eq:na_mv} for any missing mechanism and imputation, they introduce the harder goal of \textit{mask-conditional-validity} \eqref{eq:na_mcv}, motivated by an illustration on the heteroskedasticity generated by the missing values on a Gaussian Linear Model. They present a novel methodology, \textit{Missing Data Augmentation} (MDA), which combines with conformal prediction \citep[CP,][]{vovk_algorithmic_2005} in order to produce MCV sets. CP-MDA includes two algorithms, \texttt{CP-MDA-Exact} and \texttt{CP-MDA-Nested}, the former requiring a strict subsampling step on the training set, while the latter allows to keep the whole training data, which in turns induce large predictive sets. \citet{pmlr-v202-zaffran23a} provide theoretical guarantees on the MCV of \texttt{CP-MDA-Exact} and on a technical minor modification of \texttt{CP-MDA-Nested}, under MCAR and \yindmx assumptions. 

\subsection{Overview of our contributions (and outline)} 

In short, our objective is to tackle the following question: \textbf{when and how is it possible to achieve MCV?} Notably, we are interested in understanding $i)$ what  assumptions are  necessary to ensure MCV, $ii)$ how to design a tailored methodology,  and $iii)$ what happens when these assumptions are not satisfied.

We start by proving hardness results on distribution-free MCV in \Cref{sec:hardness}. Notably, for a MCV method outputting $\widehat{C}_{n,\alpha} (\cdot, \cdot)$ with no assumption except from having access to $n$ i.i.d. draws, we prove that the predictive interval is most often uninformative: for any $m\in\mathcal{M}$ the probability that, say, $\widehat{C}_{n,\alpha} (\cdot, m) \equiv \mathcal{Y}$ is higher than $1-\alpha-\Delta_{m,n}$, where $\Delta_{m,n}$ gets negligible when the mask $m$ is nearly not observed in a sample of size $n$. In other words, a method that is distribution-free MCV will output uninformative intervals on any mask that does not represent a high enough proportion of the training data. We go further and show that the exact same trade-off still holds for a method that is MCV only on distributions that are MAR, or MCAR, or similarly on distributions such that \yindmx, i.e., restricting an algorithm to be MCV only when \yindmx would still output uninformative sets on rarely observed masks: it is necessary to add another assumption on the dependence between $X$ and $M$ (such as MCAR) to allow for informative MCV on any mask. Importantly, this theoretical analysis brings new insights on the achievability of $X$-group-conditional validity, beyond MCV\footnote{Precisely, we provide a rigorous quantification of Vladmir Vovk's comment on $X$-conditional validity: ``of course, the condition that $x$ be a non-atom is essential: if $P_X({x}) > 0$, an inductive conformal predictor that ignores all examples with objects different from $x$ will have $1-\alpha$ object conditional validity and can give narrow predictions if the training set is big enough to contain many examples with $x$ as their object'' rephrased from \citet{vovk_conditional_2012} to match our notations.}.

This motivates the investigation of the quantile regression and missing values interplay in \Cref{sec:glm}, so as to provide guidelines for practical design of probabilistic prediction with missing covariates. This interplay is hard to characterize in general but becomes explicit under missingness~assumptions', or a multivariate Gaussian setting or linear model. Our key findings are ($i$---\Cref{sec:glm_var}) that the uncertainty often increases with more missing values: we analyze different mathematical statements of this main idea (in terms of conditional variance, inter-quantile distance, or predictive interval length) and evaluate theoretically under which distributional assumptions they are satisfied, in particular under MCAR and \yindmx, motivating our methodological design of \Cref{sec:nested}; ($ii$---\Cref{sec:glm_mask}) if the goal is to estimate quantiles, it is essential to be able to retrieve the mask to construct intervals, in contrast to classic mean regression where the mask is not as crucial.

In \Cref{sec:nested}, we propose a unified framework, \newnested, building predictive sets with missing covariates for both regression and classification tasks. Precisely, it bridges the gap between \masksub and \mask introduced in \citet{pmlr-v202-zaffran23a}, by encapsulating these two algorithms as well as any in between with more flexible subsampling schemes, allowing to fix the trade-off between coverage variance (\masksub) and overly conservative predictive sets (\mask). Leveraging the similarity between \newnested and leave-one-out conformal approaches \citep{Vovk2013,barber2021jackknife,gupta} we provide theory on the marginal validity of \newnested without subsampling, which holds regardless of the missingness distribution (without any assumption on the dependence between $M$ and $X$, but also without any assumption on the relationship between $M$ and $Y$ conditionally on $X$). Moreover, we also establish that \newnested is MCV for a wide range of subsampling schemes under MCAR and \yindmx.

Finally, in \Cref{sec:exp} we conduct synthetic experiments beyond the MCAR and \yindmx assumptions. Precisely, we generate missingness that is either MAR (5 different settings), MNAR (11 different settings) or such that \ynotindmx. \newnested empirically maintains MCV under MAR and MNAR missingness. When \yindmx is not satisfied, \newnested does not ensure MCV experimentally, unless the imputation is accurate enough. Overall, these numerical experiments showcase the robustness of \newnested beyond its theoretical scope of validity.

In the following \Cref{tab:sum_res}, we summarize and organize our main contributions. We report the theoretical results on the possibility to achieve informative MCV, either positive results (\cmarkuq) or negative hardness results (\xmarkuq), along with our more general result on marginal validity. Moreover, we locate experimental results by indicating the figures that align with particular setups. In particular, we distinguish two kinds of experiments: \textit{Numerical extension} of results beyond the conditions where the theory is applicable, which demonstrates promising outcomes in more challenging configurations, and \textit{Numerical confirmation} of results anticipated by theoretical analysis, that is the outcomes of the experiment either $i)$ were already expected based on the theory or $ii)$ confirm that the theoretical assumptions can not be relaxed to the corresponding distributional setting.

\begin{table}[!h]
    \begin{center}
   \resizebox{\linewidth}{!}{ \begin{tabular}{c|ccc|l}
    \toprule
    & $\p\mcar$ & $\p\mar$ & $\p\mnar = \p$ & \\ 
    \midrule
    \multirow{6}{*}{$\p\ymx$} & \newnested: \cmarkuq & \multirow{2}{*}{\textcolor{blindpurple}{?}} & Hardness: \xmarkuq & \multirow{2}{*}{\emph{Theory}} \\
    & \emph{\Cref{thm:mcv-nested}} & & \emph{\Cref{prop:tradeoff-mcv-yindm}} & \\
    & \multicolumn{4}{@{}c@{}}{{\dashrule}} \\
    & & \Cref{fig:mar_no_dep,fig:mar_dep} & \Cref{fig:mnar_sm_no_dep,fig:mnar_sm_dep,fig:mnar_q_no_dep,fig:mnar_q_dep} & \emph{Num. extension} \\
    & \multicolumn{4}{@{}c@{}}{{\dashrule}} \\
    & \Cref{fig:mcar} & & \Cref{rem:nested_not_broader_mcv} & \emph{Num. confirmation} \\
    \midrule
    \multirow{7}{*}{$\p$}  & Hardness: \xmarkuq & Hardness: \xmarkuq & Hardness: \xmarkuq & \multirow{4}{*}{\emph{Theory}} \\
    & \emph{\Cref{prop:tradeoff-mcv-mar-mcar}} & \emph{\Cref{prop:tradeoff-mcv-mar-mcar}} & \emph{\Cref{prop:tradeoff-mcv}} & \\
    & & & \textcolor{gray}{\newnested: MV} & \\
    & & & \textcolor{gray}{\emph{\Cref{thm:mv-nested}}} & \\
    & \multicolumn{4}{@{}c@{}}{{\dashrule}} \\
    & \Cref{fig:ydepmx_dep} & & & \emph{Num. extension} \\
    & \multicolumn{4}{@{}c@{}}{{\dashrule}} \\
    & \Cref{fig:ydepmx_no_dep} & \Cref{rem:nested_not_broader_mcv} & \Cref{rem:nested_not_broader_mcv} & \emph{Num. confirmation} \\
    \bottomrule
    \end{tabular}}
    \caption{Summary of the main theoretical results.}\label{tab:sum_res}
    \vspace{-1cm}
    \end{center}
\end{table}

\section{When is Mask-Conditional-Validity (MCV) a too lofty goal?}
\label{sec:hardness}

We will show in this section that purely distribution-free MCV guarantees are often uninformative. As a consequence, we will have to impose some non-parametric assumption on the underlying data distribution. We thus have to define the concept of MCV with respect to a class of distributions $\mathcal{D}$ (MCV-$\mathcal{D}$), and to study the sets $\mathcal{D}$ that allow for informative MCV-$\mathcal{D}$.

\begin{definition}[MCV-$\mathcal{D}$]
\label[definition]{def:mcv}
Let $\mathcal{D}$ be a set of distributions on $\left(\mathcal{X}\times\mathcal{M}\times\mathcal{Y}\right)^{n+1}$. A method outputting $\widehat{C}_{n,\alpha} (\cdot, \cdot)$ based on $\left( X^{(k)}, M^{(k)}, Y^{(k)} \right)_{k=1}^n$ for any $\alpha\in[0,1]$ is MCV-$\mathcal{D}$ if \emph{for any distribution $D \in \mathcal{D}$} and any $\alpha\in[0,1]$, we have:
\begin{equation*}
\mathds{P}_{D}\left(  Y ^{(n+1)} \in \widehat{C}_{n,\alpha} \left( X^{(n+1)},M^{(n+1)} \right) | M^{(n+1)} \right) \overset{a.s.}{\geq} 1-\alpha,
\end{equation*}
i.e., for any $m \in \mathcal{M}$ such that $\mathds{P}\left( M^{(n+1)} = m\right) > 0$, it holds:
\begin{equation*}
\mathds{P}_{D}\left(  Y ^{(n+1)} \in \widehat{C}_{n,\alpha} \left( X^{(n+1)},m \right) | M^{(n+1)} = m \right) \geq 1-\alpha.
\end{equation*}

\end{definition}

If $\mathcal{D} = \p\exch$ we recover the holy grail of being MCV for any exchangeable distribution, i.e., the most distribution-free result we could target. If $\mathcal{D}$ is not specified thereon, it will refer to MCV-$\p\exch$. An easier goal is to aim at MCV-$\p\iid$, that is MCV on i.i.d. distributions.

\begin{remark}
For any sets $\mathcal{D} \subseteq \mathcal{D}'$, a method that is MCV-$\mathcal{D'}$ is also MCV-$\mathcal{D}$, i.e., MCV-$\mathcal{D'}$ $\Rightarrow$ MCV-$\mathcal{D}$.
\label[remark]{rem:inc_mcv}
\end{remark}
\vspace{-0.5cm}
A naive idea to ensure MCV is to split the data set into $\#\mathcal{M}$ sub data sets, one for each mask, and run any marginally valid method on each of the data sets independently. However, as $\#\mathcal{M}$ grows exponentially in the dimension, this is not practical as it will generate small (or even empty) data sets for some masks. In particular, as long as $\mathds{P}(M = m)$ is low with respect to $n$ for a given $m \in \mathcal{M}$, estimation on the sub data set is hard, and even finite sample method such as conformal prediction \citep{vovk_algorithmic_2005} will suffer from important statistical variability or uninformativeness. Therefore, in practice, we usually need to go beyond this solution if we aim to achieve MCV for any mask, even those rarely observed at train times. Nevertheless, the task appears challenging without restricting the link between $M$ and $(X,Y)$, precisely due to the lack of information available in the data set. The question we tackle in this section is the following: \textbf{is it possible to achieve \emph{distribution-free} MCV in an informative way for any mask in $\mathcal{M}$, even those occurring with low probability?} 

\paragraph{Link with group conditional coverage.} More generally, the question is that of finding on which subspace of the features it is possible to obtain meaningful conditional guarantees. Thus, the results demonstrated in this section give some answers to the broader question of when is \emph{group-feature-conditional validity} achievable (a relaxation of \Cref{eq:pi_cv}), which has attracted considerable interest lately \citep[see e.g.,][to name just a few]{romano_malice_2020,barber_limits_2021,Guan2022,jung2023batch,gibbs2023conformal}.
\begin{mybox}
\begin{center}
\textbf{Our hardness results shed light on $X$-group-conditional coverage.}
\end{center}
In the rest of this section, $M$ can be interpreted as any additional random variable, that may (or may not) depend on $X$, on which we aim at achieving distribution-free conditional validity. For example, $\mathcal{M}$ could represent subgroups of $\mathcal{X}$, eventually overlapping. Specifically, assume $\mathcal{M} = \{0,1\}^{|\mathcal{G}|}$ for $\mathcal{G}$ a collection of groups on $\mathcal{X}$, then $M$ is an indicator vector on whether $X$ belongs to each group of $\mathcal{G}$ or not. 

A particular case of this generalization is $\mathcal{G} = \left\{ \{X \in \mathcal{X}: X_j \text{ is missing} \}_{j=1}^d \right\}$, recovering our missing covariates setting with $M$ the missing pattern. While our discussion in this section is written towards the missing covariates setting, the interested reader might replace ``missing pattern'' or ``mask'' by ``groups'' whenever it makes sense\footnotemark, and the corresponding result will hold without further restriction or assumptions on the way the groups are designed.
\end{mybox}\footnotetext{The only result that does not extend is \Cref{prop:tradeoff-mcv-mar-mcar} for $\p\mar$, as by construction it relies on the missingness structure.}
\subsection{Fully distribution-free result}

Our first result, \Cref{prop:tradeoff-mcv}, confirms the previous intuition: any MCV-$\p\iid$ method typically does output the whole set $\mathcal{Y}$ with high probability for any distribution, on low probability masks.

\vspace{0.25cm}
\noindent\fbox{%
    \parbox{\textwidth}{%
    \vspace{-0.25cm}
\begin{theorem}[Trade-off set size and mask probability]

Suppose that a method outputting $\widehat{C}_{n,\alpha}$ is MCV-$\p\iid$. Then \emph{for any $P \in \mathcal{P}$} and any $m \in \mathcal{M}$ such that $P_M(m) > 0$, it holds:

\begin{equation*}
\left\{\begin{aligned}
& \text{if }\mathcal{Y}\subseteq \mathds{R}  \text{ (regression)}: \mathds{P}_{P^{\otimes(n+1)}}\left( \Lambda\left(\widehat{C}_{n,\alpha}(X,m)\right) = \infty \right) \geq \textcolor{blindblue}{1 - \alpha} - \textcolor{mLightBrown}{\Delta_{m,n}},\\
& \text{if }\mathcal{Y}\subseteq \mathds{N} \text{ (classification)}: \forall y\in\mathcal{Y}, \mathds{P}_{P^{\otimes(n+1)}}\left( y \in \widehat{C}_{n,\alpha}(X,m) \right) \geq \textcolor{blindblue}{1 - \alpha} - \textcolor{mLightBrown}{\Delta_{m,n}},
\end{aligned}
\right.
\end{equation*}
with $\textcolor{mLightBrown}{\Delta_{m,n}} := \sqrt{2 \left( 1 - \left(1 - \frac{P_M(m)^2 }{2} \right)^{n+1} \right) }$.
\label[theorem]{prop:tradeoff-mcv}
\end{theorem}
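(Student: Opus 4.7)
The plan is to establish, for every candidate label $y\in\mathcal{Y}$, a pointwise bound of the form $\mathds{P}_{P\iid}\bigl(y\in\widehat{C}_{n,\alpha}(X,m)\bigr)\geq 1-\alpha-\Delta_{m,n}$ by comparing the distribution $P\iid$ to a carefully perturbed alternative $(P^{(y)})\iid\in\p\iid$ for which MCV forces coverage of $y$. The classification statement is then the announced bound read pointwise in $y$, while the regression statement will follow by aggregating the bound across $y=k\in\mathds{R}$ to rule out bounded predictive sets.

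For the perturbation, I define $P^{(y)}\in\p$ so that (i) the $(X,M)$-marginal coincides with that of $P$, (ii) the conditional $Y\mid X,M=m'$ coincides with that of $P$ for every $m'\neq m$, and (iii) the conditional $Y\mid X,M=m$ is modified to emphasise the point $y$ via a mixture with the original conditional. Properties (i)--(ii) make $P^{(y)}$ close to $P$ on everything that does not involve the mask $m$, while (iii) allows MCV-$\p\iid$ applied at $P^{(y)}$ to translate into a statement of the form $\mathds{P}_{(P^{(y)})\iid}\bigl(y\in\widehat{C}_{n,\alpha}(X^{(n+1)},m)\mid M^{(n+1)}{=}m\bigr)\geq 1-\alpha$. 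The central quantitative step is to control $d_{\mathrm{TV}}\bigl(P\iid,(P^{(y)})\iid\bigr)\leq \Delta_{m,n}$: thanks to (i)--(ii), the Hellinger affinity decomposes as
\[
A(P,P^{(y)}) \;=\; (1-P_M(m)) \;+\; P_M(m)\cdot\mathds{E}_{X\mid M=m}\!\bigl[A\bigl(P(\cdot\mid X,M{=}m),P^{(y)}(\cdot\mid X,M{=}m)\bigr)\bigr],
\]
and a judicious choice of the mixing weight in (iii) (of order $P_M(m)$ rather than a crude $\delta_y$-replacement) yields $A(P,P^{(y)})\geq 1-P_M(m)^2/2$. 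Tensorization of the affinity across the $n+1$ iid samples together with the inequality $d_{\mathrm{TV}}\leq H$ then produces exactly $\sqrt{2\bigl(1-(1-P_M(m)^2/2)^{n+1}\bigr)}=\Delta_{m,n}$.

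Combining the MCV guarantee at $P^{(y)}$ with this TV control gives $\mathds{P}_{P\iid}(y\in\widehat{C}_{n,\alpha}(X,m))\geq 1-\alpha-\Delta_{m,n}$ for every $y$, which is the classification statement. For regression, I would apply the pointwise bound to a countable family $k\in\mathds{R}$ evaluated against the \emph{same} realization of $\widehat{C}_{n,\alpha}(X,m)$, and then use the monotone structure of the event $\{k\in\widehat{C}_{n,\alpha}(X,m)\}$ in the upper and lower envelopes of the set (together with reverse Fatou) to upgrade ``coverage of arbitrarily large $|k|$ with probability at least $1-\alpha-\Delta_{m,n}$'' into $\mathds{P}(\Lambda(\widehat{C}_{n,\alpha}(X,m))=\infty)\geq 1-\alpha-\Delta_{m,n}$. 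The main obstacle is the sharpness of the Hellinger computation: the naive perturbation $P^{(y)}(\cdot\mid X,M{=}m)=\delta_y$ forces MCV to concentrate at $y$ but only delivers $H^2(P,P^{(y)})=2P_M(m)$, hence the weaker factor $P_M(m)$ in the exponent. The mixture construction trades a mild degradation of the MCV concentration against a quadratically smaller Hellinger cost, and calibrating the mixing weight to make this trade-off yield the announced $P_M(m)^2/2$ is the delicate step of the argument.
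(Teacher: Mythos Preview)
Your overall architecture---perturb $P$ on $\{M=m\}$, invoke MCV under the perturbation, and transfer back via a total-variation bound obtained through Hellinger tensorisation---is exactly the paper's. But two of your concrete steps do not go through.

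\textbf{The mixture trade-off collapses.} With $P^{(y)}(\cdot\mid X,M{=}m)=(1-\eta)\,P(\cdot\mid X,M{=}m)+\eta\,\delta_y$, MCV at $(P^{(y)})\iid$ controls the coverage of $Y^{(n+1)}$, and $Y^{(n+1)}\mid X^{(n+1)},M^{(n+1)}{=}m$ is itself the $(1-\eta,\eta)$-mixture. Splitting the mixture and bounding the $P$-component by $1$ yields only
\[
\mathds{P}_{(P^{(y)})\iid}\!\left(y\in\widehat{C}_{n,\alpha}\!\left(X^{(n+1)},m\right)\,\middle|\,M^{(n+1)}{=}m\right)\;\ge\;1-\frac{\alpha}{\eta},
\]
not $1-\alpha$. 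Taking $\eta$ of order $\rho:=P_M(m)$ does give the affinity bound $A(P,P^{(y)})\ge 1-O(\rho^2)$ you announce, but the price on the MCV side is $\alpha\mapsto\alpha/\rho$, which is not ``mild'' for small $\rho$ and wipes out the Hellinger gain; optimising over $\eta$ does not recover the stated $\Delta_{m,n}$ either. The paper instead uses the \emph{full} replacement $\eta=1$ (a Dirac $\delta_y$ in classification, a uniform on $[-D,D]$ in regression), so that MCV delivers $1-\alpha$ exactly, and then bounds the product total variation directly from $\mathrm{TV}(P,Q)\le\rho$ via a tensorisation lemma (the paper's Lemma~A.3).

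\textbf{The regression aggregation fails.} Even granting $\mathds{P}_{P\iid}\!\left(k\in\widehat{C}_{n,\alpha}(X,m)\right)\ge c$ for every $k\in\mathds{R}$, reverse Fatou only gives $\mathds{P}\!\left(\{k\in\widehat{C}\}\text{ i.o.}\right)\ge c$, which says nothing about $\Lambda(\widehat{C})$: the set could be, e.g., a random countable set of Lebesgue measure zero. There is no monotone structure in $k\mapsto\{k\in\widehat{C}\}$ for a generic set-valued $\widehat{C}$, so the ``upper and lower envelopes'' argument you sketch has nothing to grip on. The paper avoids pointwise aggregation entirely: replacing $Y\mid M{=}m$ by $\mathcal{U}([-D,D])$ makes MCV force
\[
\mathds{E}_{Q\iid}\!\left[\frac{\Lambda\!\left(\widehat{C}_{n,\alpha}(X^{(n+1)},m)\cap[-D,D]\right)}{2D}\,\middle|\,M^{(n+1)}{=}m\right]\;\ge\;1-\alpha
\]
directly; a Markov-type bound on this $[0,1]$-valued variable then gives $\mathds{P}\!\left(\Lambda(\widehat{C})\ge 2\sqrt{D}\right)\ge 1-\alpha/(1-1/\sqrt{D})-\Delta_{m,n}$, and $D\to\infty$ concludes.
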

\vspace{-0.25cm}
 }%
}
\vspace{0.25cm}

Since for any $x > 0$ and $n \in \mathds{N}^*$, it holds $1-\left( 1 - x\right)^n < nx$, \Cref{prop:tradeoff-mcv} implies that:
\begin{equation*}
\left\{\begin{aligned}
& \text{if }\mathcal{Y}\subseteq \mathds{R}  \text{ (regression)}: \mathds{P}_{P^{\otimes(n+1)}}\left( \Lambda\left(\widehat{C}_{n,\alpha}(X,m)\right) = \infty \right) \geq \textcolor{blindblue}{1 - \alpha} - \textcolor{mLightBrown}{P_M(m)\sqrt{(n+1)}},\\
& \text{if }\mathcal{Y}\subseteq \mathds{N} \text{ (classification)}: \forall y\in\mathcal{Y}, \mathds{P}_{P^{\otimes(n+1)}}\left( y \in \widehat{C}_{n,\alpha}(X,m) \right) \geq \textcolor{blindblue}{1 - \alpha} - \textcolor{mLightBrown}{P_M(m)\sqrt{(n+1)}}.
\end{aligned}
\right.
\end{equation*}

\Cref{prop:tradeoff-mcv} provides a lower bound on the probability that the predictive set is uninformative for any $m \in \mathcal{M}$ (i.e., typically $\Lambda(\widehat{C}_{n,\alpha}(\cdot,m) )=\infty$ or $\#\widehat{C}_{n,\alpha}(\cdot,m) \geq \#\mathcal{Y}(1-\alpha)$).

\begin{remark}[MCV-$\p\iid$ implies uninformative sets even on simple distributions]
    Crucially, \\this lower bound holds for \emph{any} distribution in $\p$. This implies that the hardness result applies also to smooth, nonpathological, distributions. Particularly, it means that any method that is fully distribution-free MCV (i.e., MCV-$\p\iid$) will be subject to the lower bound even when applied to data whose actual distribution is as simple as possible (e.g., MCAR and \yindmx).  
\label[remark]{rem:hardness_uniform}
\end{remark}
\begin{remark}[Informative sets implies the method is not MCV-$\p\iid$]
    Conversely, for a given method constructing predictive sets $\widehat{C}_{n,\alpha}$, assume that there exists a distribution $P\in\p$ and a mask $m$ such that $P_M(m) > 0$ and $\Delta_{m,n} < \frac{1-\alpha}{2}$ and under which $\widehat{C}_{n,\alpha}$ is consistently of finite measure for different random draws from $P\iid$. Then, this method is not MCV-$\p\iid$, as otherwise under $P\iid$ the predictive set would be of infinite measure with probability at least 0.25 for $\alpha \leq 0.5$ according to \Cref{prop:tradeoff-mcv} (since $1-\alpha-\Delta_{m,n} \geq \frac{1-\alpha}{2} \geq 0.25$).
\label[remark]{rem:exp_finite_nohardness}
\end{remark}

\paragraph{Interpretation of the lower bound.} Let us now decompose the lower bound. The first term, \textcolor{blindblue}{$1-\alpha$}, is an ``irreducible term''. Indeed, the estimator outputting $\mathcal{Y}$ with probability $1-\alpha$ and the empty set $\emptyset$ with probability $\alpha$ (where the probability corresponds to an exogenous Bernoulli random variable) is valid conditionally on everything, thus a fortiori on $M$. Hence, the lower bound has to be smaller than $1-\alpha$ as the set of MCV estimators includes this naive one.

For a given distribution $P$, the second term, \textcolor{mLightBrown}{$\Delta_{m,n}$}, becomes negligible on any $m \in \mathcal{M}$ such that $P_M(m)$ is small with respect to $n$, making the lower bound be nearly $1-\alpha$. This reflects the intuition that it is impossible to achieve informative conditional coverage when conditioning on events whose effective sample size is limited. In other words, the smaller the probability of the event occurring, the larger the training size must be to compensate and make ``sure'' that enough observations are drawn from that event. 

Note that as $\p\iid \subset \p\exch$, any MCV-$\p\exch$ estimator is MCV-$\p\iid$ by \Cref{rem:inc_mcv}. Thus, the conclusion of \Cref{prop:tradeoff-mcv} extends to any MCV-$\p\exch$ estimator, on any $P\iid$ with $P\in\p$.\footnote{The same is true for the subsequent \Cref{prop:tradeoff-mcv-mar-mcar} and \Cref{prop:tradeoff-mcv-yindm}.}

\paragraph{Proof sketch.} For any given distribution $P \in \p$, and a given mask $m\in\mathcal{M}$ such that $P_M(m) > 0$, the idea of the proof is the following. Build another distribution $Q \in \p$, which equals $P$ whenever $M \neq m$, and that ``admits'' an arbitrary spread on $Y$ when $M = m$ (in short, $Q$ is meant to be pathological yet close to $P$). By doing so, two statements can be made. First, $Q\iid$ belongs to $\p\iid$, therefore, as $\widehat{C}_{n,\alpha}$ is MCV-$\p\iid$, under $Q\iid$ the probability of $\widehat{C}_{n,\alpha}$ being uninformative is $1-\alpha$  since $Y$ can typically be anywhere. Second, as $P$ and $Q$ are the same everywhere except on $\{M=m\}$, the total variation distance between them is smaller than $P_M(m)$. This leads to the total variation distance between $P\iid$ and $Q\iid$ being smaller than $\Delta_{m,n}$. Combining these two observations, it finally leads to the probability of $\widehat{C}_{n,\alpha}$ being uninformative under $P\iid$ which is greater than $1-\alpha-\Delta_{m,n}$. The complete proof is given in \Cref{app:mcv_hardness}.

A familiar reader will note the similarity with the proofs given by \citet{lei_distribution-free_2014,vovk_conditional_2012}. The difference is that, on the one hand, \citet{vovk_conditional_2012} proof leverages an ``reductio ad absurdum'' that does not allow to explicitly build the set on which $P \neq Q$. On the other hand, \citet{lei_distribution-free_2014} is constructive. Nonetheless, it relies on a crucial step that implicitly assumes that conditional-validity holds conditionally on the $n$ data points, leading to an inexact statement: the lower bound obtained becomes 1. As we discussed, as well as \citet{vovk_conditional_2012}, the lower bound can not be bigger than $1-\alpha$. We provide an alternate proof to this well-known $X$-conditional impossibility result that is constructive. Another improvement is that our expression of $\Delta_{m,n}$ comes from a tighter inequality than the ones used in \citet{lei_distribution-free_2014} and \citet{vovk_conditional_2012}. Indeed, for the original impossibility result, the lower bound does not really matter as we then take its limit when the ball around $x_0$ shrinks, which is 0. But in our case, this ball is fixed to the event $\{M=m\}$.  

\subsection{Restricting the class of admissible missingness distributions}

Interestingly, the proof of \Cref{prop:tradeoff-mcv} adapts to MCV-$\p\mar\iid$ or MCV-~$\p\mcar\iid$.

\begin{proposition}[Trade-off set size and mask probability on $\p\mar$ or $\p\mcar$]
Let $\mathcal{Q}$ be either $\p\mar$ or $\p\mcar$.
Suppose than an estimator $\widehat{C}_{n,\alpha}$ is MCV-$\mathcal{Q}\iid$ at the level $\alpha$. Then \emph{for any $Q \in \mathcal{Q}$} and any $m \in \mathcal{M}$ such that $Q_M(m) > 0$, it holds:

\begin{equation*}
\left\{\begin{aligned}
& \text{if }\mathcal{Y}\subseteq \mathds{R}  \text{ (regression)}: \mathds{P}_{Q\iid}\left( \Lambda\left(\widehat{C}_{n,\alpha}(X,m)\right) = \infty \right) \geq 1 - \alpha - \Delta_{m,n},\\
& \text{if }\mathcal{Y}\subseteq \mathds{N} \text{ (classification)}: \forall y\in\mathcal{Y}, \mathds{P}_{Q\iid}\left( y \in \widehat{C}_{n,\alpha}(X,m) \right) \geq 1 - \alpha - \Delta_{m,n},
\end{aligned}
\right.
\end{equation*}
with $\Delta_{m,n}$ given in \Cref{prop:tradeoff-mcv}.
\label[proposition]{prop:tradeoff-mcv-mar-mcar}
\end{proposition}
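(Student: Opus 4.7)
The plan is to adapt the construction used in the proof of \Cref{prop:tradeoff-mcv} and check that the constructed pathological distribution can be chosen inside $\mathcal{Q}$. Fix $Q\in\mathcal{Q}$ and a mask $m\in\mathcal{M}$ with $Q_M(m)>0$. Factorize
\[
Q(X,M,Y) \;=\; Q(M)\, Q(X\mid M)\, Q(Y\mid X,M),
\]
and define $Q'$ by keeping $Q(M)$, $Q(X\mid M)$, and $Q(Y\mid X,M=m')$ for every $m'\neq m$ unchanged, while replacing $Q(Y\mid X,M=m)$ by an arbitrarily spread conditional law $\widetilde{Q}_Y$ (e.g.~a centered Gaussian with very large variance in the regression case, or a near-uniform distribution supported on a very large finite subset of $\mathcal{Y}$ in the classification case). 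By construction, $Q'$ and $Q$ differ only on the event $\{M=m\}$.

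First I would verify that $Q'\in\mathcal{Q}$. Because the MCAR and MAR conditions only constrain the conditional law of $M$ given $X$ (respectively given $X_{\obs(M)}$), and the modification affects only $Y\mid X,M=m$, one has $Q'(M\mid X)=Q(M\mid X)$; hence $Q'\in\p\mcar$ whenever $Q\in\p\mcar$, and similarly for $\p\mar$. This is the only place the restriction to $\mathcal{Q}$ plays a role. Since the method is MCV-$\mathcal{Q}\iid$ by hypothesis and $(Q')\iid\in\mathcal{Q}\iid$, applying \Cref{def:mcv} to $Q'$ yields
\[
\mathds{P}_{(Q')\iid}\!\left(Y^{(n+1)}\in\widehat{C}_{n,\alpha}(X^{(n+1)},m)\,\big|\,M^{(n+1)}=m\right)\;\geq\;1-\alpha.
\]
Under $Q'$, given $M=m$ the response $Y$ is so spread out that containment in any set of finite Lebesgue measure (regression), or any strict subset of $\mathcal{Y}$ (classification), occurs with probability at most $\alpha/2$ say; taking the spread to infinity, the above inequality forces the predictive set to be uninformative in the sense of the proposition with probability at least $1-\alpha$ under $(Q')\iid$.

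To conclude, I would transfer this bound from $(Q')\iid$ to $Q\iid$ via total variation. Since $Q$ and $Q'$ coincide off $\{M=m\}$, $d_{TV}(Q,Q')\leq Q_M(m)$, and the standard tensorization inequality $d_{TV}(Q\iid,(Q')\iid)\leq \sqrt{2\bigl(1-(1-Q_M(m)^2/2)^{n+1}\bigr)}=\Delta_{m,n}$ (the same one used in \Cref{prop:tradeoff-mcv}) gives the desired lower bound under $Q\iid$. The only genuine step of this argument is the class-preservation check of paragraph two; everything else is exactly the proof of \Cref{prop:tradeoff-mcv}. The main conceptual point, and arguably the reason the proposition is interesting, is precisely that MAR and MCAR are restrictions on the $M$--$X$ relationship only, so the pathological construction on $Y\mid X,M=m$ survives the restriction, and the hardness is not alleviated by imposing either assumption.
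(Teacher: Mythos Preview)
Your approach matches the paper's: the only new ingredient beyond \Cref{prop:tradeoff-mcv} is precisely the class-preservation check you isolate, namely that the adversarial distribution modifies only $Y\mid X,M=m$ and leaves the joint law of $(X,M)$ intact, so $Q'(M\mid X)=Q(M\mid X)$ and membership in $\p\mcar$ or $\p\mar$ is inherited. The paper states exactly this (the construction ``does not make any assumption on the relationship between $X$ and $M$'') and otherwise reuses the proof of \Cref{prop:tradeoff-mcv} verbatim.

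One small refinement on the classification side: the paper places a Dirac at the fixed label $y$ (rather than a near-uniform law) when building $Q'$, which immediately gives $\mathds{P}_{(Q')\iid}(y\in\widehat{C}_{n,\alpha}(X,m))\geq 1-\alpha$ for that specific $y$. Your near-uniform choice only yields that the \emph{expected fraction} of the support contained in $\widehat{C}_{n,\alpha}$ is at least $1-\alpha$, which does not directly give the per-label bound required by the statement; switching to the Dirac fixes this with no extra cost.
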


\begin{remark}[no direct implication between results]
\Cref{prop:tradeoff-mcv-mar-mcar} for $\mathcal{Q} = \p\mar$ does not imply \Cref{prop:tradeoff-mcv-mar-mcar} for $\mathcal{Q} = \p\mcar$, nor the contrary. Indeed, on the one hand, as $\p\mcar\iid \subseteq \p\mar\iid$, any method that is MCV-$\p\mar\iid$ is MCV-$\p\mcar\iid$ (\Cref{rem:inc_mcv}). However, on the other hand, \Cref{prop:tradeoff-mcv-mar-mcar} (or \Cref{prop:tradeoff-mcv}) provides a uniform statement over $Q\in\mathcal{Q}$ (\Cref{rem:hardness_uniform}): as $\p\mcar\iid \subseteq \p\mar\iid$, the final statement holds on more distributions for $\mathcal{Q}  = \p\mar$ than for $\mathcal{Q}  = \p\mcar$. Thereofore, \Cref{prop:tradeoff-mcv-mar-mcar} for $\mathcal{Q} = \p\mar$ provides a \emph{stronger statement} over \emph{fewer methods} than \Cref{prop:tradeoff-mcv-mar-mcar} for $\mathcal{Q} = \p\mcar$.

For the same reason, \Cref{prop:tradeoff-mcv-mar-mcar} is not deduced directly from \Cref{prop:tradeoff-mcv}, but from a careful consideration of the construction in its proof: the adversarial distribution built therein does not make any assumption on the relationship between $X$ and $M$, which can be as simple as desired.
\end{remark}

In fact, the key point for the proof of \Cref{prop:tradeoff-mcv} is that the algorithm achieves MCV also on distributions under which $Y$ and $M$ can be dependent even conditionally on $X$: thus, it allows us to construct an adversarial distribution under which $Y$ is equally likely to be anywhere on the label space for a given $m \in \mathcal{M}$. 

In view of this, one could think that in order to break \Cref{prop:tradeoff-mcv}, and therefore to ensure that MCV is achievable in an informative way even on low probability masks, we have to \emph{at least} assume \yindmx (\ref{ass:y_ind_m}). However, in \Cref{prop:tradeoff-mcv-yindm}, we show that even estimators that are only MCV-$\p\ymx\iid$ suffer from the same trade-off on efficiency. 

\begin{proposition}[Trade-off set size and mask probability on $\p\ymx$]

Suppose that an estimator $\widehat{C}_{n,\alpha}$ is MCV-$\p\ymx\iid$ at the level $\alpha$. Then for any $P\in\p\ymx$ and for any $m \in \mathcal{M}$ such that $\frac{1}{\sqrt{2}} \geq P_M(m) > 0$, it holds:

\begin{equation*}
\left\{\begin{aligned}
& \text{if }\mathcal{Y}\subseteq \mathds{R}  \text{ (regression)}: \mathds{P}_{P\iid}\left( \Lambda\left(\widehat{C}_{n,\alpha}(X,m)\right) = \infty \right) \geq \textcolor{blindblue}{1 - \alpha}  - \textcolor{mLightBrown}{\Delta_{m,n}},\\
& \text{if }\mathcal{Y}\subseteq \mathds{N} \text{ (classification)}: \forall y\in\mathcal{Y}, \mathds{P}_{P\iid}\left( y \in \widehat{C}_{n,\alpha}(X,m) \right) \geq \textcolor{blindblue}{1 - \alpha}  - \textcolor{mLightBrown}{\Delta_{m,n}},
\end{aligned}
\right.
\end{equation*}
with $\textcolor{mLightBrown}{\Delta_{m,n}} := \sqrt{2 \left( 1 - \left(1 - \textcolor{mLightBrown}{2} P_M(m)^2 \right)^{n+1} \right) }$.
\label[proposition]{prop:tradeoff-mcv-yindm}
\end{proposition}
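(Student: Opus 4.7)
The plan is to mimic the adversarial-distribution template of Theorem~\ref{prop:tradeoff-mcv} (and Proposition~\ref{prop:tradeoff-mcv-mar-mcar}), confining the adversarial distribution to $\p\ymx$ so that the conclusion applies to any MCV-$\p\ymx\iid$ estimator. Fix $P\in\p\ymx$ and $m\in\mathcal{M}$ with $P_M(m)>0$. For each target $y\in\mathcal{Y}$, I would construct a distribution $Q_y\in\p\ymx$ that is simultaneously close to $P$ in squared Hellinger distance (with $H^2(P,Q_y)\leq 2P_M(m)^2$) and under which, conditionally on $M^{(n+1)}=m$, one has $Y^{(n+1)}=y$ almost surely.

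The core step, and main obstacle, is the construction of $Q_y$. Unlike in Theorem~\ref{prop:tradeoff-mcv}, freely modifying $P(\cdot\mid M=m)$ can violate the constraint $Y\ind M\mid X$ as soon as the supports of $X\mid M=m$ and $X\mid M\neq m$ overlap under $P$. To circumvent this, I would place the perturbation mass at a new point $x_y^{\star}\notin \mathrm{supp}(P_X)$: since $P_X(x_y^{\star})=0$, one can stipulate $Y\mid X=x_y^{\star}\equiv y$ without conflicting with the existing $P_{Y\mid X}$ conditional, and $Y\ind M\mid X$ is automatically preserved at $x_y^{\star}$. Attaining the tight bound $H^2(P,Q_y)\leq 2P_M(m)^2$ inside $\p\ymx$ however requires spreading the perturbation symmetrically across the mask values (placing mass at $(x_y^{\star},m',y)$ for $m'\neq m$ as well as at $(x_y^{\star},m,y)$) so as to preserve the mask marginal; this ``doubling'' is precisely what produces the factor $2$ (versus the $P_M(m)^2/2$ obtained in the unconstrained setting of Theorem~\ref{prop:tradeoff-mcv}), and it also explains the restriction $P_M(m)\leq 1/\sqrt{2}$, needed so that $2P_M(m)^2\leq 1$ and the construction yields a genuine probability measure.

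Once $Q_y$ is built, MCV on $Q_y\iid$ gives $\mathds{P}_{Q_y\iid}(y\in\widehat{C}_{n,\alpha}(x_y^{\star},m)\mid M^{(n+1)}=m)\geq 1-\alpha$; tensorization of squared Hellinger distance yields $H^2(P\iid,Q_y\iid)\leq 1-(1-2P_M(m)^2)^{n+1}$, and the inequality $TV\leq\sqrt{2}\,H$ then delivers $TV(P\iid,Q_y\iid)\leq\Delta_{m,n}$. Transferring the MCV-forced event across this TV gap establishes the classification bound $\mathds{P}_{P\iid}(y\in\widehat{C}_{n,\alpha}(X,m))\geq 1-\alpha-\Delta_{m,n}$ for every $y\in\mathcal{Y}$. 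The regression statement then follows by a Fubini-type argument: the pointwise-in-$y$ bound gives $\mathds{E}_{P\iid}[\Lambda(\widehat{C}_{n,\alpha}(X,m)\cap[-K,K])]\geq 2K(1-\alpha-\Delta_{m,n})$ for every $K$, and combining a reverse-Markov inequality on the truncation (using $\Lambda\cap[-K,K]\leq 2K$ a.s.) with $K\to\infty$ and Fatou's lemma on the event $\{\Lambda=\infty\}$ upgrades the pointwise coverage statement into $\mathds{P}_{P\iid}(\Lambda(\widehat{C}_{n,\alpha}(X,m))=\infty)\geq 1-\alpha-\Delta_{m,n}$.
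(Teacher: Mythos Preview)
Your overall template---build an adversarial $Q_y\in\p\ymx$, bound its distance to $P$, transfer the MCV-forced event---matches the paper's, but both the construction and the distance bound differ, and your claimed Hellinger bound cannot hold. The paper does not send mass outside $\mathrm{supp}(P_X)$: it fixes a set $F_0\subset\mathcal{X}$ with $P_X(F_0)=\rho:=P_M(m)$ and defines $Q$ via the coupling that keeps $(X,M,Y)$ whenever $X\notin F_0$ and $M\neq m$, and otherwise resamples $(\tilde X,\tilde M,\tilde Y)\sim\mathcal{U}(F_0)\times\delta_m\times R$ (with $R$ a Dirac at $y$, or uniform on $[-D,D]$). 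This makes $\{M=m\}=\{X\in F_0\}$ under $Q$, so $Y\ind M\mid X$ holds trivially on $F_0$ (where $M$ is constant) and is inherited from $P$ off $F_0$. The coupling gives $TV(P,Q)\leq P(\{X\in F_0\}\cup\{M=m\})\leq 2\rho$, and the $2\rho^2$ inside $\Delta_{m,n}$ then arises as $(2\rho)^2/2$ when \Cref{lem:bound-tsybakov} is applied---the factor $2$ comes from the union of two events of mass $\rho$, not from ``spreading across mask values''.

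The claim $H^2(P,Q_y)\leq 2\rho^2$ is the genuine gap. If $Q_y(Y=y\mid M=m)=1$ while $P(Y=y\mid M=m)\approx 0$, the two conditionals given $M=m$ are (essentially) mutually singular; since you say you preserve the mask marginal, the decomposition
\[
H^2_{\mathrm{norm}}(P,Q_y)=\sum_{m'}P_M(m')\,H^2_{\mathrm{norm}}\bigl(P(\cdot\mid M{=}m'),\,Q_y(\cdot\mid M{=}m')\bigr)
\]
already forces $H^2_{\mathrm{norm}}(P,Q_y)\geq\rho$. No ``symmetric spreading'' at $x_y^\star$ can reduce this contribution, so $H^2\leq 2\rho^2$ is impossible whenever $\rho<1/2$. (A secondary issue: the existence of $x_y^\star\notin\mathrm{supp}(P_X)$ is not guaranteed when $P_X$ has full support on $\mathcal{X}$; the paper's choice of $F_0$ inside the support avoids this.)
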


All in all, \Cref{prop:tradeoff-mcv-yindm} demonstrates that even the simplest relationship between $Y$ and $M$ does not allow informative predictive sets. This reveals that to ensure that it is possible to obtain informative sets even on low probability masks (or events), one has to design a method that will be conditionally valid \emph{only} on distributions with a constrained structure of dependence between $Y$ and $M$ given $X$, but also between $M$ and $X$. In particular, trying to ensure MCV-$\p\mcarymx\iid$ (where $\p\mcarymx\iid := \p\mcar\iid \cap \p\ymx\iid$) as done in \citet{pmlr-v202-zaffran23a} appears as a natural way to approach the minimal set of assumptions. 

\begin{remark}
In \Cref{fig:mcar}, we illustrate that, on a distribution $P\in\p\mcarymx\iid$, a provably MCV-$\p\mcarymx\iid$ method (introduced in \Cref{sec:nested}) consistently outputs finite length predictive intervals (regression case). Therefore, we can conclude that obtaining a hardness result on $\p\mcarymx\iid$ appears impossible, as such it would induce \Cref{rem:exp_finite_nohardness} (with $\p\mcarymx\iid$ instead of $\p\iid$).
\label[remark]{rem:nohardness_mcar_yindmx}
\end{remark}

\section{Link between missing covariates and predictive uncertainty}
\label{sec:glm}

In light of the previous section, MCV appears hard to achieve. Thus, the problem that we aim to address now is to \textbf{find ways to model properly the missing covariates' influence on predictive uncertainty}.
To understand the relationship between missing values and predictive uncertainty, this section explores simplified distributions on $(X,M,Y)$---such as MCAR and \yindmx---and/or on $(X,Y)$---such as linearity, Gaussianity. We consider the regression case with $\mathcal{Y} = \mathds{R}$.
This exploration aims to facilitate the development of suitable frameworks for probabilistic inference when covariates are missing---i.e., 
models that are as close as possible to achieving MCV.

\subsection{Increasing uncertainty with nested masks}
\label{sec:glm_var}
The hardness results of \Cref{sec:hardness} induce that MCV cannot be (efficiently) achieved without structural assumptions on the links between the predictive distributions conditional on each missing pattern. In this subsection, we gain insights into the underlying reasons for this phenomenon: the predictive uncertainty depends on the missing pattern, a form of \textit{heteroskedasticity}. In summary, we explore the following idea, which is a natural modelization attempt in that direction:
\begin{center}
   \textbf{Idea:} \textit{The predictive uncertainty increases when less covariates are observed.}
\end{center}
In technical words, the aforementioned heteroskedasticity  takes the form of an \textit{isotonicity} (monotony) with respect to the mask, with the  inclusion order given by \Cref{def:included_masks} below. In short: the more missing values, the more uncertainty there is.  

\begin{definition}[Included masks] 
\label[definition]{def:included_masks}
Let $(m,m') \in \mathcal{M} ^2$,  $m \subset m'$ if for any $j \in \llbracket 1,d \rrbracket$ such that $m_j = 1$ then $m'_j = 1$, i.e., $m'$ includes at least the same missing values than $m$.
\end{definition}
\begin{table}[b]
\begin{center} 
\begin{tabular}{lccc}
\toprule
 \diagbox{Property}{Setup} & \Cref{mod:glm}     & \Cref{ex:hetero} &  $\p\mcarymx$ \\
 \midrule
Variance  &   \ref{eq:var_ps}   & \cancel{ \ref{eq:var_ps} }\ \ \ref{eq:var_exp}   &  \ref{eq:var_exp} \\
Inter-quantile &  \ref{eq:IQ_ps} &  \ref{eq:IQ_exp} \\
Length of Oracle PI  & \ref{eq:CIL_isot_ps} & \ref{eq:CIL_isot_exp} & \ref{eq:CIL_isot_exp} \\
  \bottomrule
\end{tabular}
\end{center}
    \caption{Summary of the results from \Cref{sec:glm_var}.}
    \label{tab:my_label}
\end{table}
Hereafter, we formally quantify such a statement, in particular in terms of conditional variance, inter-quantile distance, and predictive interval length. We demonstrate that some of those statements are valid, to different extent, under distributional assumptions, either generic or on specific model or examples. To that end, we introduce several properties, that can be considered as non-parametric assumptions on the underlying distributions. We put together some results of this section in the following \Cref{tab:my_label}, that can be used as a reading guide throughout the section. 

\subsubsection{Conditional Variance Isotony w.r.t.~the missing data patterns}

We start by focusing on the link between $M$ and the \textit{conditional variance} of $Y|X_{\obs(M)}$, that constitutes a natural proxy on the predictive uncertainty. Denote $V(X_{\text{obs}(M)}, M ):= \mathrm{Var} \left( Y | X_{\text{obs}(M)}, M \right)$ the conditional variance of $Y$ given $\left(X_{\text{obs}(M)}, M\right)$. We introduce two properties regarding its ordering with respect to $M$: \eqref{eq:var_ps} and \eqref{eq:var_exp}.
\begin{align}
V(X_{\text{obs}(m)}, m ) & \overset{a.s.}{\leq} V(X_{\text{obs}(m')}, m' ) &\text{ for any } m \subset m', 
\label{eq:var_ps} \tag{Var-1}\\
\mathds{E}\left[ V(X_{\text{obs}(M)}, M ) | M = m \right] & \leq \mathds{E}\left[ V(X_{\text{obs}(M)}, M )  | M = m' \right] 
&\text{ for any } m \subset m'. 
 \tag{Var-2} 
\label{eq:var_exp}
\end{align}
Property \ref{eq:var_ps} is stronger than Property \ref{eq:var_exp} as it is an almost sure result w.r.t.~the covariates $X$.
The following proposition ensures that \eqref{eq:var_exp} is satisfied under $\p\mcarymx$ (that is, assumptions for which no hardness result can exist).
\begin{proposition}\label[proposition]{prop:VarExp}
    Under $\p\mcarymx$, \eqref{eq:var_exp} is valid.
\end{proposition}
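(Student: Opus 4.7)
My plan is to reduce the statement to a law of total variance argument, after using the two assumptions (MCAR and \yindmx) to strip $M$ out of the conditional variance entirely.

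The first step is to observe that under $\p\mcarymx$, the mask $M$ is actually jointly independent of $(X,Y)$, not just marginally of $X$ or conditionally of $Y$. Indeed, for any $m,x,y$,
\begin{equation*}
\mathds{P}(M=m, X=x, Y=y) = \mathds{P}(Y=y\,|\,X=x, M=m)\mathds{P}(X=x, M=m) = \mathds{P}(Y=y\,|\,X=x)\mathds{P}(X=x)\mathds{P}(M=m),
\end{equation*}
using \yindmx and then MCAR. Hence $(X,Y)\,|\,M=m \stackrel{d}{=} (X,Y)$ for any $m \in\mathcal{M}$ with $P_M(m)>0$.

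The second step uses this joint independence to rewrite the quantity of interest. For a fixed $m\in\mathcal{M}$, denote $v_m(u) := \mathrm{Var}(Y\,|\,X_{\obs(m)} = u)$, where the variance is computed under the (unconditional) marginal distribution of $(X,Y)$. Because $(X,Y)\,|\,M=m$ has the same distribution as $(X,Y)$, we get $V(X_{\obs(m)}, m) \stackrel{a.s.}{=} v_m(X_{\obs(m)})$ on the event $\{M=m\}$, and moreover $X_{\obs(m)}\,|\,M=m \stackrel{d}{=} X_{\obs(m)}$. Therefore
\begin{equation*}
\mathds{E}\left[V(X_{\obs(M)}, M)\,\big|\,M=m\right] = \mathds{E}\left[v_m(X_{\obs(m)})\right] = \mathds{E}\left[\mathrm{Var}(Y\,|\,X_{\obs(m)})\right].
\end{equation*}

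The third and final step is the law of total variance. If $m \subset m'$ then (by \Cref{def:included_masks}) $\obs(m') \subseteq \obs(m)$, so $X_{\obs(m')}$ is a sub-vector of $X_{\obs(m)}$ and the $\sigma$-algebra generated by $X_{\obs(m')}$ is coarser. The tower property gives
\begin{equation*}
\mathrm{Var}(Y\,|\,X_{\obs(m')}) = \mathds{E}\left[\mathrm{Var}(Y\,|\,X_{\obs(m)})\,\big|\,X_{\obs(m')}\right] + \mathrm{Var}\left(\mathds{E}[Y\,|\,X_{\obs(m)}]\,\big|\,X_{\obs(m')}\right),
\end{equation*}
and dropping the nonnegative second term and taking expectation yields $\mathds{E}[\mathrm{Var}(Y\,|\,X_{\obs(m)})] \leq \mathds{E}[\mathrm{Var}(Y\,|\,X_{\obs(m')})]$. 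Chaining this with the identity of step two on both sides delivers \eqref{eq:var_exp}. There is no real obstacle here: both MCAR and \yindmx are used in an essential way only to pass from $V(X_{\obs(m)},m)$ to the unconditional $\mathrm{Var}(Y\,|\,X_{\obs(m)})$, and after that the inequality is a textbook consequence of the tower property.
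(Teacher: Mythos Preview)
Your proof is correct and follows essentially the same route as the paper: first establish $M\ind(X,Y)$ from MCAR and \yindmx to reduce $\mathds{E}[V(X_{\obs(M)},M)\,|\,M=m]$ to the unconditional $\mathds{E}[\mathrm{Var}(Y\,|\,X_{\obs(m)})]$, then apply the law of total variance with the nested $\sigma$-algebras $\sigma(X_{\obs(m')})\subseteq\sigma(X_{\obs(m)})$. The only difference is cosmetic---you spell out the joint independence computation and the $\sigma$-algebra inclusion more explicitly than the paper does.
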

The proof of this result is given in \Cref{app:proof:varirianceisot}. This is a first significant result: under general assumptions---i.e., strong assumption on the relation between the mask and both the response and the features, but no assumptions on their distribution---, the averaged variance is always smaller on smaller masks. This establishes the existence of a link between the uncertainties \textit{on patterns that can be compared}, that is patterns that are nested in one another. Note that the order given by \Cref{def:included_masks} is only a partial order: the average variance ordering is only enforced w.r.t.~that partial order.

It is possible that the predictive uncertainty increases on average with the mask (\Cref{eq:var_exp}) but not almost surely on $X$ (\Cref{eq:var_ps}), as illustrated by  \Cref{ex:hetero} below: 
\begin{model}[Unidimensional heteroskedasticity]
\label[model]{ex:hetero}
Consider the following one-dimensional model:
\begin{itemize}[noitemsep,topsep=0pt]
    \item $X \sim \mathcal{N}(0,\sigma^2)$, $\sigma \in \mathds{R}_+$;
    \item $\xi \sim \mathcal{N}(0,\tau^2)$, $\tau \in \mathds{R}_+$, such that $\xi \ind X$;
    \item $Y = \beta X + X \xi$, with $\beta \in \mathds{R}$;
    \item $M \sim \mathcal{B}(\rho)$, with $\rho \in [0,1]$, and $M\ind (X,Y)$.
\end{itemize}
\end{model}
Under this model, we obtain that $M \ind X$ (MCAR) and \yindmx, and
\begin{align*}
& \left\{\begin{aligned}
& \mathrm{Var}(Y | X, M = 0) = \tau^2 X^2 \\
& \mathrm{Var}(Y | M = 1) = (\beta^2+\tau^2)\sigma^2
\end{aligned}\ \ 
\right. \ \ \ \
\Rightarrow & \left\{\begin{aligned}
& \mathds{E}\left[ \mathrm{Var}(Y | X, M = 0) \right] = \tau^2 \sigma^2  \\
& \mathds{E}\left[ \mathrm{Var}(Y | M = 1) \right] = (\beta^2+\tau^2)\sigma^2
\end{aligned}\,\,.
\right.
\end{align*}
Thus \Cref{eq:var_exp} is verified but \Cref{eq:var_ps} is only satisfied for $X$ such that $X^2 \leq \left( 1 + \frac{\beta^2}{\tau^2} \right) \sigma^2$. This is illustrated in \Cref{fig:var_hetero}. The first subplot represents $Y$ depending on $X$, while the third subplot displays $Y-\beta X$ depending on $X$, that is an illustration of the uncertainty of the distribution of $Y | X$. For any $X$ outside the vertical dashed lines (corresponding to $\pm ( 1 + {\beta^2}/{\tau^2} ) \sigma^2$), the conditional variance of $Y$ given $X$ is larger than the overall variance when $X$ is missing. 
Yet, the average variance of $Y$ when $X$ is missing is indeed higher than the average variance of $Y$ when $X$ is observed: this can be seen on the two histograms on subplots 2 and 4.
\begin{figure}[!t]
    \centering
    \includegraphics[width=0.7\textwidth]{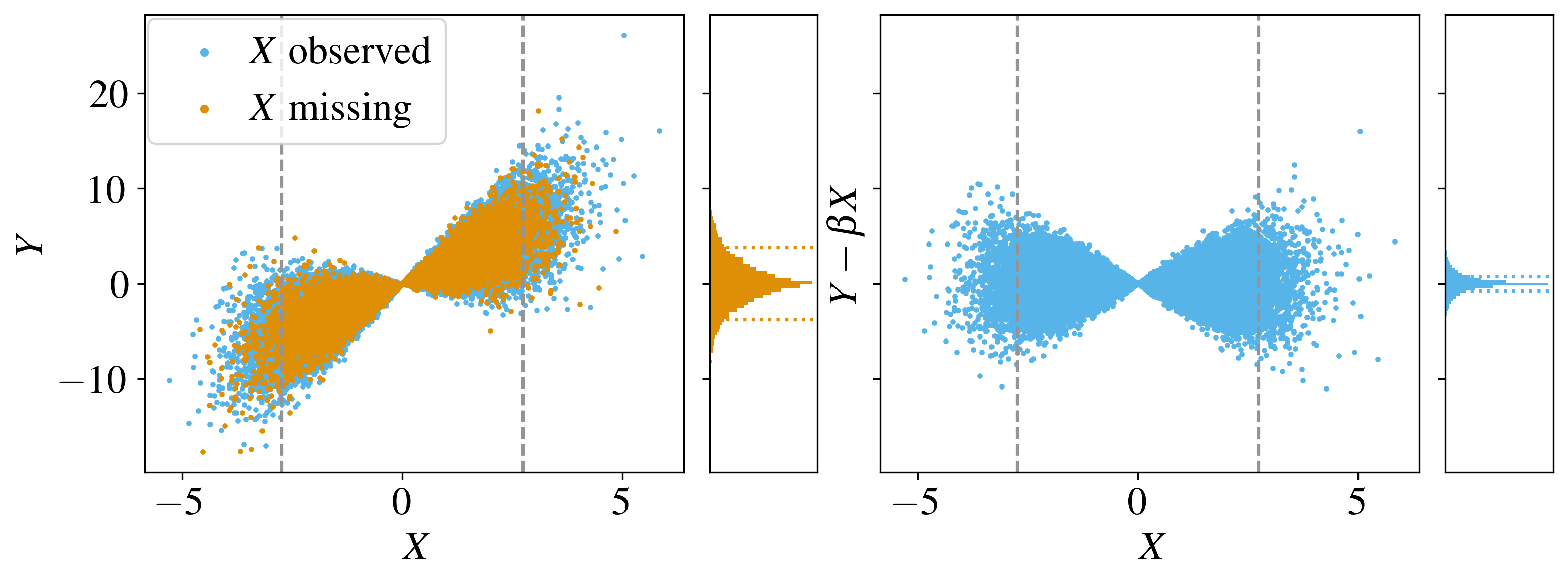}
    \caption{Visualisation of a random draw from the data distribution of \Cref{ex:hetero}, with $100000$ i.i.d. samples, $\rho = 0.2$, $\sigma^2=1.5$, $\tau^2=1$ and $\beta=2$. The colors indicate whether $X$ is observed or missing. The first subplot represents $Y$ depending on $X$, while the third subplot displays $Y-\beta X$ depending on $X$ only for observed $X$, that is an illustration of the uncertainty of $Y | X$. The second subplot is an histogram of $Y$ when $X$ is missing, while the forth subplot is an histogram of $Y-\beta X$ when $X$ is observed, i.e., they represent the predictive distribution of $Y$ depending on whether $X$ is observed or missing.}
    \label{fig:var_hetero}
\end{figure}

Finally,  while \Cref{ex:hetero} shows that \eqref{eq:var_ps} is not always true, even under the assumptions of \Cref{prop:VarExp}, we now show that it can be achieved in the following Gaussian linear model, a particular case of Gaussian pattern mixture model.
\newpage
\begin{model}[Gaussian linear model (GLM)]
\label[model]{mod:glm}
The data is generated according to a linear model and the covariates are Gaussian conditionally to the pattern:
\begin{itemize}[topsep=0pt,noitemsep,leftmargin=*]
	\item $Y = \beta^T X + \varepsilon$, $\varepsilon \sim \mathcal{N}(0, \sigma^2_{\varepsilon}) \perp\!\!\!\!\perp (X,M)$, $\beta \in \mathds{R}^d$.
	\item for all $m \in \mathcal{M}$, there exist $\mu^{m}\in\mathds{R}^d$ and $\Sigma^{m}\in\mathds{R}^{d\times d}$ such that
   $ X | (M = m) \sim \mathcal{N}(\mu^{m}, \Sigma^{m}).$
\end{itemize}
\end{model}
Such a model results in a MCAR distribution when $\Sigma^m \equiv \Sigma$. Indeed under \Cref{mod:glm} the resulting predictive distribution is given by $Y | ( X_{\obs(m)},M = m ) \sim \mathcal{N}\left( \tilde\mu^m, \widetilde\sigma^m \right)$ for any $m \in \mathcal{M}$,
with:
\begin{align*}
\tilde\mu^m = & \; \beta^T_{\obs(m)} X_{\obs(m)} + \beta^T_{\mis(m)} \mu^m_{{\mis}|{\obs}}, \\
\widetilde\sigma^m = & \; \beta^T_{\mis(m)} \Sigma^m_{{\mis}|{\obs}} \beta_{\mis(m)} + \sigma^2_{\varepsilon},
\end{align*}
with $\mu^m_{{\mis}|{\obs}}$ and $\Sigma^m_{{\mis}|{\obs}}$ defined in \Cref{app:glm}  \citep{lemorvan2020,ayme2022,pmlr-v202-zaffran23a}.
Crucially, $\widetilde\sigma^m$ depends on $m$ in a non-linear fashion, even in MCAR. That is, even in MCAR and a homoskedastic model for $Y|X$, the predictive distribution of $Y|X_{\obs(M)}$ is heteroskedastic: basically, the distribution of $Y$ is a mixture of various distributions with the mask being the latent variable. This simple example already illustrates that missing values generate strong heteroskedasticity: in \Cref{prop:glm_var_incr}, we show that under this \Cref{mod:glm} and $\p\mcar$, the variance of the conditional distribution of $Y$ increases when the missing pattern increases (in the sense of \Cref{def:included_masks}).

\begin{proposition}[Conditional variance increases with the mask under MCAR GLM]
\label[proposition]{prop:glm_var_incr}
Under \\ \Cref{mod:glm} and $\p\mcar$, if the covariance matrix $\Sigma$ is positive definite, \Cref{eq:var_ps} is satisfied.
\end{proposition}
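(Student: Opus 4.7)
The plan is to exploit two structural features of the MCAR Gaussian linear setting. First, under $\mathcal{P}_{\mathrm{MCAR}}$ we have $\Sigma^m \equiv \Sigma$ and the $\yindmx$ property holds (since $\varepsilon \ind (X,M)$ combined with $M \ind X$ yields $(X,Y) \ind M$), so that $V(X_{\obs(m)}, m) = \mathrm{Var}(Y|X_{\obs(m)})$. Second, the explicit formula from the excerpt gives $\widetilde\sigma^m = \beta^T_{\mis(m)} \Sigma_{\mis|\obs} \beta_{\mis(m)} + \sigma^2_\varepsilon$, which does \emph{not} depend on the realization of $X_{\obs(m)}$. Hence the almost sure inequality (\ref{eq:var_ps}) reduces to the deterministic scalar inequality
\[
\widetilde\sigma^m \le \widetilde\sigma^{m'} \qquad \text{whenever } m \subset m'.
\]

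Next, I would translate the mask inclusion into an inclusion of conditioning $\sigma$-algebras. By \Cref{def:included_masks}, $m \subset m'$ implies $\obs(m') \subset \obs(m)$, so $\mathcal{F}_{m'} := \sigma(X_{\obs(m')}) \subseteq \sigma(X_{\obs(m)}) =: \mathcal{F}_m$. Applying the law of total variance with the smaller $\sigma$-algebra outside, I obtain
\begin{equation*}
\mathrm{Var}(Y \mid \mathcal{F}_{m'}) \;=\; \mathbb{E}\bigl[\mathrm{Var}(Y \mid \mathcal{F}_m)\,\big|\,\mathcal{F}_{m'}\bigr] \;+\; \mathrm{Var}\bigl(\mathbb{E}[Y \mid \mathcal{F}_m]\,\big|\,\mathcal{F}_{m'}\bigr).
\end{equation*}
By the first paragraph, $\mathrm{Var}(Y \mid \mathcal{F}_m) = \widetilde\sigma^m$ and $\mathrm{Var}(Y \mid \mathcal{F}_{m'}) = \widetilde\sigma^{m'}$ are both deterministic constants, so the identity collapses to $\widetilde\sigma^{m'} = \widetilde\sigma^m + \mathrm{Var}(\mathbb{E}[Y \mid \mathcal{F}_m] \mid \mathcal{F}_{m'}) \ge \widetilde\sigma^m$, which is the desired conclusion.

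The only remaining technical point, and the place where the positive-definiteness assumption on $\Sigma$ is genuinely used, is to ensure that the Schur complements $\Sigma_{\mis|\obs}$ appearing in $\widetilde\sigma^m$ and $\widetilde\sigma^{m'}$ are well-defined for every mask, i.e., that each principal submatrix $\Sigma_{\obs,\obs}$ is invertible. This follows because any principal submatrix of a positive definite matrix is itself positive definite. I expect this to be the only subtle verification; the main probabilistic content of the proof is the tower decomposition above, which in Gaussian settings turns \textit{any} refinement of the conditioning into a pointwise reduction of the conditional variance.
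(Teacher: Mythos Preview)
Your argument is correct and takes a genuinely different, more probabilistic route than the paper. The paper proceeds by pure linear algebra: it writes the difference
\[
\Sigma^{m'}_{\mis|\obs}\;-\;\begin{pmatrix}\Sigma^{m}_{\mis|\obs}&0\\0&\mathbf{0}\end{pmatrix}
\]
explicitly via a three-block decomposition of $\Sigma$, simplifies it to $\begin{pmatrix}HG^{-1}H^T&H\\H^T&G\end{pmatrix}$ with $G$ a positive definite Schur complement, and checks by hand that this matrix is positive semidefinite; the scalar inequality $\widetilde\sigma^m\le\widetilde\sigma^{m'}$ then follows for every $\beta$. You instead exploit the Gaussian-specific fact that $\mathrm{Var}(Y\mid X_{\obs(m)})$ is a deterministic constant, so the conditional law of total variance immediately collapses to $\widetilde\sigma^{m'}=\widetilde\sigma^{m}+\mathrm{Var}(\mathbb{E}[Y\mid\mathcal F_m]\mid\mathcal F_{m'})\ge\widetilde\sigma^{m}$. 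Your proof is considerably shorter and more conceptual, and it generalizes verbatim to any model in which the conditional variance $\mathrm{Var}(Y\mid X_{\obs(m)})$ is $X$-free (e.g., elliptical families or homoskedastic linear models with non-Gaussian covariates having constant conditional variances). The paper's computation, on the other hand, delivers the stronger \emph{matrix} ordering of the Schur complements as a byproduct, which is of independent interest. Note also that your use of positive definiteness is lighter than the paper's: you only invoke it so that the explicit formula for $\widetilde\sigma^m$ is well-defined, whereas the tower argument itself goes through without it.
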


To prove that the variance increases with the pattern, we prove that for any $m\subset m'$, $\Sigma^{m'}_{\mis|\obs} \succcurlyeq \begin{pmatrix}
    \Sigma^m_{\mis|\obs} & 0 \\ 0 & \mathbf{0}
\end{pmatrix}$. This  is  proved in \Cref{app:glm}. 

Next, in order to go beyond variances, we focus on inter-quantile distances as a measure of uncertainty, and establish a general result on the expected length of oracle predictive intervals.

\subsubsection{Conditional Inter-quantile Isotony w.r.t.~the missing data patterns}

Ideally, we would like to access the oracle predictive interval (the interval satisfying \Cref{eq:na_mcv} with minimal expected length). Thus, in this section we are interested in characterizing its behavior with respect to $M$, in order to be able to mimic it. We denote this interval $\cna^{*,P}$, that is formally defined for any $m\in\mathcal{M}$ as:
\begin{align*}
  \cna^{*,P} (\cdot, m):= \argmin_{\substack{\cna : \mathcal X \times \mathcal M \to \mathdutchcal{P}(\mathds R) \\ \mathrm{s.t.} \mathds{P}_P( Y \in \mathcal{C}_\alpha (X,m) | M = m ) \geq 1 - \alpha, }} \mathds E_P[\Lambda (  \cna(X_{\obs(m)},m) ) | M=m].
\end{align*}

In fact, under \Cref{mod:glm}, the oracle predictive interval is uniquely defined by the quantiles $\alpha/2$ and $1-\alpha/2$ of the $\mathcal{N}\left( \tilde\mu^m, \widetilde\sigma^m \right)$. More importantly, this oracle interval even achieves $X$-conditional coverage. 
\Cref{prop:glm_var_incr} shows that under $\p\mcar$ and \Cref{mod:glm}, increasing the number of missing values (in a nested way) induces an increase in the predictive uncertainty of $Y$:  the oracle intervals, that are given by inter-quantiles intervals, are nested. Notably, this is true almost surely on $X_{\obs}$ and not only marginally. 

To generalize this property beyond the Gaussian case, we introduce the inter-quantile distance, that encodes the uncertainty for  conditional predictive distribution. 
For all $\beta \le \frac{1}{2}\le \gamma$, we define the inter-quantile space for quantile distributions:
\begin{align*}
    \mathrm{IQ}_{\beta,\gamma} (X_{\text{obs}(M)}, M) = q_\gamma(\mathds P_{Y | X_{\text{obs}(M)}, M }) -q_\beta(\mathds P_{Y | X_{\text{obs}(M)}, M } ).
\end{align*}
And the following two assumptions, that are similar in spirit to \eqref{eq:var_ps} and \eqref{eq:var_exp} 
\begin{align}
\mathrm{IQ}_{\beta,\gamma} (X_{\text{obs}(m)}, m) & \overset{a.s.}{\leq} \mathrm{IQ}_{\beta,\gamma} (X_{\text{obs}(m')}, m') &\text{ for any } m \subset m', 
\label{eq:IQ_ps} \tag{IQ-1}\\
\mathds{E}\left[ \mathrm{IQ}_{\beta,\gamma} (X_{\text{obs}(M)}, M ) | M = m \right] & \leq \mathds{E}\left[ \mathrm{IQ}_{\beta,\gamma} (X_{\text{obs}(M)}, M)   | M = m' \right] \;\; &\text{ for any } m \subset m'. 
 \tag{IQ-2} 
\label{eq:IQ_exp}
\end{align}
The assumptions on the quantiles and the variance are equivalent for Gaussian (conditional) distributions. As a consequence, \eqref{eq:IQ_exp} is satisfied under \Cref{mod:glm} and $\p\mcar$ as well as under \Cref{ex:hetero}, while \eqref{eq:IQ_ps} is satisfied only under \Cref{mod:glm} and $\p\mcar$. Inter-quantile assumptions are related to predictive intervals: for any distribution $P$ such that $P_{Y | X_{\text{obs}(M)}, M }$ is a.s.~unimodal, the oracle predictive interval $\cna^{*,P} $ writes as an inter-quantile interval almost surely, that is there exist functions $\beta,\gamma : \mathcal X \times \mathcal M \to [0,1]$ such that
\begin{align*}
    &\cna^{*,P} (X_{\obs(M)}, M) \overset{\mathrm{a.s.}}{ =} \left[q_{\beta(X_{\obs(M)}, M)} (P_{Y | X_{\text{obs}(M)}, M }); q_{\gamma(X_{\obs(M)}, M)} (P_{Y | X_{\text{obs}(M)}, M })\right]\\
    &\mathds E_P[\gamma(X_{\obs(M)}, M)-\beta(X_{\obs(M)}, M)|M] \overset{\mathrm{a.s.}}{ =} 1-\alpha.
\end{align*}
Indeed, to minimize the average length, the best oracle solution consists in minimizing the length conditionally to $(X_{\obs(M)}, M)$, which is achieved by an inter-quantile interval, under the unimodality assumption. The quantity $\gamma(X_{\obs(M)}, M)-\beta(X_{\obs(M)}, M)$ corresponds to the $(X_{\obs(M)}, M)$--conditional coverage, that is on average, conditionally to $M=m$, the targeted $1-\alpha$.

Yet, in practice, the constructed intervals are not the oracle ones. Therefore, a natural question is whether \eqref{eq:IQ_exp} extends to a non-oracle $\mathcal{C}_\alpha$. As generally $\mathcal{C}_\alpha$ is not based on the underlying true conditional quantiles, we focus on $\mathcal{C}_\alpha$ length instead, a quantity similar in spirit to the inter-quantile. We consider the two following assumptions:
\begin{align}
\Lambda(\cna(X_{\text{obs}(m)}, m )) & \overset{a.s.}{\leq}  \Lambda(\cna(X_{\text{obs}(m')}, m' ))&\text{ for any } m \subset m', 
\label{eq:CIL_isot_ps} \tag{Len-1}\\
\mathds{E}\left[ \Lambda(\cna(X_{\text{obs}(M)}, M ))| M = m \right] & \leq \mathds{E}\left[ \Lambda(\cna(X_{\text{obs}(M)}, M )) | M = m' \right] 
&\text{ for any } m \subset m'. 
\label{eq:CIL_isot_exp} \tag{Len-2}
\end{align}
We have the following \Cref{prop:len-dec} on isotonicity (\ref{eq:CIL_isot_exp}) under $\p\mcarymx$.

\vspace{0.25cm}
\noindent\fbox{%
    \parbox{\textwidth}{%
    \vspace{-0.25cm}
\begin{theorem}\label{prop:len-dec}
Let $\cna$ be an MCV-$\p\mcarymx$ predictive interval. There exists a predictive interval $\widetilde \cna$ which
\begin{enumerate}[label=\roman*)]
    \item is MCV-$\p\mcarymx$,
    \item has conditional length smaller or equal to that of $\cna$ on each pattern,
    \item is averaged-length-isotonic w.r.t.~the patterns, i.e., satisfies~\eqref{eq:CIL_isot_exp}.
\end{enumerate}
\end{theorem}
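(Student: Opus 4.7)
The strategy is constructive: I build $\widetilde\cna$ from $\cna$ by, at each test pattern $m$, ``artificially forgetting'' some of the observed coordinates whenever doing so yields a shorter average interval. The foundation is a distribution-preserving truncation lemma: under $\p\mcarymx$, for any $m \subseteq m'$, the joint conditional law of $(X_{\obs(m')}, Y)$ given $M=m$ equals the joint conditional law of $(X_{\obs(m')}, Y)$ given $M=m'$. Indeed, MCAR gives $M \ind X$ and \yindmx gives $Y \ind M \mid X$; together these imply that conditioning on $\{M=m\}$ does not alter the joint distribution of $(X,Y)$, so restricting to the coordinates $\obs(m') \subseteq \obs(m)$ yields the same law as under $\{M = m'\}$.

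Given this lemma, fix $P \in \p\mcarymx$ and, for each pattern $m$ with $P_M(m) > 0$, set
\[
m^*(m) \in \argmin_{m^* \supseteq m,\, P_M(m^*) > 0} \mathds{E}_P \!\left[\Lambda(\cna(X_{\obs(m^*)}, m^*)) \mid M = m^*\right],
\]
breaking ties by any fixed rule (the argmin exists since $\mathcal M$ is finite). Define
\[
\widetilde\cna (X_{\obs(m)}, m) := \cna (X_{\obs(m^*(m))}, m^*(m)).
\]
This is well-defined because $m^*(m) \supseteq m$ implies $\obs(m^*(m)) \subseteq \obs(m)$, so the coordinates needed to evaluate $\cna$ are already observed. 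On patterns with $P_M(m)=0$, set $\widetilde\cna = \cna$ arbitrarily.

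The three properties then follow swiftly. For (i), the truncation lemma gives
$\mathds{P}_P(Y \in \widetilde\cna(X_{\obs(m)}, m) \mid M=m) = \mathds{P}_P(Y \in \cna(X_{\obs(m^*(m))}, m^*(m)) \mid M=m^*(m)) \geq 1 - \alpha$,
since $\cna$ is MCV-$\p\mcarymx$. For (ii), the expected length of $\widetilde\cna$ at $m$ equals (again by the lemma) the displayed minimum, and taking $m^* = m$ as a candidate shows this minimum is at most $\mathds{E}_P[\Lambda(\cna(X_{\obs(m)},m)) \mid M=m]$. For (iii), if $m \subseteq m'$ then the feasible set over which the minimum is taken for $m'$ is contained in that for $m$, so the minimum at $m'$ is at least that at $m$, which is exactly the averaged-length isotonicity \eqref{eq:CIL_isot_exp}.

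The main obstacle is really the truncation lemma: it is the only place where the full $\p\mcarymx$ assumption is used, and it is what rules out the same construction working in the broader MAR or MNAR regimes, consistently with the hardness results of \Cref{sec:hardness}. The remaining ingredients (finiteness of $\mathcal M$ to make the argmin unproblematic, handling patterns of zero probability, and checking that $\widetilde\cna$ is still an interval since it is a specific evaluation of $\cna$) are routine. Note also that $\widetilde\cna$ depends on $P$ through the $m^*(m)$'s, but this is harmless: the theorem only asserts \emph{existence} of an improving interval, and this oracle construction suffices.
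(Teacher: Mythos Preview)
Your proof is correct and follows essentially the same approach as the paper: both rely on the key observation that under $\p\mcarymx$, over-masking preserves the conditional law of $(X_{\obs(m')},Y)$, so one can borrow the interval from any super-mask $m'\supseteq m$ while retaining coverage. The only cosmetic difference is that the paper builds $\widetilde\cna$ by induction down from the fully-missing pattern (at each step comparing $\cna(\cdot,m)$ with the already-constructed $\widetilde\cna(\cdot,m')$ for the immediate super-masks), whereas you take a direct global $\argmin$ over all super-masks; unwinding the paper's recursion yields exactly your construction, so the two are equivalent and your formulation is arguably cleaner.
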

    \vspace{-0.25cm}
 }%
}
\vspace{0.25cm}

The proof of \Cref{prop:len-dec} exploits the fact that under $\p\mcarymx$, a strategy to ensure conditional coverage w.r.t.~a pattern $m$, is to transform $(X_\obs(m),m)$ into $(X_\obs(m'), m')$ by additionally masking some entries, and using the predictive interval given on pattern $m'$. For $m\subset m'$, we denote  $X_{\obs(\max(m,m'))}$ the point in which we additionally mask elements of $m'$ in $X$. We have that under $\p\mcarymx$, the distribution of the data \textit{post-masking} is equal to that of the data with more missing entries:  $\mathds P_{Y | X_{\text{obs}(\max(M,m'))}, \max(M, m') }=P_{Y | X_{\text{obs}(m')}, M=m'  }$. We can leverage this observation to build intervals: if the averaged length of the predictive interval conditionally to a pattern $m\subset m'$ is larger than that conditionally to a pattern $m\subset m'$, we can replace $\cna(X_{\obs(m)},m)$ by $\cna(X_{\obs(m')},m')$, ensuring both that new interval length is smaller and that we satisfy \eqref{eq:CIL_isot_exp}.
Formally, we proceed by induction: starting from the pattern $m'= (1, \dots, 1)$ (no data observed), we first consider all patterns $m= (1, \dots,1, 0, 1, \dots) $ with a single observed value, and define $\widetilde \cna (X_{\obs(M)}, M)$, conditionally to $M=m$, as either $\cna (X_{\obs(M)}, M)$ or $\cna (X_{\obs(\max(M,m'))},  \max(M,m'))$ (depending on which expected length is smaller). We then repeat the same reasoning inductively. For a pattern $m$, we pick for $\widetilde \cna$ either $\cna(\cdot, m)$  or the minimal-length interval among all $\cna (\cdot, m')$ for all patterns $m'$ that have one more missing data than $m$, and artificially mask on of the components of $X_{\obs(m)}$ when predicting.

\textbf{Interpretation:} we leverage towards predictive interval construction the fact that we can transform an observed point, by removing some covariates, and recover the same distribution as the one with more missing entries.  This idea will be one of the key techniques leveraged in \Cref{sec:nested}.

As consequence of \Cref{prop:len-dec} is the following corollary, that is obtained by a minimality argument for the oracle interval (i.e., knowing that applying the aforedmentioned transformation to the oracle does not change it, as it already has minimal-expected length on each pattern):
\begin{corollary}\label{prop:oracle-len-dec}
    Let $P \in \p\mcarymx$. Then the oracle interval $\cna^{*,P}$ is averaged-length-isotonic w.r.t.~the patterns, i.e., satisfies~\eqref{eq:CIL_isot_exp}.
\end{corollary}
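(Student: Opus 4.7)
The plan is to derive the corollary directly from \Cref{prop:len-dec} via a minimality argument, exploiting the optimality of the oracle interval on each pattern. First I would apply \Cref{prop:len-dec} to $\cna = \cna^{*,P}$, which produces a predictive interval $\widetilde{\cna^{*,P}}$ satisfying the three listed properties: it is still MCV-$\p\mcarymx$, its expected conditional length on every pattern is at most that of $\cna^{*,P}$, and it is averaged-length-isotonic, i.e.\ satisfies \eqref{eq:CIL_isot_exp}.

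Next I would invoke the definition of the oracle: for every pattern $m\in\mathcal{M}$, $\cna^{*,P}(\cdot,m)$ minimizes $\mathds{E}_P[\Lambda(\cna(X_{\obs(m)},m))\mid M=m]$ over all MCV-$\p\mcarymx$ predictive sets. Since $\widetilde{\cna^{*,P}}$ is itself MCV-$\p\mcarymx$ by property (i), its expected conditional length cannot be strictly smaller than the oracle's on any pattern. Combined with property (ii), this forces
\begin{equation*}
\mathds{E}_P\!\left[ \Lambda\!\left(\widetilde{\cna^{*,P}}(X_{\obs(M)},M)\right)\mid M=m \right]=\mathds{E}_P\!\left[ \Lambda\!\left(\cna^{*,P}(X_{\obs(M)},M)\right)\mid M=m \right]
\end{equation*}
for every $m\in\mathcal{M}$ with $P_M(m)>0$. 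Finally, since $\widetilde{\cna^{*,P}}$ satisfies \eqref{eq:CIL_isot_exp}, and its expected conditional length on each pattern coincides with that of $\cna^{*,P}$, the same isotonicity transfers to $\cna^{*,P}$, establishing the corollary.

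There is essentially no obstacle here beyond being careful about what the ``conditional length smaller or equal'' in property (ii) of \Cref{prop:len-dec} means: it must refer to the quantity $\mathds{E}_P[\Lambda(\cdot)\mid M=m]$ appearing in \eqref{eq:CIL_isot_exp}, so the chain of inequalities closes cleanly. The only minor subtlety is to restrict the argument to patterns with $P_M(m)>0$, which is consistent with the definition of MCV-$\mathcal{D}$ in \Cref{def:mcv} and with the domain on which the oracle and its transform are defined.
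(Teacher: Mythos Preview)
Your proposal is correct and follows essentially the same approach as the paper: apply \Cref{prop:len-dec} to the oracle, then use the minimality of the oracle on each pattern to conclude that the expected conditional lengths of the transformed interval and the oracle coincide, so the isotonicity in \eqref{eq:CIL_isot_exp} transfers back to $\cna^{*,P}$. The paper states this in one sentence (``obtained by a minimality argument for the oracle interval, i.e., knowing that applying the aforementioned transformation to the oracle does not change it''), and your write-up is a faithful unpacking of that sentence.
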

Overall, \eqref{eq:CIL_isot_exp} is thus satisfied by our target sets under $\p\mcarymx$, and thus appears as a reasonable constraint to impose on our predictive sets. Indeed, it seems to be close to the minimal set of assumptions required in order to ensure that no hardness result exists (\Cref{sec:hardness}) while inducing a leverageable structure between patterns that can be compared (\Cref{prop:len-dec}).

\subsection{Guidelines for practitioners: which information through imputation for quantile regression?}
\label{sec:glm_mask}

In this section, we highlight specifities of predictive uncertainty quantification under missing covariates with respect to mean regression, and provide generic guidelines usable in practice.

\paragraph{Impute-then-predict.} Most predictive algorithms can not cope directly with missing covariates. To bypass this, the most common approach is to impute the incomplete data via an imputation function $\Phi$, that maps observed values to themselves and missing values to a function of the observed values. Using notations from \citet{lemorvan2021} we note $\phi^{m} : \mathds{R}^{|\obs(m)|} \rightarrow \mathds{R}^{|\mis(m)|}$ the imputation function which, given a mask $m \in \mathcal{M}$, takes as input observed values and outputs imputed values, i.e., plausible values. Then, the overall imputation function $\Phi$ belongs to $\mathcal{F}^I := \left\{ \Phi : \mathcal{X} \times \mathcal{M} \rightarrow \mathcal{X} : \forall j \in \llbracket 1,d \rrbracket, \left( \Phi \left( X, M \right) \right)_j = X_j\mathds{1}_{M_j = 0} + \left(\phi^{M}\left(X_{\text{obs}(M)}\right)\right)_j \mathds{1}_{M_j = 1}  \right\}$.  
The imputed data set becomes the $n$ random variables $\left(\Phi\left(X, M\right), M, Y\right)$. In practice, $\Phi$ is the result of an algorithm $\mathcal{I}$ trained on $\left\{ \left(X^{(k)},M^{(k)} \right) \right\}_{k=1}^{n+1}$.
The impact of imputation has been studied for mean regression tasks \citep[in particular in][]{lemorvan2021,pmlr-v202-ayme23a,ayme2024random}.

\paragraph{How to account for the missingness when imputing?}  Given the impact of missing covariates on the shape of prediction uncertainty discussed in \Cref{sec:glm_var}, impute-then-predict raises a specific concern: is there a way to impute which incorporates the necessary information on the missing values?

Hereafter, we show that the answer is significantly different if we restrict ourselve to mean regression. Specifically, we show that incorporating the mask (e.g., by concatenating the mask to the features) is more critical for quantile regression. To that end, we provide in \Cref{prop:impute_mean_qr} simple models showcasing that unbiased imputation choices are sufficient to obtain an optimal model for regression but fail for quantile regression. For mean regression, the efficiency of such imputation methods have been established in practice \citep[see e.g.,][]{josse2019,lemorvan2021} and \Cref{prop:impute_mean_qr} support those findings.

\begin{proposition}
Assume $\p\mcarymx$ and $Y = {\beta^*}^T X + \varepsilon$ with $\varepsilon$ s.t. $\mathds{E}\left[\varepsilon | X_{\obs(M)} , M \right] = 0$.
\begin{enumerate}[label=\roman*)]
\item\label{itm:mean} Mean regression
\begin{itemize}[topsep=0pt,noitemsep,leftmargin=*]
    \item if the covariates $(X_j)_{j=1}^d$ are independent, then the optimal linear model taking $\Phi_{\rm{mean}}(X,M)$ as input is Bayes optimal, with $\Phi_{\rm{mean}}$ the imputation by the mean;
    \item the optimal linear model taking $\Phi_{\rm{conditional \; mean}}(X,M)$ as input is Bayes optimal, with $\Phi_{\rm{conditional \; mean}}$ the imputation by the conditional mean;
\end{itemize}
\item\label{itm:qr} Any quantile linear model taking unbiased imputed data as input (i.e., $\mathds{E}\left[ \Phi(X,M) | M  \right] \overset{a.s.}{ =} \mathds{E}\left[ X \right]$) leads to intervals of constant expected length across patterns, thus is not Bayes optimal when $Y\not\ind X$.
\end{enumerate}
\label[proposition]{prop:impute_mean_qr}
\end{proposition}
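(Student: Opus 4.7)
\textbf{Plan for part \ref{itm:mean}.} The Bayes-optimal mean predictor under squared loss is $\eta(X_{\obs(M)},M) := \mathds{E}[Y\mid X_{\obs(M)},M]$. Using the linear model $Y=\beta^{*T}X+\varepsilon$, the centering assumption on $\varepsilon$, and the MCAR identity $\mathds{E}[X_{\mis(m)}\mid X_{\obs(m)},M=m]=\mathds{E}[X_{\mis(m)}\mid X_{\obs(m)}]$, I would first derive the closed form
\begin{equation*}
\eta(X_{\obs(M)},M) \;=\; \beta^{*T}_{\obs(M)}X_{\obs(M)}+\beta^{*T}_{\mis(M)}\,\mathds{E}[X_{\mis(M)}\mid X_{\obs(M)}].
\end{equation*}
For the conditional-mean bullet the conclusion is then direct: $\Phi_{\rm{conditional \; mean}}$ fills missing entries with exactly $\mathds{E}[X_{\mis(M)}\mid X_{\obs(M)}]$, hence $\beta^{*T}\Phi_{\rm{conditional \; mean}}(X,M)=\eta(X_{\obs(M)},M)$ and the linear model with coefficients $\beta^{*}$ is Bayes optimal. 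For the mean-imputation bullet, covariate independence collapses $\mathds{E}[X_{\mis(M)}\mid X_{\obs(M)}]$ to $\mathds{E}[X_{\mis(M)}]$, so $\beta^{*T}\Phi_{\rm{mean}}(X,M)$ also matches $\eta$ and the same conclusion holds.

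\textbf{Plan for part \ref{itm:qr}.} Any quantile linear model producing a level $1-\alpha$ interval takes the form $[\beta_\ell^T\Phi(X,M),\beta_u^T\Phi(X,M)]$ for some coefficients $\beta_\ell,\beta_u\in\mathds{R}^d$, so its length is the linear functional $(\beta_u-\beta_\ell)^T\Phi(X,M)$. Taking the conditional expectation given $M=m$, linearity and the unbiasedness hypothesis $\mathds{E}[\Phi(X,M)\mid M]\overset{a.s.}{=}\mathds{E}[X]$ yield
\begin{equation*}
\mathds{E}\!\left[(\beta_u-\beta_\ell)^T\Phi(X,M)\mid M=m\right] \;=\; (\beta_u-\beta_\ell)^T\mathds{E}[X],
\end{equation*}
which is constant in $m$. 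To upgrade this into non-Bayes-optimality when $Y\not\ind X$, I would contrast two extreme patterns: on the fully observed pattern $m^{\circ}=(0,\dots,0)$ the oracle interval has expected length controlled by the spread of $\varepsilon$ alone, whereas on the fully missing pattern $m^{\bullet}=(1,\dots,1)$ it is controlled by the spread of $\beta^{*T}X+\varepsilon$, which is strictly larger whenever $\beta^{*}\neq 0$. Since Bayes interval lengths are minimal under the MCV constraint, matching Bayes would require non-constant conditional expected lengths, contradicting the display above.

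\textbf{Main obstacle.} Part \ref{itm:mean} reduces to bookkeeping once the closed-form expression for $\eta$ is in hand. The real subtlety lies in closing part \ref{itm:qr}: one must argue that the oracle expected length \emph{strictly} differs between at least two patterns whenever $Y\not\ind X$. I would invoke \Cref{prop:oracle-len-dec} to secure isotonicity with respect to the mask order, then strengthen it using the explicit variance gap $\mathrm{Var}(Y\mid M=m^{\bullet})-\mathrm{Var}(Y\mid X,M=m^{\circ})=\beta^{*T}\mathrm{Cov}(X)\beta^{*}>0$ (valid whenever $\beta^{*}\neq 0$, i.e.~whenever $Y\not\ind X$ under the linear model), which via the minimality of the oracle interval translates into strictly unequal oracle expected lengths and rules out any constant-length predictor from being Bayes optimal.
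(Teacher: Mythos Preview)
Your proposal is correct and follows essentially the same approach as the paper: for part~\ref{itm:mean} you compute $\mathds{E}[Y\mid X_{\obs(M)},M]$ and identify it with $\beta^{*T}\Phi(X,M)$ exactly as the paper's Lemma does, and for part~\ref{itm:qr} you observe that a linear quantile model yields a length that is linear in $\Phi(X,M)$, whose conditional expectation given $M$ is constant by the unbiasedness hypothesis, matching the paper's Proposition~B.2. The only minor omission is that the paper allows affine quantile predictors $\widehat q_\delta(z)=\beta_\delta^T z+\beta_\delta^0$ with intercepts, but since the intercepts contribute a pattern-independent constant to the length this does not affect your argument; your justification of non-Bayes-optimality via the variance gap between the fully observed and fully missing patterns is likewise the same mechanism the paper invokes (oracle-length isotonicity plus $\mathrm{Var}(Y)\neq\mathrm{Var}(Y\mid X)$ when $Y\not\ind X$).
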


Point \ref{itm:mean} of \Cref{prop:impute_mean_qr} illustrates that if the learner was able to retrieve the true underlying regression coefficients and the data were imputed by their mean or conditional mean, then the learned model would be the best possible at the task of predicting the conditional expectation, i.e., all necessary information is preserved by using only the imputed data set and not leveraging the associated mask. Although the non-necessity of using the mask in the conditional expectation estimation and MCAR framework does not systematically extend when the data is more complex than linear, it is insightful as even in the linear setting, the same does not hold for quantile~regression. 

Indeed, point \ref{itm:qr} of the same \Cref{prop:impute_mean_qr} highlights that on the contrary a learner accessing the true underlying regression coefficients with the very same unbiased imputed data would not lead to an optimal model, as a method whose resulting predictive interval have constant lengths across the missing patterns does not retrieve the underlying heteroskedasticity induced by the missing values (\Cref{sec:glm_var}), and thereby cannot be MCV. Precisely, the assumption on the imputation for this result corresponds for example to imputing by the feature's expectation (i.e., $\Phi_{\rm{mean}}$), the feature's conditional expectation (i.e., $\Phi_{\rm{conditional \; mean}}$), or a random draw from a distribution whose expectation is the feature's expectation, under $\p\mcar$. This includes MICE \citep{JSSv045i03}, which consists in imputing by random draws from the conditional distribution hence the imputed data have the same expectation than the features themselves. 

Overall, \Cref{prop:impute_mean_qr} tells that $i)$ the state-of-the-art imputation method MICE is not the best choice for predictive uncertainty quantification, $ii)$ by contrast to mean regression, in the linear case imputing by the expectation or the conditional expectation is detrimental. In fact, data-independent constant imputation would result in more adaptive intervals. This is because quantile regression needs to retrieve the information on the patterns to adapt its structure to it. Therefore, when using such imputations, \textbf{a natural idea is to give the information of the mask to the model by concatenating the mask to the features}.

\section{Principled unified Missing Data Augmentation (MDA) framework: \newnested}
\label{sec:nested}

In this section, we go beyond generic guidelines and we introduce a general framework, coined \newnested, to generate predictive sets that achieve MCV under $\p\mcarymx$. Our approach is applicable to both classification and regression tasks, by building upon any conformal score function \citep{vovk_algorithmic_2005}. It combines over-masking ideas introduced in \Cref{sec:glm}, with subsampling techniques, and similar machinery than leave-one-out conformal prediction methods \citep{barber2021jackknife,gupta}. 

\subsection{Presentation of \newnested}

We start by reminding the necessary concepts of split Conformal Prediction (CP) in the complete case, without missing values, before diving into the details of our unified framework \newnested.

\subsubsection{Background on split CP} 

Introduced in \citet{papadopoulos_inductive_2002,vovk_algorithmic_2005,lei_distribution-free_2018}, split CP builds predictive regions by first splitting the $n$ points of the training set into two disjoint sets $\rm{Tr}, \rm{Cal}\subset \llbracket 1,n \rrbracket$, to create a \textit{proper training set}, $\rm{Tr}$, and a \textit{calibration set}, $\rm{Cal}$, of sizes $\#\Tr = (1-\rho)n$ and $\#\Cal=\rho n$ with $\rho\in]0,1]$. On the proper training set, a model $\hat f$ (chosen by the user) is fitted, and then used to predict on the calibration set. \textit{Conformity scores} $\mathcal{S} =  \left\{\left(s\left(X^{(k)},Y^{(k)};\hat{f}\right)\right)_{k \in \rm{Cal}}\right\} \cup \{ +\infty \}$ are computed to assess how well the fitted model $\hat f$ predicts the response values of the calibration points. In regression, usually the absolute value of the residuals is used, i.e.~$s(x,y;\hat{\mu}) = |\hat\mu(x) - y|$. In classification, the simplest score is $s(x,y;\hat{p}) = 1 - \hat{p}(x)_y$ (where $\hat{p}: \mathcal{X} \mapsto [0,1]^{\mathcal{Y}}$ outputs a vector of estimated probabilities for each class). Finally, the $(1-\alpha)$-th quantile  of these scores $q_{1-\alpha}\left( \mathcal{S} \right)$ (i.e., their $\lceil \left(1-\alpha\right)\left( \#\Cal+1 \right)\rceil$ smallest value) is computed to define the predictive region: 
$\widehat{C}_{n,\alpha} ( x ) := \{y \in \mathcal{Y} \text{ such that } s(x,y;\hat{f}) \leq q_{1-\alpha}\left( \mathcal{S} \right)  \}$. In regression with the absolute values of the residual score, this reduces to $\widehat{C}_{n,\alpha} ( x ) := \left[ \hat{\mu}(x) \pm q_{1-\alpha}\left( \mathcal{S} \right) \right]$.

This procedure satisfies \Cref{eq:pi_mv} for any $\hat{f}$, any (finite) sample size $n$, as long as the data points are exchangeable.\footnote{Only the calibration and test data points need to be exchangeable.} For more details on split CP, we refer to \citet{angelopoulos-gentle,vovk_algorithmic_2005}, as well as to \citet{manokhin_valery_2022_6467205}.

\subsubsection{\newnested}

From an high level perspective, the idea is to apply split CP on top of an impute-then-predict pipeline (of imputation function $\Phi$), and to modify the calibration step in order to ensure MCV. This is called CP-MDA, for \emph{conformal prediction with missing data augmentation}. Generally, for a given test point $\left(X^{(n+1)},M^{(n+1)}\right)$, CP-MDA works by artificially masking covariates in the calibration set so as to match \emph{at least} the mask of the test point, by creating a new mask $\widetilde{M}^{(k)} = \max\left(M^{(k)},M^{(n+1)}\right)$ for each $k\in\Cal$. In other words, it corresponds to discarding from the calibration set the covariates that are missing in the test point. This leads to $M^{(n+1)}\subseteq\widetilde{M}^{(k)}$, i.e., all over-masked (or \emph{augmented}) points $\left(X^{(k)},\widetilde{M}^{(k)},Y^{(k)}\right)_{k\in\Cal}$ have at least the missing entries of $\left(X^{(n+1)},M^{(n+1)}\right)$. The points such that $\widetilde{M}^{(k)} = M^{(n+1)}$ can be used directly as under distributional assumptions ($\p\mcarymx\iid$), they now have the same mask and distribution as the test point. Yet for many calibration points it remains that $\widetilde{M}^{(k)} \neq M^{(n+1)}$ (precisely, for all the $k\in\Cal$ such that $M^{(k)} \not\subseteq M^{(n+1)}$). This means that those over-masked points follow another conditional distribution than the one of the test point, and MCV can not be directly ensured. 

An idea is to subsample the calibration set so that the effective calibration set contains only $k\in\Cal$ such that $M^{(k)} \not\subseteq M^{(n+1)}$ (i.e., $\widetilde{M}^{(k)} = M^{(n+1)}$) \citep[this is the approach followed in \masksub,][]{pmlr-v202-zaffran23a}. However, this can lead to overly small calibration set size in high dimension, resulting in a large variance (on the coverage level and thus set size). Therefore, two questions naturally arise: 
\begin{itemize}[itemsep=1pt,topsep=1pt]
    \item How to build the calibration set? 
    \item How to leverage the test point so as to account for the different distributions present in the over-masked calibration set---and with many of them not matching the test mask conditional distribution---when constructing the predictive set?
\end{itemize}
The answers we suggest define our generalized framework \newnested, whose pseudo-code is available in \Cref{alg:mda_nested}, and are illustrated in \Cref{fig:mda_scheme}. 

\paragraph{Construction of the calibration set.}

\newnested includes a subsampling step: it calibrates on the set of indices $\widetilde{\Cal} \subseteq \Cal$ provided by the user, where $\widetilde{\Cal}$ can be obtained with any subsampling strategy, that might even be stochastic, as long as the randomness is independent of the covariates and outputs, $\left( X^{(k)},Y^{(k)} \right)_{k\in\Cal\cup\{n+1\}}$ (it can still depend on the masks). The following strategies work if the data distribution belongs to $\p\mcarymx\iid$ (which is an assumption we make anyway when using \newnested since, as we show precisely in \Cref{thm:mcv-nested}, \newnested is typically MCV-$\p\mcarymx\iid$):  
\begin{enumerate}[label=\roman*),itemsep=1pt,topsep=1pt,leftmargin=20pt]
\item subsampling only the indices $\left\{ k\in\Cal: M^{(k)} \subseteq M^{(n+1)} \right\} := \widetilde{\Cal}$ \citep[this is the strategy of \masksub,][]{pmlr-v202-zaffran23a}; 
\item no subsampling, $\widetilde{\Cal} := \Cal$ \citep[this is the path taken by \mask,][]{pmlr-v202-zaffran23a}; 
\item subsampling only the indices $\left\{ k\in\Cal: M^{(k)} \subseteq m' \right\} := \widetilde{\Cal}$, for some $m' \supseteq M^{(n+1)}$;
\item obtain $\widetilde{\Cal}$ by subsampling from the indices $\left\{ k\in\Cal: M^{(k)} \subseteq m' \right\}$, for some $m' \supseteq M^{(n+1)}$, using a mixture distribution, whose weights only depend on $\left(M^{(k)}\right)_{k\in\Cal\cup\{n+1\}}$.
\end{enumerate}
Then, for any $k\in\widetilde{\Cal}$, the over-mask is constructed, defining $\widetilde{M}^{(k)} = \max\left(M^{(k)},M^{(n+1)}\right)$. This is schematized in \Cref{fig:mda_scheme}.

\paragraph{Leveraging temporary test points.} After the subsampling step aforedmentioned, the over-masked calibration points and the test point do not necessarily have the same conditional distribution conditionally to the mask, as $M^{(n+1)}\subseteq \widetilde{M}^{(k)}$ without equality in general. In order to match those distributions, an idea is to create \textcolor{blindgreen}{temporary test points} (one for each calibration point) and to apply $\widetilde{M}^{(k)}$ to it. This is illustrated in \textcolor{blindgreen}{green} in \Cref{fig:mda_scheme}. \newnested evaluates the number of over-masked calibration points that have a conformity score smaller than that of the \textcolor{blindgreen}{\emph{corresponding over-masked test point}} for a given $y \in \mathcal{Y}$. Then, the predictive set includes only the $y\in\mathcal{Y}$ such that this number is small enough (a threshold that depends on $\alpha$ and the effective calibration size). This careful treatment of the test point allows to compare scores obtained from identical distributions conditionally on their associated mask. 

\begin{figure}[!b]
    \centering
    \includegraphics[width=\textwidth]{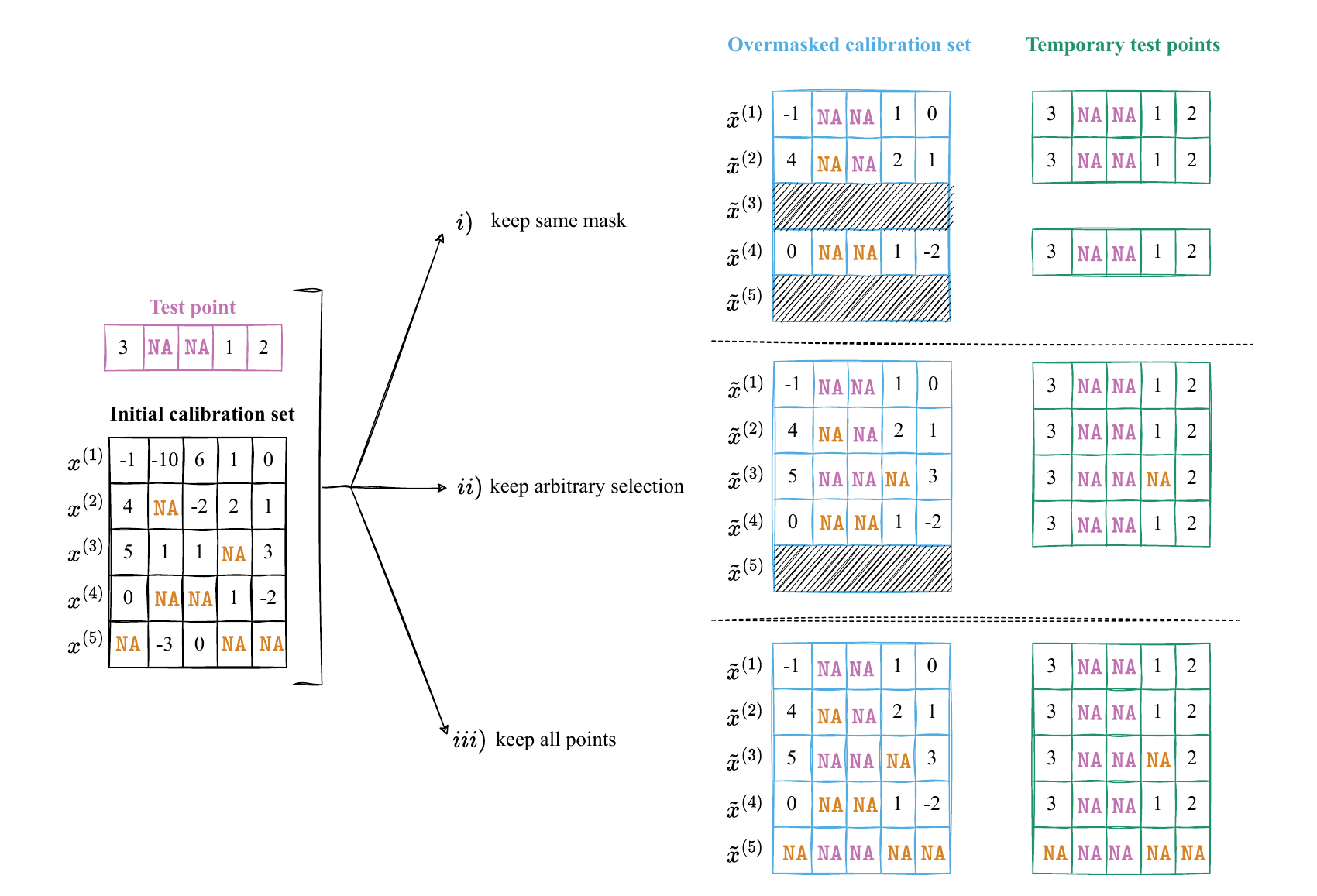}    
    \caption{\newnested illustration. Different subsampling strategies are shown, with their associated \textcolor{blindblue}{over-masked calibration set} and \textcolor{blindgreen}{temporary test points} according to one \textcolor{blindpurple}{test point}.}
    \label{fig:mda_scheme}
\end{figure}

\begin{algorithm}[!htb]
\caption{\newnested}
\label{alg:mda_nested}
\begin{algorithmic}[1] 
\REQUIRE Training set $\left\{ \left(X^{(k)},M^{(k)},Y^{(k)}\right)\right\}_{k=1}^n$, imputation algorithm $\mathcal{I}$, learning algorithm $\mathcal{A}$ taking its values in $\mathcal{F} := \mathcal{Y}^{\mathcal{X}\times\mathcal{M}}$, calibration proportion $\rho\in]0,1]$, $\left\{\Tr, \Cal, \Phi, \hat{A}\right\}$ the output of the splitting \Cref{alg:split_train} ran on $\left\{ \left\{ \left(X^{(k)},M^{(k)},Y^{(k)}\right)\right\}_{k=1}^n, \mathcal{I}, \mathcal{A}, \rho\right\}$, conformity score function $s\left(\cdot,\cdot;f\right)$ for $f \in \mathcal{F}$, significance level $\alpha$,  test point $\left( \textcolor{blindpurple}{ X^{({n+1})},M^{(n+1)} } \right)$, subsampled set of calibration indices $\widetilde{\Cal} \subseteq \Cal$
\ENSURE Prediction set ${\widehat{C}}_{n,\alpha}^{\text{MDA-Nested}^{\star}}\left( X^{(n+1)},M^{(n+1)} \right)$
\STATEx \textcolor{blindblue}{// Generate an over-masked calibration set:}  
\FOR {$k \in \widetilde{\Cal}$}
\COMMENT {Additional nested masking}
\STATE $\widetilde M^{(k)} = \max( M^{(k)}, \textcolor{blindpurple}{M^{(n+1)}} )$ 
\ENDFOR \COMMENT{\textcolor{blindblue}{\;\; Over-masked calibration set generated. //} }
\STATE $\begin{aligned}[t]& {\widehat{C}}_{n,\alpha}^{\text{MDA-Nested}^{\star}} \left( \textcolor{blindpurple}{ X^{(n+1)},M^{(n+1)}} \right) := \Bigl\{ y\in\mathcal{Y}: (1-\alpha)(1+\#\widetilde{\Cal}) > \\
& \sum_{k\in\Cal} \mathds{1} \left\{ s\left( \left(X^{(k)}, \widetilde{M}^{(k)}\right), Y^{(k)};\hat{A}\left(\Phi\left(\cdot,\cdot\right),\cdot\right)\right) \left. \;\; < s\left( \textcolor{blindgreen}{\left(X^{(n+1)}, \widetilde{M}^{(k)}\right)}, y;\hat{A}\left(\Phi\left(\cdot,\cdot\right),\cdot\right)\right) \right\} \right\}\end{aligned}$
\end{algorithmic}
\end{algorithm}

\begin{algorithm}[!htb]
\caption{Split and train}
\label{alg:split_train}
\begin{algorithmic}[1] 
\REQUIRE Imputation algorithm $\mathcal{I}$, learning algorithm $\mathcal{A}$ taking its values in $\mathcal{F} := \mathcal{Y}^{\mathcal{X}\times\mathcal{M}}$, training set $\left\{ \left(X^{(k)},M^{(k)},Y^{(k)}\right)\right\}_{k=1}^n$, calibration proportion $\rho\in]0,1]$
\ENSURE Splitted sets of indices $\Tr$ and $\Cal$, imputation function $\Phi$, fitted predictor $\hat{A}$
\STATE Randomly split $\{1, \ldots, n\}$ into 2 disjoint sets $\Tr$ \& $\Cal$ of sizes $\#\Tr = (1-\rho)n$ and $\#\Cal=\rho n$
\STATE Fit the imputation function: ${\Phi(\cdot,\cdot) \leftarrow \mathcal{I}\left(\left\{ \left( X^{(k)}, M^{(k)} \right), k \in \rm{Tr}\right\}\right)}$ 
\STATE Fit the learning algorithm $\mathcal{A}$: ${\hat{A}\left( \cdot, \cdot \right) \leftarrow \mathcal{A}\left(\left\{ \left( \Phi\left(X^{(k)},M^{(k)}\right), M^{(k)} \right), k \in \rm{Tr}\right\}\right)}$ 
\end{algorithmic}
\end{algorithm}

\subsubsection{Key comments on \newnested}

In summary, \newnested bridges the gap between \masksub and \mask by proposing a tighter generalized framework. On the one hand, \masksub comes with a potentially small calibration set, thus with increased variability. On the other hand, by leveraging all calibration points, including those with very few observed covariates, the average interval length produced by \mask is typically larger than that of \masksub (cf.~\eqref{eq:CIL_isot_exp}). Furthermore, \mask is less generic than CP in the sense that it is specific to predictive \textit{intervals} (unlike \masksub which is as generic as CP and can be plugged with any score function, including classification). Overall, \newnested unifies this framework for any score function and provides high flexibility in the trade-offs between \emph{efficiency} and \emph{variability}:
\begin{itemize}[itemsep=1pt,topsep=1pt]
    \item At the extreme of no subsampling at all, we obtain a generalization of \texttt{CP-MDA-Nested} which encapsulates the classification setting;
    \item This generalization provides tighter sets than that of \texttt{CP-MDA-Nested} in the particular case of interval-based scores (see \Cref{rem:nested_in_newnested});
    \item At the other extreme of the strictest subsampling procedure, we retrieve \texttt{CP-MDA-Exact};
    \item Any other less restrictive subsampling (possibly with a random selection between various augmented mask) belongs to this framework, providing more flexibility in the trade-offs between exact validity and statistical variability.
\end{itemize}
This overview is summarized in \Cref{tab:nested_summary}.

\begin{table}[!b]
    \centering
\begin{center}
\resizebox{\linewidth}{!}{
\begin{tabular}{lccc}
\toprule
 Method    &  \masksub & \newnested \textbf{(new)}  & \mask \\
 \midrule
 Size of actual calibration set & $\#$ points in $\Cal$ with $M\subseteq M^{(n+1)}$ &  Any  & $\# \Cal$\\
 Mask of the points used for calibration & exactly $M^{(n+1)}$ &  &  all, leading to $\widetilde{M}$ s.t. $M^{(n+1)} \subseteq \widetilde{M}$\\
 Overall behavior  & \textcolor{mLightBrown}{Too few Cal points $\to$ high coverage variance} & Flexible & \textcolor{mLightBrown}{Too large intervals (cf. \eqref{eq:CIL_isot_exp})}\\
 \midrule
 Applies to classification &\cmarkuq &\cmarkuq (new) & \xmarkuq \\
 Outputs non-interval sets &\cmarkuq &\cmarkuq (new)& \xmarkuq \\
 \midrule
 Marginal  guarantee (MV)  &\cmarkuq &\cmarkuq (new) & \cmarkuq (new)\\
 Conditional guarantee (MCV)  &\cmarkuq  & \cmarkuq (new)&\cmarkuq (new) \\
\bottomrule
\end{tabular}
}
\end{center}
    \caption{Summary of the high-level characteristics of MDA algorithms, coming from the literature, as well as our novel contributions indicated by ``(new)''. Characteristics are given for a test point $\left(X^{(n+1)}, Y^{(n+1)}, M^{(n+1)}\right)$. Details regarding guarantees are given in \Cref{tab:nested_guarantees}.}
    \label{tab:nested_summary}
\end{table}

In the case where the nested predictive sets are intervals and $\widetilde{\Cal} = \Cal$, then the final predictive sets obtained through \newnested are included in the ones of \mask.

\begin{remark}
When $\widetilde{\Cal} = \Cal$, and using absolute value of the residuals scores or conformalized quantile regression scores \citep{romano_conformalized_2019}, or any score such that $\{y \in \mathcal{Y} \text{ such that } s(x,y;\hat{f}) \leq b \}$ for some $b$ is an interval, then ${\widehat{C}}_{n,\alpha}^{\text{MDA-Nested}^{\star}}(\cdot) \subseteq {\widehat{C}}_{n,\alpha}^{\text{MDA-Nested}}(\cdot)$ (see \Cref{app:nested}).
\label[remark]{rem:nested_in_newnested}
\end{remark}

This unification allows us to provide theoretical guarantees, stated in \Cref{sec:nested_guar}, leveraging the deep connections between \newnested and leave-one-out conformal methods \citep[such as][]{barber2021jackknife,gupta}. Indeed, the rationale for predicting on masked test points, using the augmented calibration masked, is that we want to treat the test and calibration points in a symmetric way. We summarize them in the following \Cref{tab:nested_guarantees}.
\begin{table}
    \begin{center}
  \resizebox{\linewidth}{!}{
  \begin{tabular}{lcc}
    \toprule
    Guarantees &  MV & MCV \\ 
    \midrule
    \texttt{CP-MDA-Exact} & $\p\iid\mcarymx$, level $\alpha$, & $\p\iid\mcarymx$, level $\alpha$, \\
    i.e., \newnested with subsampling & with upper bound, & with upper bound, \\
    only $k\in\Cal$ s.t. $M^{(k)} \subseteq M^{(n+1)}$& from \citet{pmlr-v202-zaffran23a} & from \citet{pmlr-v202-zaffran23a} \\
    \midrule
    \newnested & $\p\iid\mcarymx$, level $2\alpha$ & $\p\iid\mcarymx$, level $2\alpha$ \\
    \midrule
    \newnested without subsampling & $\p\exch$, level $2\alpha$ & $\p\iid\mcarymx$, level $2\alpha$ \\
    \bottomrule
    \end{tabular}
  }
    \end{center}
    \caption{Theoretical guarantees of \newnested depending on the subsampling choice.}
    \label{tab:nested_guarantees}
\end{table}

\subsection{Theoretical guarantees on CP-MDA-Nested and \newnested leveraging their connection to leave-one-out CP}
\label{sec:nested_guar}
Hereafter, we give our theoretical results on the coverage of our \newnested algorithm.

\begin{theorem}[Marginal validity of \newnested]
\newnested with $\widetilde{\Cal} = \Cal$ (and thus \mask) is MV-$\p\exch$ at the level $1-2\alpha$.
\label{thm:mv-nested}
\end{theorem}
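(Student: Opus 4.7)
The plan is to adapt the jackknife+-style tournament analysis hinted at in the paragraph preceding the statement. First I condition on $\Tr$, which fixes the imputation $\Phi$ and the predictor $\hat{A}$; by \Cref{ass:iid}, conditioning an exchangeable sequence on the complementary indices preserves exchangeability, so $\bigl(X^{(k)}, M^{(k)}, Y^{(k)}\bigr)_{k \in \Cal \cup \{n+1\}}$ is itself exchangeable. For each pair $i \neq j$ in $\Cal \cup \{n+1\}$ I introduce the pair-symmetric over-mask $\widetilde{M}_{ij} := \max\bigl(M^{(i)}, M^{(j)}\bigr) = \widetilde{M}_{ji}$ and the pairwise scores $S^{\{i,j\}}_{\ell} := s\bigl((X^{(\ell)}, \widetilde{M}_{ij}), Y^{(\ell)}; \hat{A}(\Phi(\cdot,\cdot),\cdot)\bigr)$ for $\ell \in \{i,j\}$.

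I then define the tournament indicator $B_{ij} := \mathds{1}\bigl\{S^{\{i,j\}}_{j} < S^{\{i,j\}}_{i}\bigr\}$, reading as ``$i$ beats $j$'' in their pair, which satisfies $B_{ij} + B_{ji} \leq 1$ in every realization (with equality only when the two scores differ), together with the win counts $W_i := \sum_{j \neq i} B_{ij}$. Writing $N := 1 + \#\Cal$, \Cref{alg:mda_nested} evaluated at $y = Y^{(n+1)}$ has as its counter exactly $W_{n+1}$, so the miscoverage event is precisely $\bigl\{W_{n+1} \geq (1-\alpha) N\bigr\}$.

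The combinatorial core is a strange-set bound in the style of \citet{barber2021jackknife}: if $s$ indices satisfy $W_i \geq (1-\alpha) N$, splitting their total wins into intra-strange ones (at most $\binom{s}{2}$) and wins against the complement (at most $s(N-s)$) yields $s(1-\alpha)N \leq \binom{s}{2} + s(N-s)$, which rearranges to $s < 2\alpha N$. Since $\widetilde{M}_{ij}$ is pair-symmetric and $\hat{A}, \Phi$ are $\Tr$-measurable, any relabeling $\sigma$ of $\Cal \cup \{n+1\}$ sends $(B_{ij})$ to $(B_{\sigma(i)\sigma(j)})$, so exchangeability of the sub-sequence lifts to the vector $(W_i)_{i \in \Cal \cup \{n+1\}}$. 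Consequently
\begin{equation*}
\mathds{P}\bigl(W_{n+1} \geq (1-\alpha) N \mid \Tr\bigr) = \frac{1}{N}\,\mathds{E}\bigl[s \mid \Tr\bigr] < 2\alpha,
\end{equation*}
and averaging over $\Tr$ delivers the claimed $1 - 2\alpha$ coverage.

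The step I expect to handle most carefully is this equivariance that transfers exchangeability from the data to the vector $(W_i)$: the pair-dependent over-mask would derail a naive split-CP argument, but the symmetry $\widetilde{M}_{ij} = \widetilde{M}_{ji}$ inherent to $\max$ both enforces $B_{ij} + B_{ji} \leq 1$ realization-by-realization and makes the tournament matrix permute equivariantly under relabelings. This is precisely what enables the jackknife+ strange-set bound and accounts for the familiar factor-of-two loss from $\alpha$ to $2\alpha$ characteristic of leave-one-out conformal methods.
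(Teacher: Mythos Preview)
Your proof is correct and follows essentially the same jackknife+ tournament argument as the paper's proof in \Cref{app:nested_mv}: the pair-symmetric over-mask $\widetilde{M}_{ij}=\max(M^{(i)},M^{(j)})$, the comparison matrix (your $B_{ij}$ is exactly the paper's $\mathcal{C}_{i,j}$), the deterministic strange-set bound of \citet{barber2021jackknife}, and exchangeability of the calibration-plus-test block to equalize the probability of each index being ``strange''. The only cosmetic differences are that you explicitly condition on $\Tr$ upfront (the paper leaves this implicit by treating $\mathcal{C}$ as a function of the full sequence and permuting only within $\Cal\cup\{n+1\}$) and that you spell out the combinatorial count rather than citing it.
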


\Cref{thm:mv-nested} provides a lower bound on \newnested and \mask coverage as $1-2\alpha$.  This result is important as it equips \newnested with $\widetilde{\Cal} = \Cal$ and \mask with controlled coverage on any exchangeable distribution: they are marginally valid even on MNAR distributions or when \ynotindmx. It means that despite modifying the data set independently from $X$ and $Y$ and breaking the structure of $(X,M,Y)$, the obtained estimator makes reliable predictions including when $X,M$, and $Y$ are strongly dependent. This originates from the fact that the whole data set has been treated equally, including the test~point.  

\begin{theorem}[Conditional validity of \newnested]
\newnested with $\widetilde{\Cal}$ independent of the data set $\left( X^{(k)},Y^{(k)}\right)_{k\in\Cal\cup\{n+1\}}$ (and thus \mask) is MCV-$\p\mcarymx\iid$ at the level $1-2\alpha$.
\label{thm:mcv-nested}
\end{theorem}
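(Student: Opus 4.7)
The plan is to condition down to a setting where all masks and the trained model are fixed but the calibration/test $(X,Y)$ pairs remain i.i.d., and then invoke a jackknife+-style tournament argument that mirrors the analysis used for leave-one-out conformal methods \citep{barber2021jackknife}.

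First, I would fix $m \in \mathcal{M}$ with $\mathds{P}(M^{(n+1)}=m)>0$ and condition on the event $\{M^{(n+1)}=m\}$, on all other masks $(M^{(k)})_{k\in\Cal}$, on the full training data indexed by $\Tr$, and on the realization of $\widetilde{\Cal}$. Under $\p\mcarymx\iid$, MCAR gives $M\ind X$ and \yindmx gives $M \ind Y \mid X$, which combined yield $M \ind (X,Y)$; together with the i.i.d.\ structure and the hypothesis that $\widetilde{\Cal}$ is independent of $(X^{(k)}, Y^{(k)})_{k \in \Cal \cup \{n+1\}}$, this conditioning leaves $(X^{(k)}, Y^{(k)})_{k \in \widetilde{\Cal} \cup \{n+1\}}$ i.i.d.\ from $P_{X,Y}$ while fixing the predictor $\hat A$, the imputation $\Phi$, and every augmented mask $\widetilde{m}^{(k)} := \max(M^{(k)}, m)$. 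It therefore suffices to prove the conditional $(1-2\alpha)$-coverage bound in this simplified setting and integrate back.

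Next, I would introduce a tournament structure. For every ordered pair $(k,j)$ with $k \neq j$ in $\widetilde{\Cal}\cup\{n+1\}$, set
\begin{equation*}
\widetilde m_{k,j} := \max(M^{(k)},M^{(j)}), \qquad V_{k,j} := s\bigl((X^{(k)}, \widetilde m_{k,j}), Y^{(k)};\hat{A}\bigr), \qquad A_{k,j} := \mathds{1}\{V_{k,j} < V_{j,k}\}.
\end{equation*}
Two properties are crucial. Mask symmetry: $\widetilde m_{k,j}=\widetilde m_{j,k}$, so swapping the i.i.d.\ $(X,Y)$-values at indices $k$ and $j$ swaps $V_{k,j}$ with $V_{j,k}$ without altering anything else, forcing $A_{k,j}\stackrel{d}{=}A_{j,k}$. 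Tournament inequality: at most one of $V_{k,j}<V_{j,k}$ and $V_{j,k}<V_{k,j}$ can hold, so $A_{k,j}+A_{j,k}\leq 1$ almost surely. Specializing to $j=n+1$ recovers the algorithm's scores $V_{k,n+1}=S_k$ and $V_{n+1,k}=T_k$, so the miscoverage event becomes exactly $\{C_{n+1}\geq(1-\alpha)(1+|\widetilde{\Cal}|)\}$ with $C_{n+1}=\sum_{k\in\widetilde{\Cal}} A_{k,n+1}$.

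I would then run the column-counting argument of \citet{barber2021jackknife}. Summing the tournament inequality over all pairs shows that the set of ``strange'' columns $\mathcal J := \{j\in\widetilde{\Cal}\cup\{n+1\}: C_j \geq (1-\alpha)(1+|\widetilde{\Cal}|)\}$ has cardinality at most $2\alpha(1+|\widetilde{\Cal}|)$ almost surely. Combined with the pairwise law identity $A_{k,j}\stackrel{d}{=}A_{j,k}$ and the data-exchangeability of $(X^{(k)},Y^{(k)})_{k\in\widetilde{\Cal}\cup\{n+1\}}$, the marginal probability that any specific index lies in $\mathcal J$ is at most $\mathds{E}[|\mathcal J|]/(1+|\widetilde{\Cal}|)\leq 2\alpha$, so $\mathds{P}(C_{n+1}\geq(1-\alpha)(1+|\widetilde{\Cal}|))\leq 2\alpha$. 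Integrating over the conditioning gives MCV-$\p\mcarymx\iid$ at level $1-2\alpha$.

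The main obstacle is justifying the last averaging step. In the canonical jackknife+ proof, one permutes all indices simultaneously and invokes full exchangeability of the tournament matrix; in our setting, different columns use different augmented masks $\widetilde m_{\cdot,j}$, so an arbitrary data-permutation does not leave the joint law of $(A_{k,j})$ invariant. The argument must therefore rely only on pairwise $k\leftrightarrow j$ data swaps --- which do preserve $\widetilde m_{k,j}$ pointwise and thus exchange $V_{k,j}$ with $V_{j,k}$ --- and carefully aggregate these pairwise symmetries against the deterministic bound on $|\mathcal J|$ to obtain the single-index marginal bound $\mathds{P}(n+1 \in \mathcal J)\leq 2\alpha$.
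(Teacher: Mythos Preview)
Your plan sets up the tournament correctly and you rightly flag the obstacle in your last paragraph, but the proposed resolution does not close it. After conditioning on \emph{all} calibration masks, the comparison matrix is no longer permutation-invariant: column $j$ is built from the family of augmented masks $(\max(M^{(k)},M^{(j)}))_k$, which varies with $j$, so a transposition of the $(X,Y)$-data that swaps $n+1$ with some $j$ does \emph{not} map $C_{n+1}$ to $C_j$ (for $k\neq j,n+1$ the two summands use different masks $\max(M^{(k)},M^{(n+1)})$ versus $\max(M^{(k)},M^{(j)})$). The pairwise identity $A_{k,j}\stackrel{d}{=}A_{j,k}$ concerns only individual entries and tells you nothing about the column sums $C_j$; in particular it does not give $\mathds P(j\in\mathcal J)$ constant in $j$, which is exactly what you need to pass from the deterministic bound $\mathds E[|\mathcal J|]\le 2\alpha(1+|\widetilde{\Cal}|)$ to $\mathds P(n+1\in\mathcal J)\le 2\alpha$. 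There is no known way to ``aggregate pairwise swaps'' into that equal-probability statement.

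The paper closes this gap by conditioning only on $M^{(n+1)}=m$ and \emph{keeping the calibration masks random}. Under $\p\mcarymx\iid$, the $(M^{(k)})_{k\in\widetilde{\Cal}}$ are independent of all $(X^{(\ell)},Y^{(\ell)})$, so one rewrites each comparison as evaluating both $(X^{(k)},m,Y^{(k)})$ and $(X^{(n+1)},m,Y^{(n+1)})$ with the predictor $\hat g_{M^{(k)}}$, where $\hat g_\nu(\cdot,\cdot):=\hat A\bigl(\Phi(\cdot,\max(\cdot,\nu)),\max(\cdot,\nu)\bigr)$. This is precisely leave-one-out conformal prediction on the exchangeable sequence $(X^{(k)},m,Y^{(k)})_{k\in\widetilde{\Cal}\cup\{n+1\}}$ with a \emph{randomized} symmetric algorithm that ignores its training fold and returns $\hat g_\nu$ for a freshly drawn $\nu$; the leave-one-out theory for randomized algorithms then yields full permutation-invariance of the comparison matrix and hence the averaging step. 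In short, the missing idea in your plan is to absorb the calibration masks into exogenous algorithmic randomness rather than conditioning them away.
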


The proofs of \Cref{thm:mv-nested,thm:mcv-nested} are deferred to \Cref{app:nested_mv} and \Cref{app:nested_mcv} respectively. They are heavily based on the deep connections between \newnested with Jackknife+ and general leave-one-out or $k$-fold CP \citep{barber2021jackknife,Vovk2013,gupta}. Indeed,  one can observe that, for each $k\in\Cal$, we evaluate the conformity score of the test point $(X^{(n+1)},M^{(n+1)},Y^{(n+1)})$ using the $k$-th augmented mask, as the equivalent of evaluating the conformity score of the test point with the fitted model that has left-out the $k$-th calibration point. This connection between \newnested and leave-one-out conformal approaches directly stems from the same core motivations: $i)$ both enforce a design that use all the observations of the training or calibration sets to handle small sample sizes, $ii)$ both need to avoid invalid designs that arise naturally when keeping all these points, such as comparing scores obtained with different predictors. 

\paragraph{On the factor 2 and link with empirical quantile aggregation.} Despite the coverage guarantee being of $1-2\alpha$ instead of the desired $1-\alpha$, in practice, our experiments in \Cref{sec:exp} show that \newnested without subsampling (i.e., \mask) tends to over-cover. This aligns with Figure 2 of \citet{barber2021jackknife}, that illustrates the fact that leave-one-out conformal methods achieve empirically exactly $1-\alpha$ coverage, while $K$-fold conformal approaches over-cover. The reason behind this phenomenon is still unclear in the community, and is likely to be the same than the reason for \newnested over-coverage, as one can see \newnested as having access to many folds of calibration points, since each augmented calibration mask typically appears many times in the calibration set. 
In particular, \citet{pmlr-v202-zaffran23a} provide MCV-$\p\iid\mcarymx$ guarantees at the level $1-\alpha$ on a modified version of \mask which leverages this folding point of view by calibrating only on one (arbitrarily) chosen such fold. Similarly than for $K$-fold and leave-one-out conformal methods, we can look at \newnested as a way to aggregate many valid empirical quantiles or $p$-values, one for each fold, i.e., one for each augmented mask. Due to the strong dependencies between these random variables, such an aggregation does not lead to a valid aggregated quantile or $p$-value, and induces a loss of coverage. 

\paragraph{\Cref{thm:mcv-nested} proof approach: coupling our algorithm with a leave-one-out conformal method on a virtual complete data set.} We work conditionally to the mask of the test point, $M^{(n+1)}$. Then, we introduce a randomized predictor, for which ``training'' consists in randomly picking one individual predictor among a bag of individual predictors, each of them corresponding to an augmented calibration mask. This bag contains exactly $2^{|\obs(M^{(n+1)})|}$ possible individual predictors, where $|\obs(M^{(n+1)})|$ is the number of 1s in $M^{(n+1)}$, i.e., the number of observed features in the test point. Each individual predictor in the bag is thus parametrized by a \textit{super/over-mask} of $M^{(n+1)}$. We call such a predictor a mixture-predictor, as it basically consists in picking randomly one individual predictor in a mixture of individual predictors. That sampling has to be made independently of the data the mixture predictor is applied to, but non necessarily uniformly.
Furthermore, we ensure that the individual predictor indexed by a mask $M$ only relies on the covariates $X_{\obs(M)}$ for the prediction, in order for this algorithm to be applicable in practice (e.g., an invalid design would be individual predictors that require the knowledge of some of the $X_{\mis(M)}$, unobserved in practice, in order to make predictions).

We then show that our algorithm \newnested, applied to the data set with missing entries $\left(X^{(k)}_{\obs\left(M^{(k)}\right)}, Y^{(k)}, M^{(k)}\right)_{k=1}^{n+1}$, has the same guarantees in expectation as the leave-one-out conformal that uses the mixture predictor, applied onto a virtual complete data set $\left(X^{(k)},Y^{(k)}\right)_{k=1}^{n+1}$, if we make some assumptions on the missingness distribution. More specifically, we show that \textit{there exists a coupling between the two algorithms}, that ensures that the output (and thus coverage) have the same distribution. This ultimately enables us to reuse existing guarantees for leave-one-out conformal estimators. 

\section{A practical glimpse on the impacts of breaking the distribution's assumptions}
\label{sec:exp}

In this concluding section, we investigate the numerical performances of \newnested mainly outside its theoretical set of assumptions. Experiments under $\p\mcarymx$ are provided in \Cref{sec:axp_mcar}, then \Cref{sec:exp_beyond_mcar} presents numerical results when the data distribution either belongs to $\p\mar$ or $\p\mnar$, and finally \Cref{sec:exp_ymx} reports empirical performances when \ynotindmx. 

In all experiments, the data are imputed using iterative regression (\texttt{iterative ridge} implemented in Scikit-learn, \citet{scikit-learn}). The predictive models are fitted on the imputed data concatenated with the mask. The prediction algorithm is a neural network, fitted to minimize the pinball loss \citep{chr}. For the vanilla QR, we use both the training and calibration sets for training. The training set contains 500 data points, and the calibration set 250 data points. To evaluate the marginal coverage, a test set is generated with missing values following the same distribution as on the training and calibration sets. Then, to estimate mask-conditional coverage (i.e., $\mathds{P} (Y \in \widehat C_{n,\alpha}(X,m) | M = m )$ for each $m \in \mathcal{M}$), we generate another test set by imposing that the number of observations per pattern is fixed to 100, in order to ensure that the variability is not impacted by $\mathds{P}\left( M = m \right)$. Each experiment is repeated 100 times (unless stated otherwise). 

\subsection{Experiments under $\p\mcarymx$}
\label{sec:axp_mcar}

\textbf{Data generation.}  The data is generated with $d=10$ according to \Cref{mod:glm} (regression), $Y = \beta^T X + \varepsilon$ with $X \sim \mathcal{N}\left(\mu, \Sigma \right)$, $\mu = (1,\cdots,1)^T$ and $\Sigma = \varphi (1,\cdots,1)^T(1,\cdots,1)+(1-\varphi)I_d$, $\varphi\in\{0,0.8\}$ depending on the experiment, Gaussian noise $\varepsilon \sim \mathcal{N}(0,1) \ind (X,M)$ and the following regression coefficients $\beta = (1, 2, -1, 3, -0.5, -1, 0.3, 1.7, 0.4, -0.3)^T$. Each of these 10 features is missing with probability $0.2$ independently from anything else. 

\subsubsection{\newnested provides flexibility} 

In our first experiments, we compare CQR to \masksub and \mask, as well as \newnested where we subsample all the calibration points that have at most two features that are missing among the observed features of the test point. As $d=10$, there are 1024 different patterns, avoiding to display the performances of the algorithms on each of the patterns. Therefore, instead, we represent the coverage and the length of the predictive intervals depending on the mask size, a proxy for mask-conditional coverage. For each pattern size, 100 observations are drawn according to the distribution of $M | \text{size}(M)$ in the test set. In this subsection only, the number of repetition is of 50.

\begin{figure}[!b]

    \centering
    \begin{subfigure}{\textwidth}
        \centerline{\includegraphics[width=\textwidth]{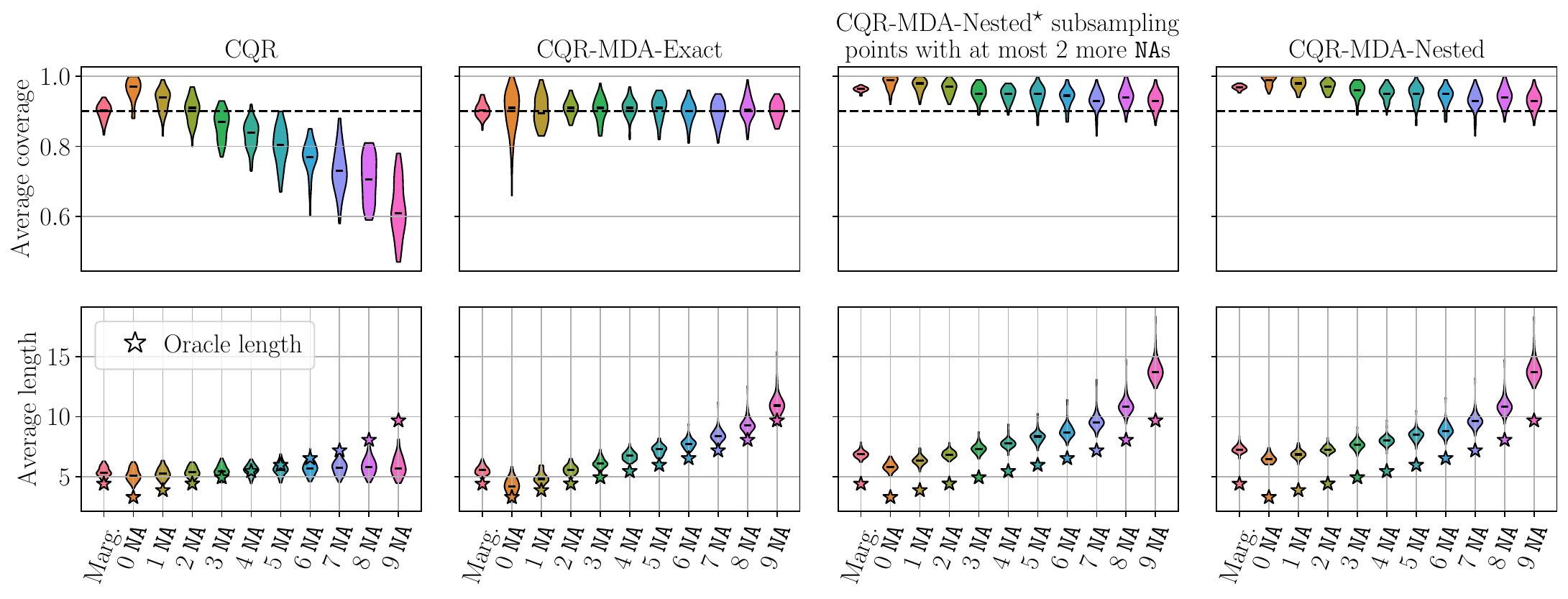}}
        \caption{Each features is missing with probability 0.2.}
        \label{fig:mcar_newnested}
    \end{subfigure}
    \hfill
    \begin{subfigure}{\textwidth}
        \centerline{\includegraphics[width=\textwidth]{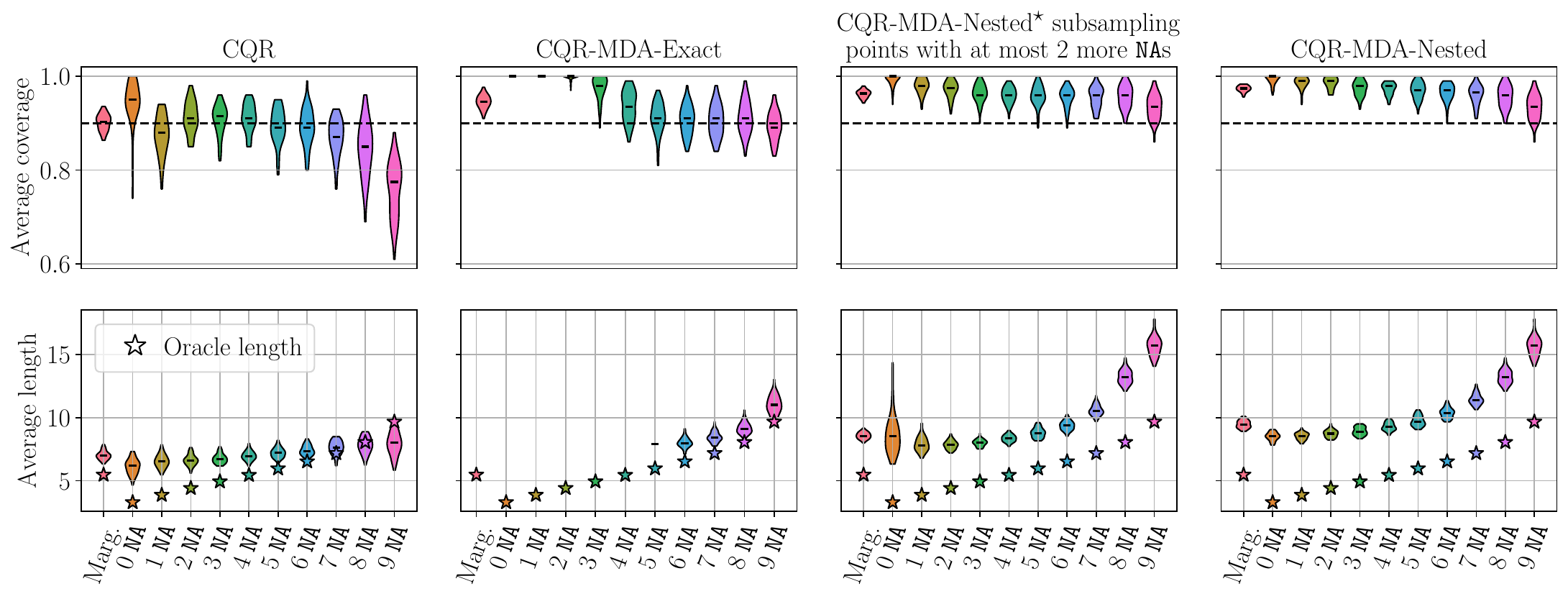}}
        \caption{Each features is missing with probability 0.4.}
        \label{fig:mcar_newnested_40}
    \end{subfigure}
    \caption{Validity and efficiency with \textbf{MCAR missing values} on dependent Gaussian features, with $\varphi = 0.8$, and such that \yindmxbold. Average coverage (top) and length (bottom) as a function of the missing pattern sizes. The black horizontal line in each violin plot corresponds to the median. The first violin plot shows the marginal coverage. The marginal test set includes 2000 observations. The mask-conditional test set includes 100 individuals for each missing data pattern size.}
    \label{fig:mcar_newnested_both}
    \end{figure}

\Cref{fig:mcar_newnested} displays the results of this experiment. As noticed in \citet{pmlr-v202-zaffran23a}, CQR is not MCV-$\p\mcarymx\iid$ as its intervals undercover or overcover depending on the number of missing values. The three versions of \newnested ensure that the coverage is at least $1-\alpha$ for any pattern size, as supported by our theory (\Cref{sec:nested_guar})\footnote{Note that MCV implies validity on any mask size, but not the contrary.} Comparing \masksub and \mask, we observe that \masksub is more efficient as it produces smaller intervals and its coverage is exactly of $1-\alpha$ on average, while suffering for more variability than \mask. The intermediate version of \newnested allows to reduce \masksub variability while improving the efficiency of the intervals by 5.5\% marginally (the comparison consists in computing the difference between \newnested and \mask interals' median length, and normalize it by \mask intervals' median length), reaching nearly 10\% of improvement on the test points that have no missing values. For 7 to 9 missing values, this \newnested is equivalent to \mask as the subsampling scheme of \newnested boils down to keeping all the calibration points on these patterns.

\mask reveals all its interest over \masksub in settings where the exact subsampled calibration set contains really few points for some masks (e.g., in high dimension or when the probability of missing values is high). In \Cref{fig:mcar_newnested_40}, the probability of each features being missing is increased to 0.4. We observe that \masksub outputs infinite intervals more than half of the time on the marginal test, as well as on the test sets containing between 0 and 4 missing values. This is particularly unpractical. On the contrary, \mask produces finite length intervals on any test point, at the cost of being overly conservative. The improvements brought by \newnested with subsampling only the calibration points with at most 2 additional missing values are more stringent. In particular, the efficiency is improved by nearly 9.5\% marginally, and is in between 8.5\% and 10\% on test points that have between 1 and 6 missing values. 

\begin{figure}[!b]
\centerline{\includegraphics[width=0.5\textwidth]{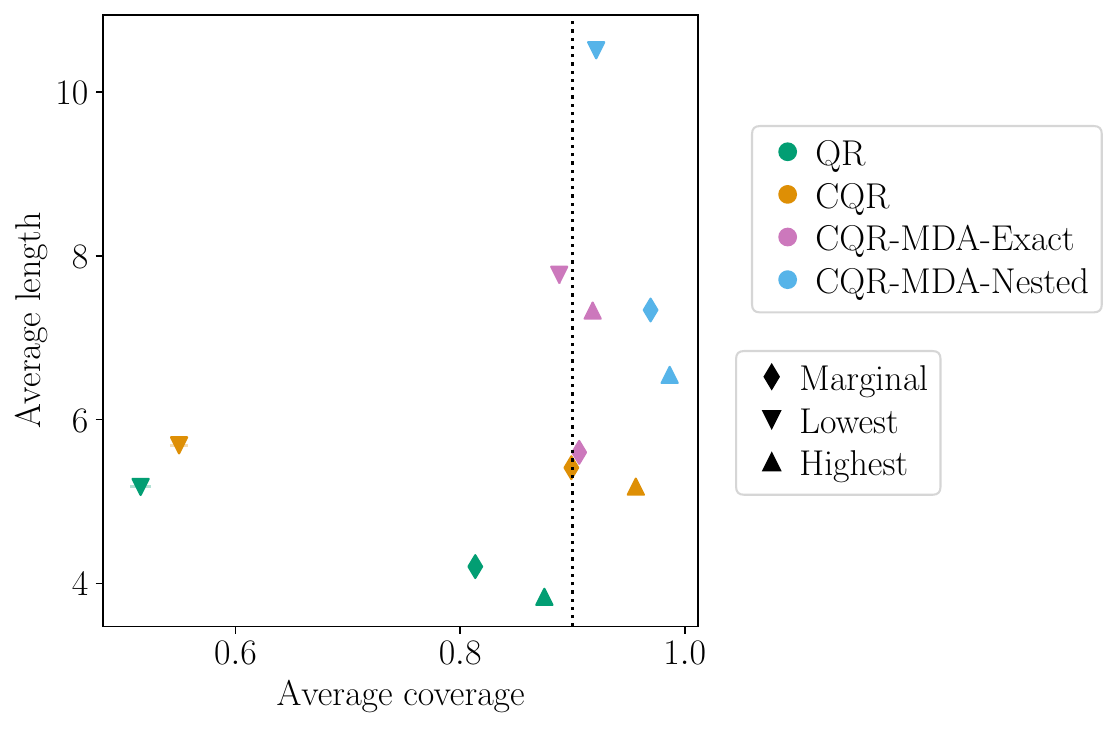}}
\caption{Validity and efficiency with \textbf{MCAR missing values} on dependent Gaussian features, with $\varphi = 0.8$, and such that \yindmxbold. Colors represent the methods. Diamonds ($\blacklozenge$) represent marginal coverage while the patterns giving the lowest and highest coverage are represented with triangles ($\blacktriangledown$ and $\blacktriangle$). Vertical dotted lines represent the target coverage of 90\%. Experimental details: $\#\Tr = 500$; $\#\Cal = 250$; the marginal test set includes 2000 observations; the mask-conditional test set includes 100 individuals for each missing data pattern.}
\label{fig:mcar}
\end{figure}

Note that this is only one example of \newnested for a given subsampling strategy, and that in practice the choice of strategy is highly dependent on the settings and could lead to even better performances. From now on, we restrict the subsequent experiments with \newnested to the two extremes---\masksub and \mask---as their main goal is to investigate the robustness beyond $\p\mcarymx$. For the same reason, we do not want to restrict ourselves to the mask-size conditional coverage, as it does not imply mask conditional coverage. Therefore, we use another visualization approach that was introduced in \citet{pmlr-v202-zaffran23a}. As an appetizer, \Cref{fig:mcar} presents the results under $\p\mcarymx\iid$ for QR, CQR, \masksub and \mask, using this visualization. The $x$-axis represents the average coverage and the average length is in the $y$-axis. The marker colors are associated to the different methods. A method is MCV if all the markers of its color are at the right of the vertical dotted line (90\%). The design of \Cref{fig:mcar}, and the following figures, requires a cautious interpretation. For each method we report, for the pattern having the highest (or lowest) coverage, its length and coverage. However, as this pattern may depend on the method, the length for the highest/lowest should not be directly compared between methods.

This \Cref{fig:mcar} illustrates that \newnested is MCV-$\p\iid\mcarymx$. Our hardness results of \Cref{sec:hardness} provide a new perspective on these results:
\begin{remark}
If \newnested was MCV on a broader class of distributions than $\p\iid\mcarymx$ for which a hardness result exists, then it would produce uninformative intervals on any distribution within this class, including $\p\iid\mcarymx$. Therefore, the fact that \newnested obtain finite length intervals in this experiment (\Cref{fig:mcar}) tends to confirm (with high probability) that the theory on the \newnested MCV can not be extended to $\p\iid\ymx$ or $\p\iid\mar$ nor $\p\iid\mcar$. This analysis is included in \Cref{tab:sum_res}, as a numerical confirmation on \newnested theory. 
\label[remark]{rem:nested_not_broader_mcv}
\end{remark}

\subsection{Beyond MCAR}
\label{sec:exp_beyond_mcar}

\paragraph{Beyond MCAR experiments.} To generate missing values under MAR or MNAR distribution, we select 6 variables (denote this set $X_\missing$) out of 10 that can be missing (the 4 others form the set $X_\observed$). Especially, $X_\missing = \{X_1,X_2,X_3,X_5,X_8,X_9\}$ in order to include different range of associated regression coefficients. We used the GitHub repository associated to \citet{muzellec2020} in order to introduce missing values in $X_\missing$ according to the following mechanisms, fixing the proportion of missing entries to be 20\%. For each of these following settings, we run two sets of experiments: one in which the correlation between the features is high ($\varphi = 0.8$) and therefore imputing through iterative regression allows to recover quite accurately the missing values, and one in which the features are independent ($\varphi = 0$) leading to an imputation that can not be better than the marginal expectation of the features.

$\blacktriangleright$ \underline{MAR experiments (\Cref{fig:mar}).} Missing values in $X_\missing$ are introduced under a MAR mechanism. To do so, a logistic model of arguments $X_\observed$ determines the probability of the variables in $X_\missing$ to be missing. This setting is declined 5 times, with different weights for the logistic model. Within each one, the experiments are repeated 100 times to assess for the variability. 

$\blacktriangleright$ \underline{MNAR self-masked (\Cref{fig:mnar_sm}).} Missing values in $X_\missing$ are introduced under a MNAR self masked mechanism. To do so, a logistic model determines the probability of each variable in $X_\missing$ to be missing by taking as input the exact same variable. This setting is declined 5 times, with different weights for the logistic model. Within each one, the experiments are repeated 100 times to assess for the variability.

$\blacktriangleright$ \underline{MNAR quantile censorship (\Cref{fig:mnar_q}).} Missing values in $X_\missing$ are introduced under a quantile censorship MNAR mechanism. In particular, missing values are introduced at random in each $q$-quantile of the variables in $X_\missing$. $q$ varies between 0.5, 0.75, 0.8, 0.85, 0.9 and 0.95 and this way we obtain 6 different settings. Within each one, the experiments are repeated 100 times to assess for the variability.

\begin{figure}

    \centering
    \begin{subfigure}{\textwidth}
        \centerline{\includegraphics[width=\textwidth]{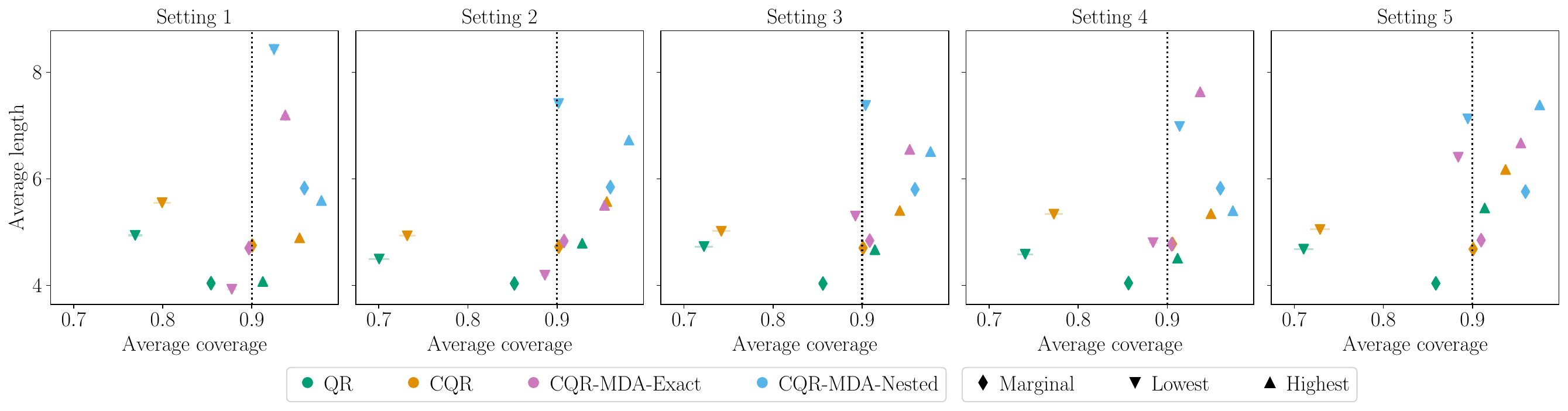}}
        \caption{Dependent Gaussian features, with $\varphi = 0.8$.}
        \label{fig:mar_dep}
    \end{subfigure}
    \hfill
    \begin{subfigure}{\textwidth}
        \centerline{\includegraphics[width=\textwidth]{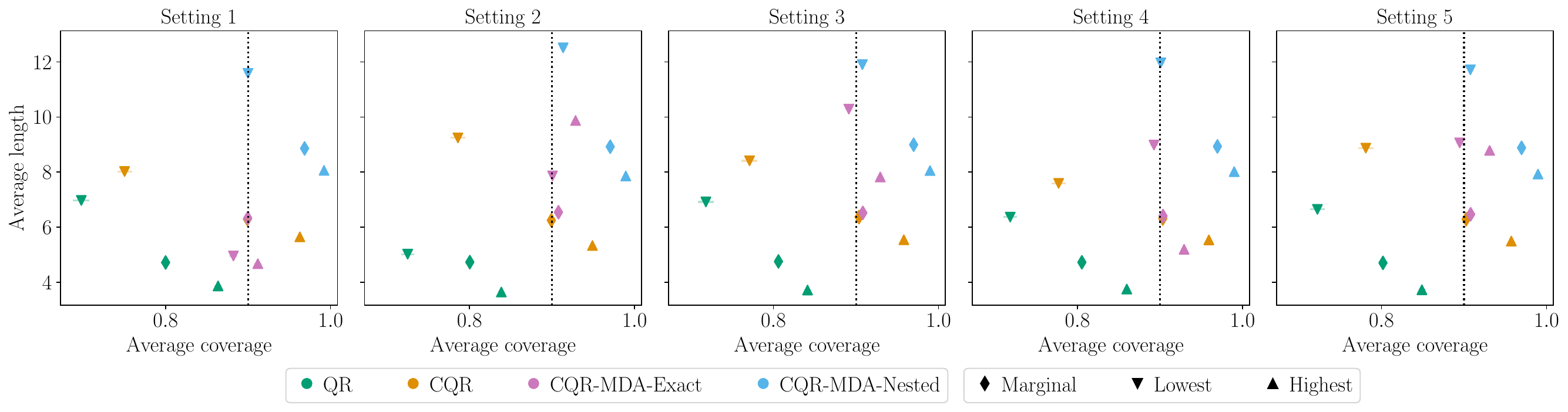}}
        \caption{Independent Gaussian features.}
        \label{fig:mar_no_dep}
    \end{subfigure}

\caption{Same caption than \Cref{fig:mcar}, for \textbf{MAR missing values}, each panel representing a different setting (set of parameters) for the missingness distribution.}
\label{fig:mar}
\end{figure}

\begin{figure}

    \centering
    \begin{subfigure}{\textwidth}
        \centerline{\includegraphics[width=\textwidth]{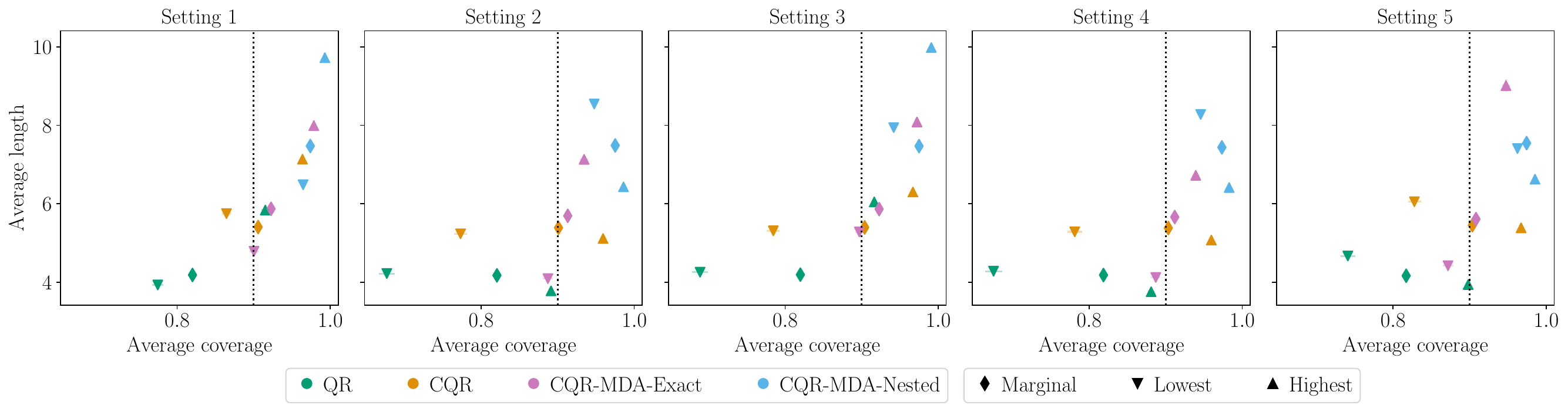}}
        \caption{Dependent Gaussian features, with $\varphi = 0.8$.}
        \label{fig:mnar_sm_dep}
    \end{subfigure}
    \hfill
    \begin{subfigure}{\textwidth}
        \centerline{\includegraphics[width=\textwidth]{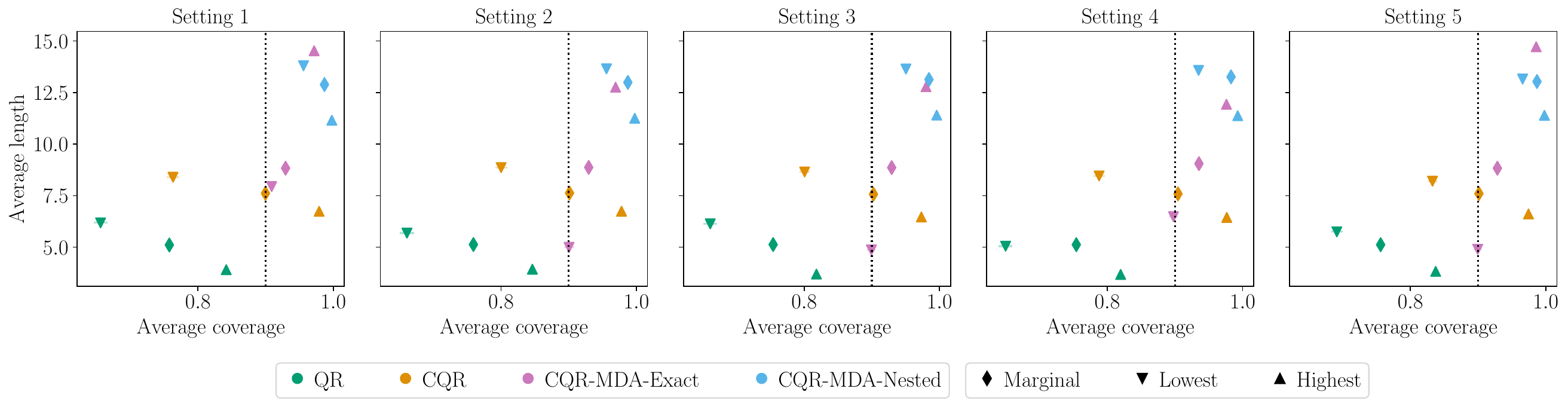}}
        \caption{Independent Gaussian features.}
        \label{fig:mnar_sm_no_dep}
    \end{subfigure}
    
\caption{Same caption than \Cref{fig:mar}, for \textbf{MNAR self masked missing values}.}
\label{fig:mnar_sm}
\end{figure}

\begin{figure}

    \centering
    \begin{subfigure}{\textwidth}
        \centerline{\includegraphics[width=0.8\textwidth]{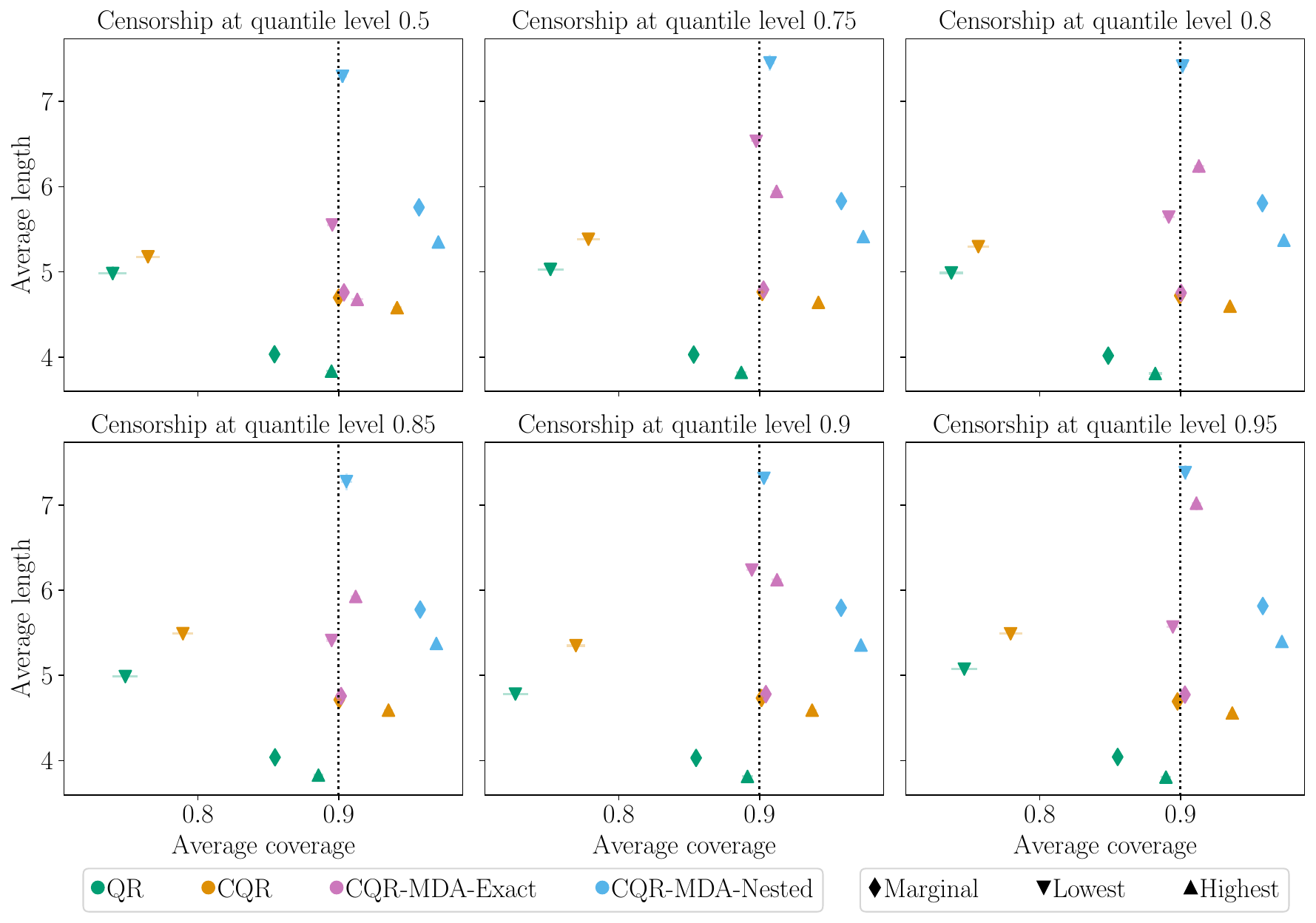}}
        \caption{Dependent Gaussian features, with $\varphi = 0.8$.}
        \label{fig:mnar_q_dep}
    \end{subfigure}
    \hfill
    \begin{subfigure}{\textwidth}
        \centerline{\includegraphics[width=0.8\textwidth]{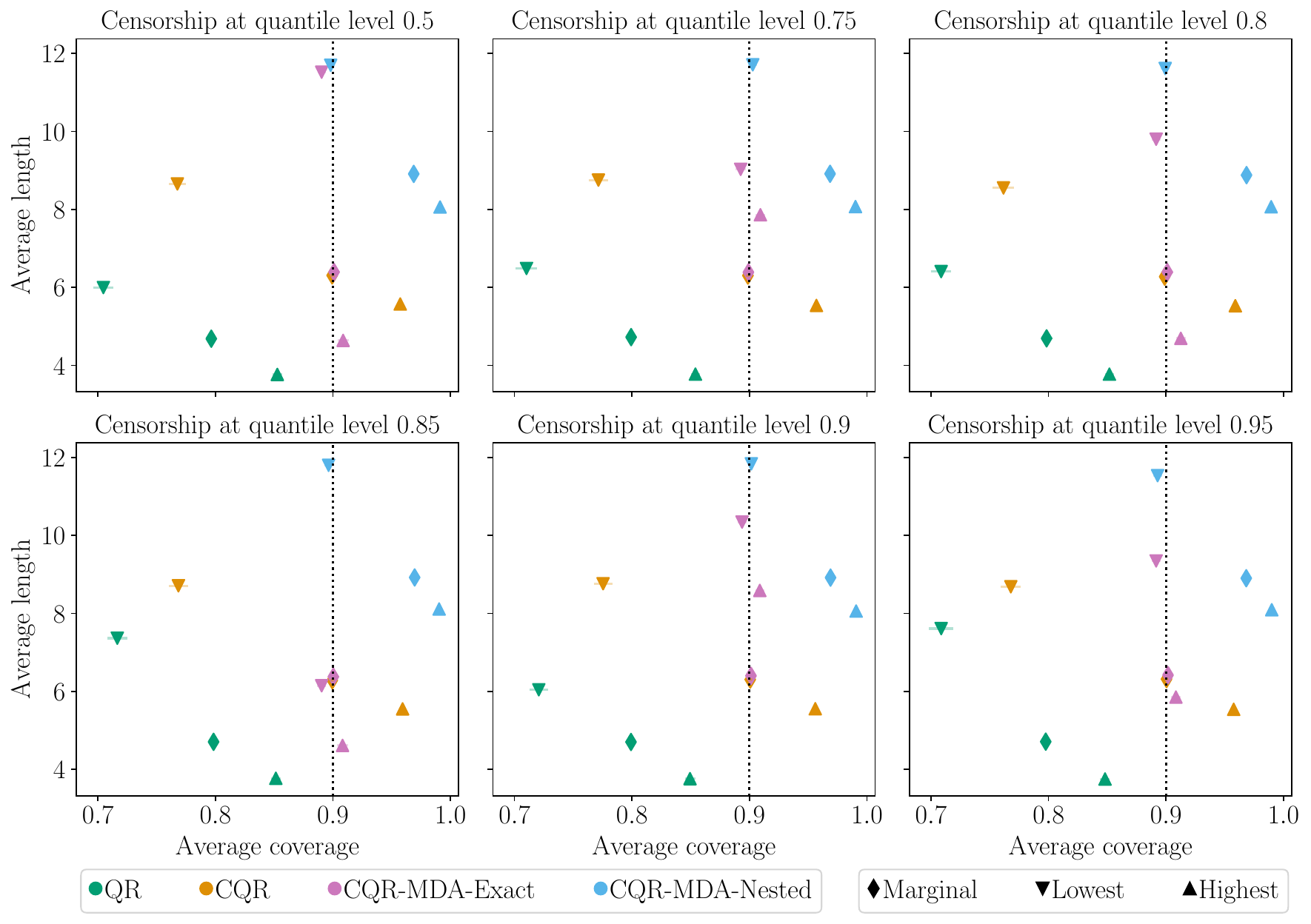}}
        \caption{Independent Gaussian features.}
        \label{fig:mnar_q_no_dep}
    \end{subfigure}
    
\caption{Same caption than \Cref{fig:mar}, for \textbf{MNAR quantile censorship missing values}.}
\label{fig:mnar_q}
\end{figure}

These experiments show that impute-then-CQR is marginally valid even under $\p\mar$ and $\p\mnar$. This is expected due to Proposition 3.3 of \citet{pmlr-v202-zaffran23a}, that demonstrates that vanilla impute-then-SplitCP is marginally valid for any missing mechanism as long as the initial data set is exchangeable. However, it is not MCV, which is also expected for the same reason that the fact that it is not MCV under $\p\mcarymx$. Most importantly, \newnested, through \masksub and \mask, is both marginally valid and MCV, despite the MCAR assumption not being satisfied, even when the imputation can not retrieve more information than the features expectation (i.e., when $\varphi=0$). This is a positive empirical result that hints robustness of \newnested on more complex relationships between $X$ and $M$ than independence.

\subsection{Breaking \yindmx Assumption}
\label{sec:exp_ymx}

Our last set experiments aim at breaking the \yindmx assumption. We focus on $d=3$ to be able to display all of the patterns and thus better illustrate the phenomenon. We generate data with $\varepsilon \sim \mathcal{N}(0, 1) \ind (X,M)$, $X \sim \mathcal{N}\left(\mu, \Sigma \right)$, $\mu = (1,1,1)^T$, $\Sigma = \varphi (1,1,1)^T(1,1,1)+(1-\varphi)I_d$, $\varphi\in\{0,0.8\}$ depending on the experiment, and $M_i \sim \mathcal{B}(0.2)$ for any $i\in\llbracket1,3\rrbracket$, independently from $X$ and $\varepsilon$. Finally: $Y = X_1\mathds{1}\left\{M_1=0\right\} + 2X_1\mathds{1}\left\{M_1=1\right\} + 3X_2\mathds{1}\left\{ M_2 = 1, M_3 = 1\right\}+\varepsilon$.
Note that according to this data generation process, the masks for which at least $X_1$ is missing, and the mask where $X_2$ and $X_3$ are missing, have important predictive power. As there are only 3 features that can be missing in this setting, \Cref{fig:ydepmx_dep,fig:ydepmx_no_dep} represent the 7 different missing patterns. 

These figures highlight that in the easiest setting where the conditional expectation imputation is able to reconstruct the missing values quite accurately ($\varphi=0.8$, \Cref{fig:ydepmx_dep}) \newnested manages to maintain MCV. However, in the hardest case of uncorrelated features ($\varphi=0$, \Cref{fig:ydepmx_no_dep}), it does not achieve MCV as it undercovers on the masks that have predictive power. Yet, \newnested still improves upon vanilla impute-then-predict+CQR, and in particular \mask is slightly more robust than \masksub.

\begin{figure}[!t]

    \centering
    \begin{subfigure}{\textwidth}
        \centerline{\includegraphics[width=\textwidth]{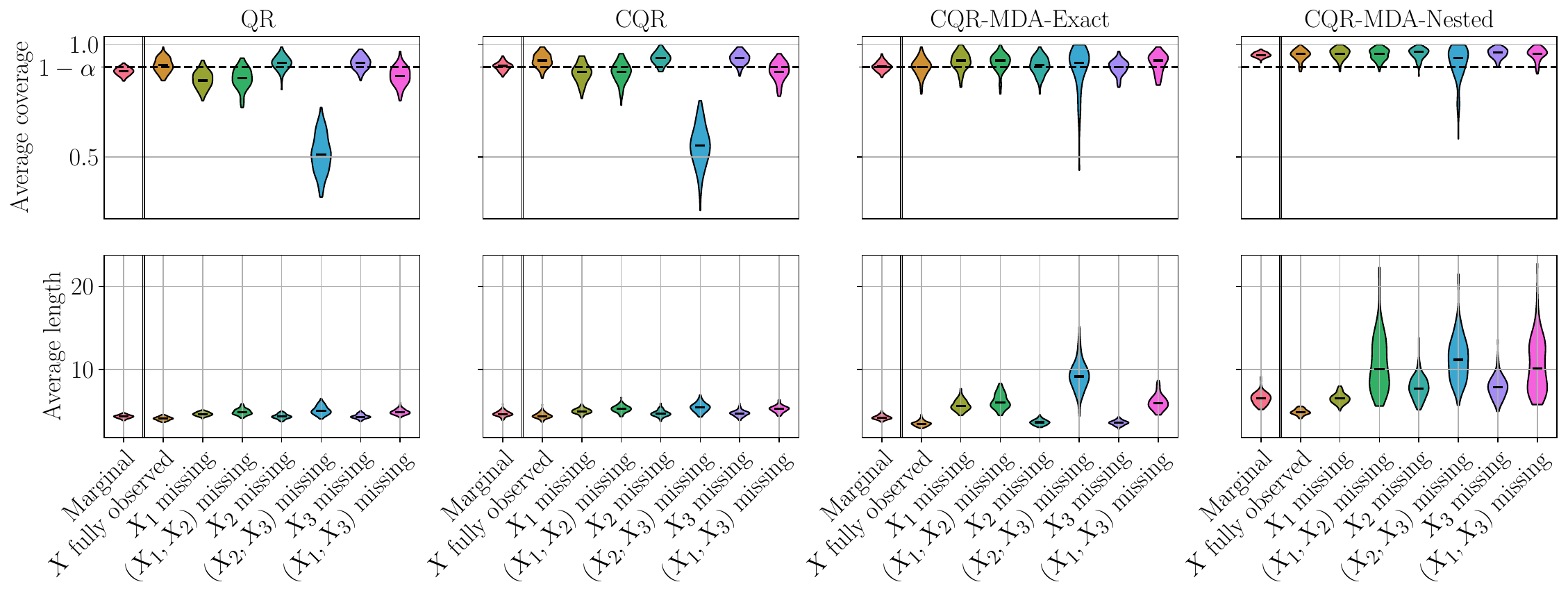}}
        \caption{Dependent Gaussian features, with $\varphi = 0.8$.}
        \label{fig:ydepmx_dep}
    \end{subfigure}
    \hfill
    \begin{subfigure}{\textwidth}
        \centerline{\includegraphics[width=\textwidth]{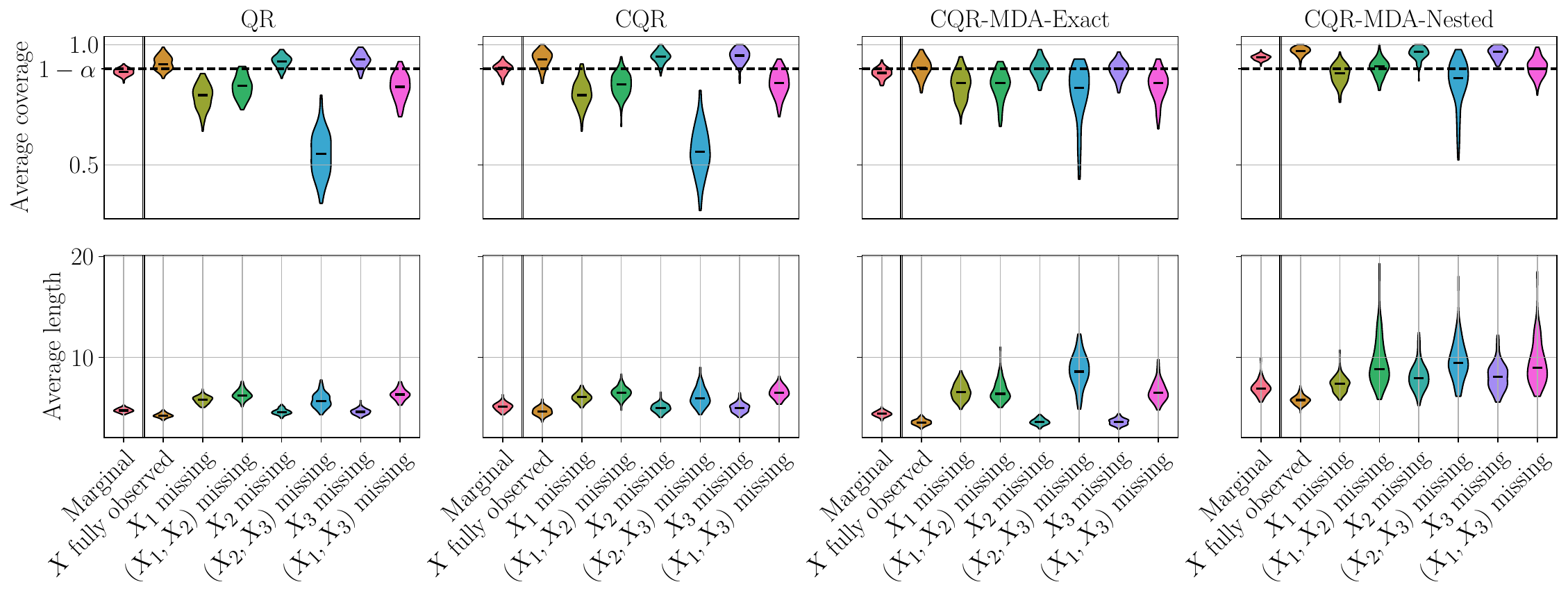}}
        \caption{Independent Gaussian features.}
        \label{fig:ydepmx_no_dep}
    \end{subfigure}
    
\caption{$Y$ and $M$ are not independent given $X$, and the features are Gaussian dependent with $\varphi=0.8$. Average coverage (top) and length (bottom) as a function of the missing patterns. The first violin plot shows the marginal coverage. The marginal test set includes 2000 observations. The mask-conditional test set includes 100 individuals for each missing data pattern.}
\label{fig:ydepmx}
\end{figure}

\section*{Acknowledgements}

This work was supported by a public grant as part of the Investissement d'avenir project, reference ANR-11-LABX-0056-LMH, LabEx LMH. M. Zaffran has been awarded the 2022 Scholarship for Mathematics granted by the Séphora Berrebi Foundation which she gratefully thanks for its support. The work of J. Josse is partially supported by ANR-16-IDEX-0006. Y. Romano was supported by the ISRAEL SCIENCE FOUNDATION (grant No. 729/21). He also thanks the Career Advancement Fellowship, Technion, for providing additional research support. The work of A. Dieuleveut is partially supported by ANR-19-CHIA-0002-01/chaire SCAI and Hi!~Paris.

\appendix
\part*{Appendices}

The appendices are organized as follows. 

\Cref{app:hardness} provides a the proofs for the hardness results presented in \Cref{sec:hardness}.

\Cref{app:model_uq} contains the proofs of the \Cref{sec:glm} results.

\Cref{app:jk} reminds the proof of leave-one-out CP in the case of randomized algorithms.

\Cref{app:nested} derives \newnested theoretical validities proofs, marginal and conditional.

\section{Hardness results}
\label{app:hardness}

\subsection{Reminders on object conditional impossibility results}

Let's start with some reminders on the impossibility result on $X$-conditional coverage (also known as object conditional coverage), originally stated in \cite{lei_distribution-free_2014}, Lemma 1, and re-written more accurately and more generally in \cite{vovk_conditional_2012}, Proposition 4.

\begin{definition}[$X$-conditional coverage]
An estimator $\widehat{C}_{n,\alpha}$ achieves $X$-conditional coverage at the level $\alpha$ if for any distribution $P$ and any $x$:
\begin{equation*}
\mathds{P}_{P^{\otimes(n+1)}}\left(Y^{(n+1)} \in \widehat{C}_{n,\alpha}\left(x\right) | X^{(n+1)}=x \right) \geq 1-\alpha.
\end{equation*}
\label[definition]{def:x-cond}
\end{definition}

\begin{lemma}[$X$-conditional coverage is not achievable in an informative way \citep{vovk_conditional_2012}]
\; \\ Suppose that an estimator $\widehat{C}_{n,\alpha}$ achieves $X$-conditional coverage at the level $\alpha$. Then, for any distribution $P$ and any $x_0$ such that $x_0$ is a non-atomic point of $P$:
\begin{equation*}
\left\{\begin{aligned}
\mathds{P}_{P^{\otimes(n)}}\left(\Lambda \left( \widehat{C}_{n,\alpha}\left(x_0\right) \right) = + \infty \right) \geq 1-\alpha, \;\;\;\;\; & \text{if }\mathcal{Y}\subseteq \mathds{R}  \text{ (regression)},\\
\forall y\in\mathcal{Y}: \;\; \mathds{P}_{P^{\otimes(n)}}\left( y \in \widehat{C}_{n,\alpha}\left(x_0\right) \right) \geq 1-\alpha, \;\;\;\;\; & \text{if }\mathcal{Y}\subseteq \mathds{N} \text{ (classification)}.
\end{aligned}
\right.
\end{equation*}
where $\Lambda$ is the Lebesgue measure.
\label{lem:x-cond-imp}
\end{lemma}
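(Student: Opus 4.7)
The plan is to prove Lemma \ref{lem:x-cond-imp} by an adversarial construction argument that mirrors (and simplifies) the reasoning behind \Cref{prop:tradeoff-mcv} in the main text. Fix $P\in\p$ and a non-atomic point $x_0$ of $P_X$, i.e.\ $P_X(\{x_0\})=0$. The strategy is, for every scale parameter $R$ and every perturbation size $\epsilon\in(0,1)$, to construct a distribution $Q_{\epsilon,R}$ that (i) agrees with $P$ outside a $\{x_0\}$-slice, (ii) puts an $\epsilon$-atom at $x_0$ with conditional law of $Y$ that is extremely spread out, and (iii) is within total variation $\epsilon$ of $P$. Since any $X$-conditionally valid method must cover $Y$ at $x_0$ under $Q_{\epsilon,R}$, the predictive set $\widehat C_{n,\alpha}(x_0)$ must be uninformative under $Q_{\epsilon,R}^{\otimes n}$; a total-variation transport then brings this uninformativeness back to $P^{\otimes n}$.

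Concretely, let
\begin{equation*}
Q_{\epsilon,R}\;:=\;(1-\epsilon)\,P\;+\;\epsilon\,\bigl(\delta_{x_0}\otimes U_R\bigr),
\end{equation*}
where $U_R$ is the uniform distribution on a set $\mathcal{Y}_R\subseteq\mathcal{Y}$ of Lebesgue measure $2R$ in the regression case (resp.\ the uniform law on a finite subset $\mathcal{Y}_R$ of cardinality $R$ in the classification case). Because $x_0$ is non-atomic under $P_X$ the two components are mutually singular up to $P$-null sets, giving $\mathrm{TV}(P,Q_{\epsilon,R})\le\epsilon$, and by sub-additivity of total variation on product measures $\mathrm{TV}(P^{\otimes n},Q_{\epsilon,R}^{\otimes n})\le n\epsilon$. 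Applying \Cref{def:x-cond} to $Q_{\epsilon,R}$ at $x_0$ and using that $Y^{(n+1)}\mid X^{(n+1)}=x_0$ is distributed as $U_R$ independent of the training data, the coverage requirement rearranges into
\begin{equation*}
\mathds{E}_{Q_{\epsilon,R}^{\otimes n}}\!\left[\,U_R\bigl(\widehat C_{n,\alpha}(x_0)\bigr)\,\right]\;\ge\;1-\alpha.
\end{equation*}

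Take the regression case first. On the event $\{\Lambda(\widehat C_{n,\alpha}(x_0))<\infty\}$ the integrand $U_R(\widehat C_{n,\alpha}(x_0))=\Lambda(\widehat C_{n,\alpha}(x_0)\cap\mathcal{Y}_R)/(2R)$ tends to $0$ as $R\to\infty$; since the integrand is bounded by $1$, bounded convergence yields $\mathds{P}_{Q_{\epsilon,R}^{\otimes n}}(\Lambda(\widehat C_{n,\alpha}(x_0))=\infty)\ge 1-\alpha-o_R(1)$. Transporting via $\mathrm{TV}(P^{\otimes n},Q_{\epsilon,R}^{\otimes n})\le n\epsilon$ gives $\mathds{P}_{P^{\otimes n}}(\Lambda(\widehat C_{n,\alpha}(x_0))=\infty)\ge 1-\alpha-o_R(1)-n\epsilon$, and we conclude by sending first $R\to\infty$ and then $\epsilon\to0$. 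For the classification case, for each fixed $y\in\mathcal{Y}$ choose the supports $\mathcal{Y}_R\ni y$ with $|\mathcal{Y}_R|=R$; the same computation yields $\mathds{P}_{P^{\otimes n}}(y\in\widehat C_{n,\alpha}(x_0))\ge 1-\alpha-o_R(1)-n\epsilon$, finishing the argument.

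The main obstacle, and what distinguishes this proof from a more naive attempt, is the sequencing of the two limits together with the total-variation transport: one cannot directly condition on the training sample (this is precisely the implicit step in \citet{lei_distribution-free_2014} that inflates the bound to $1$) and instead must close the argument at the level of unconditional probabilities, as sketched above, to obtain the correct irreducible $1-\alpha$ floor. A minor measurability check---that $\{\Lambda(\widehat C_{n,\alpha}(x_0))=\infty\}$ is a measurable event in the training sample---is needed for the bounded-convergence step to apply; this is automatic if $\widehat C_{n,\alpha}(x_0)$ is produced by a measurable algorithm, which we assume throughout.
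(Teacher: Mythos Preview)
Your overall construction (perturb $P$ by an $\epsilon$-atom at $x_0$ and transport via total variation) is a legitimate and slightly simpler alternative to the paper's approach, which instead modifies $P$ on a small \emph{neighbourhood} $E\ni x_0$ with $P_X(E)\le\varepsilon_n$ while preserving $P_X$. Both constructions exploit that $x_0$ is non-atomic, and both arrive at the same irreducible $1-\alpha$ floor. However, two steps in your write-up do not go through as stated.

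\textbf{Regression: the bounded-convergence step is applied under a varying measure.} You write that bounded convergence gives $\mathds{P}_{Q_{\epsilon,R}^{\otimes n}}\!\bigl(\Lambda(\widehat C_{n,\alpha}(x_0))=\infty\bigr)\ge 1-\alpha-o_R(1)$ and \emph{then} transport to $P^{\otimes n}$. But the training data---and hence $\widehat C_{n,\alpha}(x_0)$---are drawn from $Q_{\epsilon,R}^{\otimes n}$, which depends on $R$; dominated convergence requires a fixed underlying measure. Nothing prevents a valid method from producing, under $Q_{\epsilon,R}^{\otimes n}$, intervals whose length scales with $R$ (finite, yet with $U_R(\widehat C)$ bounded away from $0$). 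The fix is to reverse the order: since $U_R(\widehat C)\in[0,1]$, transport the \emph{expectation} first to get $\mathds{E}_{P^{\otimes n}}[U_R(\widehat C)]\ge 1-\alpha-n\epsilon$, and only then apply bounded convergence under the fixed measure $P^{\otimes n}$ (where indeed $U_R(\widehat C)\to 0$ on $\{\Lambda<\infty\}$), yielding $\mathds{P}_{P^{\otimes n}}(\Lambda=\infty)\ge 1-\alpha-n\epsilon$. The paper avoids this subtlety by using the elementary bound of \Cref{lem:rv-bounded} instead of a limit argument.

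\textbf{Classification: the uniform-on-$\mathcal{Y}_R$ construction does not yield a pointwise bound.} With $Y\mid X=x_0\sim\mathrm{Unif}(\mathcal{Y}_R)$, the coverage requirement only gives
\[
\frac{1}{R}\sum_{y'\in\mathcal{Y}_R}\mathds{P}\bigl(y'\in\widehat C_{n,\alpha}(x_0)\bigr)\;\ge\;1-\alpha,
\]
an \emph{average} over $y'\in\mathcal{Y}_R$, which says nothing about the fixed target $y$ (e.g.\ $\widehat C\equiv\mathcal{Y}_R\setminus\{y\}$ satisfies the average bound with slack $1/R$ yet never contains $y$). There is no ``same computation'' here: to get $\mathds{P}(y\in\widehat C_{n,\alpha}(x_0))\ge 1-\alpha$ you must take the conditional law at $x_0$ to be the Dirac $\delta_y$, exactly as the paper does; then the coverage condition reads $\mathds{P}_{Q^{\otimes n}}(y\in\widehat C_{n,\alpha}(x_0))\ge 1-\alpha$ directly, and transport finishes.
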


\begin{proof}

Assume $\widehat{C}_{n,\alpha}$ be $X$-conditionally valid, as defined in \Cref{def:x-cond}.  

Let $P$ a distribution on $\mathcal{X} \times \mathcal{Y}$, and let $x_0 \in \text{non-atom}\left(P_{X}\right)$. 

Let $\varepsilon > 0$. Let $\varepsilon_n = \sqrt{2 \left( 1 - \left( 1 - \frac{\varepsilon^2}{8} \right)^{1/n} \right)}$.

Let $E \subseteq \mathcal{X}$ such that $x_0 \in E$ and $0 < P_{X}(E) \leq \varepsilon_n$ (this is possible as a non-atom of a distribution $P_{X}$ belongs to its support).

Before diving in the details of the proof, let us define the total variation distance between two distributions $P$ and $Q$ on $\mathcal{Z}$, denoted $TV(P,Q)$: 
\begin{equation*}
    TV(P,Q) := \sup_{Z \in \mathcal{Z}} | P(Z) - Q(Z)|.
\end{equation*}

$\hookrightarrow$ \underline{Classification case.}

Let $y \in \mathcal{Y}$.

Define $Q$ another distribution on $\mathcal{X}\times\mathcal{Y}$ such that for any $A \subseteq \mathcal{X}$ and for any $B \subseteq \mathcal{Y}$:
\begin{equation*}
    Q\left(A \times B \right) = P\left(A\cap E^c \times B\right) + P_X\left( A \cap E \right)S_y(B),
\end{equation*}
with $S_y$ defined on $\mathcal{Y}$, which is a dirac on $y$.

On the one hand, exactly as in the regression case, by construction, $TV(P,Q) \leq P_X(E) \leq \varepsilon_n$. Hence, using \Cref{lem:bound-tsybakov}, $TV\left(P^{\otimes(n)},Q^{\otimes(n)}\right) \leq \varepsilon$. Therefore, for any $A \subseteq \mathcal{X}$ and for any $B \subseteq \mathcal{Y}$:
\begin{equation}
P^{\otimes(n)}\left(A \times B\right) \geq Q^{\otimes(n)}\left(A \times B\right) - \varepsilon.
\label{eq:tv-donoho}
\end{equation}

On the other hand, let $x \in E$. As $\widehat{C}_{n,\alpha}$ is distribution-free $X$-conditionally valid, it satisfies:
\begin{align*}
1 - \alpha & \leq \mathds{P}_{Q\iid} \left( Y^{(n+1)} \in \widehat{C}_{n,\alpha}(x) | X^{(n+1)} = x \right) \\
& = \mathds{E}_{Q^{\otimes(n)}}\left[ \mathds{E}_Q \left[ \mathds{1}\left\{ Y^{(n+1)} \in \widehat{C}_{n,\alpha}(x) \right\} | X^{(n+1)} = x \right] \right] \\
& = \mathds{E}_{Q^{\otimes(n)}}\left[ \mathds{E}_Q \left[ \mathds{1}\left\{ y \in \widehat{C}_{n,\alpha}(x) \right\} | X^{(n+1)} = x \right] \right] \\
& = \mathds{E}_{Q^{\otimes(n)}}\left[ \mathds{1}\left\{ y \in \widehat{C}_{n,\alpha}(x) \right\} \right] \\
& = \mathds{P}_{Q^{\otimes(n)}}\left( y \in \widehat{C}_{n,\alpha}(x) \right).
\end{align*}

Combining with \Cref{eq:tv-donoho}, we finally get:
\begin{equation*}
\mathds{P}_{P^{\otimes(n)}}\left( y \in \widehat{C}_{n,\alpha}(x) \right) \geq 1 - \alpha - \varepsilon,
\end{equation*}
which concludes the proof for the classification case by letting $\varepsilon \rightarrow 0$.

$\hookrightarrow$ \underline{Regression case.}

Let $D > 0$.

Define $Q$ another distribution on $\mathcal{X} \times \mathcal{Y}$ such that for any $A \subseteq \mathcal{X}$ and for any $B \subseteq \mathcal{Y}$:
\begin{equation*}
    Q\left(A\times B\right) := P\left(A \cap E^c \times B\right) + P_X\left(A \cap E\right)R\left(B\right),
\end{equation*}
with $R$ defined on $\mathcal{Y}$, uniform on $[-D;D]$.

On the one hand, by construction, $TV(P,Q) \leq P_X(E) \leq \varepsilon_n$. Hence, using \Cref{lem:bound-tsybakov}, $TV\left(P^{\otimes(n)},Q^{\otimes(n)}\right) \leq \varepsilon$. Therefore, for any $A \subseteq \mathcal{X}$ and for any $B \subseteq \mathcal{Y}$:
\begin{equation}
P^{\otimes(n)}\left(A \times B\right) \geq Q^{\otimes(n)}\left(A \times B\right) - \varepsilon.
\tag{\ref{eq:tv-donoho}}
\end{equation}

On the other hand, let $x \in E$. As $\widehat{C}_{n,\alpha}$ is distribution-free $X$-conditionally valid, it satisfies:
\begin{align*}
    1 - \alpha & \leq \mathds{P}_{Q\iid} \left( Y^{(n+1)} \in \widehat{C}_{n,\alpha}(x) | X^{(n+1)} = x \right) \\
    & = \mathds{E}_{Q^{\otimes(n)}}\left[\int_{\widehat{C}_{n,\alpha}(x)} q(y|x) \mathrm{d}y \right]\\
    & = \mathds{E}_{Q^{\otimes(n)}}\left[\Lambda\left( \widehat{C}_{n,\alpha}(x) \cap [-D;D] \right) \times \frac{1}{2D} \right].
\end{align*}
Note that $\Lambda\left( \widehat{C}_{n,\alpha}(x) \cap [-D;D] \right) \times \frac{1}{2D} \leq 1$. Therefore, using \Cref{lem:rv-bounded}, for any $t > 0$:
\begin{align*}
    \mathds{P}_{Q^{\otimes(n)}}\left(\Lambda\left( \widehat{C}_{n,\alpha}(x) \cap [-D;D] \right) \times \frac{1}{2D} \geq 1-t \right) & \geq 1-\frac{\alpha}{t} \\
    \mathds{P}_{Q^{\otimes(n)}}\left(\Lambda\left( \widehat{C}_{n,\alpha}(x) \cap [-D;D] \right) \geq (1-t)2D \right) & \geq 1-\frac{\alpha}{t} \\
    \Rightarrow \mathds{P}_{Q^{\otimes(n)}}\left(\Lambda\left( \widehat{C}_{n,\alpha}(x) \right) \geq (1-t)2D \right) & \geq 1-\frac{\alpha}{t}.
\end{align*}
Let $t = 1-\frac{1}{\sqrt{D}}$ and obtain $\mathds{P}_{Q^{\otimes(n)}}\left(\Lambda\left( \widehat{C}_{n,\alpha}(x) \right) \geq 2\sqrt{D} \right) \geq 1-\frac{\alpha}{1-\frac{1}{\sqrt{D}}}$.

Combining with \Cref{eq:tv-donoho}, we finally get:
\begin{equation*}
\mathds{P}_{P^{\otimes(n)}}\left( \Lambda\left( \widehat{C}_{n,\alpha}(x) \right) \geq 2\sqrt{D} \right) \geq 1-\frac{\alpha}{1-\frac{1}{\sqrt{D}}}-\varepsilon.
\end{equation*}
Letting $\varepsilon \rightarrow 0$ and $D \rightarrow +\infty$, the result is proven for the regression case.

\end{proof}

\vspace{-1cm}

\subsection{Proofs of \Cref{sec:hardness}}
\label{app:mcv_hardness}

\subsubsection{Most general distribution-free result: \Cref{prop:tradeoff-mcv}}

\begin{proof}

Let $n \in \mathds{N}^*$ the total training size (proper training and calibration). 

Let $\alpha \in ]0,1[$. 

Let $\widehat{C}_{n,\alpha}$ be MCV, as defined in \Cref{def:mcv}.  

Let $P$ a distribution on $\mathcal{X} \times \mathcal{M} \times \mathcal{Y}$.

Let $m_0 \in \mathcal{M}$. 

Denote by $\rho := P_M(\{m_0\})$.

$\hookrightarrow$ \underline{Regression case.}
\smallbreak

Let $D > 0$.

Define $Q$ another distribution on $\mathcal{X} \times \mathcal{M} \times \mathcal{Y}$ such that for any $A \subseteq \mathcal{X}$, for any $L \subseteq \mathcal{M}$ and for any $B \subseteq \mathcal{Y}$:
\begin{equation*}
    Q\left(A\times L\times B\right) := P\left(A \times L\setminus\{m_0\} \times B\right) + P_{(X,M)}\left(A \times \{m_0\} \right)R\left(B\right),
\end{equation*}
with $R$ defined on $\mathcal{Y}$, uniform on $[-D;D]$.

Recall that the total variation distance between two probability distributions on $\mathcal{Z}$, say $P$ and $Q$, is defined as: $TV(P,Q) := \sup_{Z \in \mathcal{Z}} \vert P(Z) - Q(Z) \vert$.

On the one hand, by construction, $TV(P,Q) \leq P_M(\{m_0\}) = \rho$. Hence, using \Cref{lem:bound-tsybakov}: $TV(P\iid,Q\iid) \leq \sqrt{2\left( 1-\left( 1-\frac{\rho^2}{2}\right)^{n+1}\right)}$. Therefore, for any $A \subseteq \mathcal{X}$, for any $L\subseteq \mathcal{M}$ and for any $B \subseteq \mathcal{Y}$:
\begin{equation}
P\iid\left(A \times L \times B\right) \geq Q\iid\left(A \times L \times B\right) - \sqrt{2\left( 1-\left( 1-\frac{\rho^2}{2}\right)^{n+1}\right)}.
\label{eq:tv-donoho-mcv}
\end{equation}

On the other hand, as $\widehat{C}_{n,\alpha}$ is MCV, it satisfies:
\begin{align*}
    1 - \alpha \leq & \mathds{P}_{Q\iid} \left( Y^{(n+1)} \in \widehat{C}_{n,\alpha}\left( X^{(n+1)}, m_0 \right) | M^{(n+1)} = m_0 \right) \\
    & = \mathds{E}_{Q\iid} \left[ \mathds{1}\left\{ Y^{(n+1)} \in \widehat{C}_{n,\alpha}\left( X^{(n+1)}, m_0 \right) \right\} | M^{(n+1)} = m_0 \right] \\
    & = \begin{aligned}[t]
     \mathds{E}_{Q^{\otimes(n)}} \left[ \mathds{E}_{Q} \right.& \left[ \mathds{1}\left\{ Y^{(n+1)} \in \widehat{C}_{n,\alpha}\left( X^{(n+1)}, m_0 \right) \right\} \right.\\
     &\left.\left. | M^{(n+1)} = m_0, \left( X^{(k)},M^{(k)},Y^{(k)} \right)_{k=1}^{n} \right] \right]
    \end{aligned}\\
    & = \begin{aligned}[t]
       \mathds{E}_{Q^{\otimes(n)}} \left[ \mathds{E}_{Q} \left[\mathds{E}_{Q} \right. \right. & \left[ \mathds{1}\left\{ Y^{(n+1)} \in \widehat{C}_{n,\alpha}\left( X^{(n+1)}, m_0 \right) \right\} \right.\\
         &\left. | X^{(n+1)}, M^{(n+1)} = m_0, \left( X^{(k)},M^{(k)},Y^{(k)} \right)_{k=1}^{n} \right] \\ 
         & \left. \left. | M^{(n+1)} = m_0, \left( X^{(k)},M^{(k)},Y^{(k)} \right)_{k=1}^{n} \right] \right]
    \end{aligned} \\
    & = \begin{aligned}[t]
         \mathds{E}_{Q^{\otimes(n)}} \left[ \mathds{E}_{Q} \right.& \left[ \int_{\widehat{C}_{n,\alpha}\left( X^{(n+1)}, m_0 \right)} q\left(y|X^{(n+1)},m_0\right) \mathrm{d}y \right.\\
         & \left. \left.| M^{(n+1)} = m_0, \left( X^{(k)},M^{(k)},Y^{(k)} \right)_{k=1}^{n} \right] \right]
    \end{aligned} \\
    & = \begin{aligned}[t]
         \mathds{E}_{Q^{\otimes(n)}} \left[ \mathds{E}_{Q}\right.& \left[ \Lambda\left( \widehat{C}_{n,\alpha}\left( X^{(n+1)}, m_0 \right) \cap [-D;D] \right) \times \frac{1}{2D} \right.\\
         &\left. \left. | M^{(n+1)} = m_0, \left( X^{(k)},M^{(k)},Y^{(k)} \right)_{k=1}^{n} \right] \right]
    \end{aligned} \\
    & = \mathds{E}_{Q\iid} \left[ \Lambda\left( \widehat{C}_{n,\alpha}\left( X^{(n+1)}, m_0 \right) \cap [-D;D] \right) \times \frac{1}{2D} | M^{(n+1)} = m_0 \right] 
 \end{align*}

Note that $\Lambda\left( \widehat{C}_{n,\alpha}\left( X^{(n+1)}, m_0 \right) \cap [-D;D] \right) \times \frac{1}{2D} \leq 1$ almost surely. Therefore, using \Cref{lem:rv-bounded}, for any $t > 0$:
\begin{align*}
    \mathds{P}_{Q\iid}\left(\Lambda\left( \widehat{C}_{n,\alpha}\left( X^{(n+1)}, m_0 \right) \cap [-D;D] \right) \times \frac{1}{2D} \geq 1-t \right) & \geq 1-\frac{\alpha}{t} \\
    \mathds{P}_{Q\iid}\left(\Lambda\left( \widehat{C}_{n,\alpha}\left( X^{(n+1)}, m_0 \right) \cap [-D;D] \right) \geq (1-t)2D \right) & \geq 1-\frac{\alpha}{t} \\
    \Rightarrow \mathds{P}_{Q\iid}\left(\Lambda\left( \widehat{C}_{n,\alpha}\left( X^{(n+1)}, m_0 \right) \right) \geq (1-t)2D \right) & \geq 1-\frac{\alpha}{t}.
\end{align*}
Let $t = 1-\frac{1}{\sqrt{D}}$ and obtain $\mathds{P}_{Q\iid}\left(\Lambda\left( \widehat{C}_{n,\alpha}\left( X^{(n+1)}, m_0 \right) \right) \geq 2\sqrt{D} \right) \geq 1-\frac{\alpha}{1-\frac{1}{\sqrt{D}}}$.

Combining with \Cref{eq:tv-donoho-mcv}, we finally get:
\begin{equation*}
\mathds{P}_{P\iid}\left( \Lambda\left( \widehat{C}_{n,\alpha}\left( X^{(n+1)}, m_0 \right) \right) \geq 2\sqrt{D} \right) \geq 1-\frac{\alpha}{1-\frac{1}{\sqrt{D}}}- \sqrt{2\left( 1-\left( 1-\frac{\rho^2}{2}\right)^{n+1}\right)}.
\end{equation*}
Letting $D \rightarrow +\infty$, the result is proven.

$\hookrightarrow$ \underline{Classification case.}
\smallbreak

Let $y \in \mathcal{Y}$.

Define $Q$ another distribution on $\mathcal{X} \times \mathcal{M} \times \mathcal{Y}$ such that for any $A \subseteq \mathcal{X}$, for any $L \subseteq \mathcal{M}$ and for any $B \subseteq \mathcal{Y}$:
\begin{equation*}
    Q\left(A\times L\times B\right) := P\left(A \times L\setminus\{m_0\} \times B\right) + P_{(X,M)}\left(A \times \{m_0\} \right)S\left(B\right),
\end{equation*}
with $S$ defined on $\mathcal{Y}$, being null everywhere except on $y$ (a dirac in $y$).

On the one hand, exactly as in the regression case, by construction, $TV(P,Q) \leq P_X(E) \leq P_M(m_0) = \rho$.  $TV(P\iid,Q\iid) \leq \sqrt{2\left( 1-\left( 1-\frac{\rho^2}{2}\right)^{n+1}\right)}$. Therefore, for any $A \subseteq \mathcal{X}$, for any $L\subseteq \mathcal{M}$ and for any $B \subseteq \mathcal{Y}$:
\begin{equation}
P\iid\left(A \times L \times B\right) \geq Q\iid\left(A \times L \times B\right) - \sqrt{2\left( 1-\left( 1-\frac{\rho^2}{2}\right)^{n+1}\right)}.
\tag{\ref{eq:tv-donoho-mcv}}
\end{equation}

On the other hand, as $\widehat{C}_{n,\alpha}$ is MCV, it satisfies:
\begin{align*}
    1 - \alpha & \leq \mathds{P}_{Q\iid} \left( Y^{(n+1)} \in \widehat{C}_{n,\alpha}\left( X^{(n+1)}, m_0 \right) | M^{(n+1)} = m_0 \right) \\
    & = \begin{aligned}
        \mathds{E}_{Q^{\otimes(n)}}\left[ \mathds{E}_Q \left[ \mathds{1}\left\{ Y^{(n+1)} \in \right. \right.\right.&\left.  \widehat{C}_{n,\alpha}\left(X^{(n+1)},m_0\right) \right\}  \\
        &\left.\left. | M^{(n+1)} = m_0, \left( X^{(k)},M^{(k)},Y^{(k)} \right)_{k=1}^n \right] \right] 
        \end{aligned}\\
    & = \mathds{E}_{Q^{\otimes(n)}}\left[ \mathds{E}_Q \left[ \mathds{1}\left\{ y \in \widehat{C}_{n,\alpha}\left(X^{(n+1)},m_0\right) \right\} | M^{(n+1)} = m_0, \left( X^{(k)},M^{(k)},Y^{(k)} \right)_{k=1}^n \right] \right] \\
    & = \mathds{E}_{Q\iid}\left[ \mathds{1}\left\{ y \in \widehat{C}_{n,\alpha}\left(X^{(n+1)},m_0\right) \right\} \right] \\
    & = \mathds{P}_{Q\iid}\left( y \in \widehat{C}_{n,\alpha}\left(X^{(n+1)},m_0\right) \right).
\end{align*}

Combining with \Cref{eq:tv-donoho}, we finally get:
\begin{equation*}
\mathds{P}_{P\iid}\left( y \in \widehat{C}_{n,\alpha}(X^{(n+1)},m_0) \right) \geq 1 - \alpha - \sqrt{2\left( 1-\left( 1-\frac{\rho^2}{2}\right)^{n+1}\right)}
\end{equation*}
which concludes the proof for the classification case.

\end{proof}

The proof of \Cref{prop:tradeoff-mcv} relied on the following \Cref{lem:bound-tsybakov,lem:rv-bounded}.

\begin{lemma}
For $P$ and $Q$ two probability distributions, and $n \in \mathds{N}^*$, it holds:
\begin{equation*}
TV(P^n,Q^n) \leq \sqrt{2\left(1-\left(1-\frac{TV(P,Q)^2}{2}\right)^n\right)}.
\end{equation*}
\label[lemma]{lem:bound-tsybakov}
\end{lemma}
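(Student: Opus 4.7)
The natural strategy is to pass through the Bhattacharyya/Hellinger affinity $\rho(P, Q) := \int\sqrt{pq}\,d\mu$, since the right-hand side has the shape of a Hellinger distance on the product space: $H(P^n, Q^n) = \sqrt{2(1 - \rho(P,Q)^n)}$ (with $H^2 := 2(1 - \rho)$), and the target bound is obtained from it by substituting $\rho(P,Q)$ with $1 - TV(P,Q)^2/2$. The two building blocks to assemble are (i) the multiplicativity $\rho(P^n, Q^n) = \rho(P, Q)^n$, which is Fubini applied to the product densities, and (ii) the classical single-copy TV--Hellinger comparison.

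For (ii), I would apply Cauchy--Schwarz to the factorisation $|p - q| = |\sqrt p - \sqrt q|\,(\sqrt p + \sqrt q)$:
\begin{equation*}
(2\, TV(P, Q))^2 \leq \int(\sqrt p - \sqrt q)^2 \cdot \int(\sqrt p + \sqrt q)^2 = H^2(P, Q)\,(2 + 2\rho(P, Q)) \leq 4\, H^2(P, Q),
\end{equation*}
which yields $TV(P, Q) \leq H(P, Q)$, equivalently $\rho(P, Q) \leq 1 - TV(P, Q)^2/2$. Applying the same inequality at the product level, together with the tensorisation of $\rho$, gives
\begin{equation*}
TV(P^n, Q^n)^2 \leq H^2(P^n, Q^n) = 2\bigl(1 - \rho(P, Q)^n\bigr).
\end{equation*}

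It remains to replace $\rho(P, Q)^n$ by $\bigl(1 - TV(P, Q)^2/2\bigr)^n$ inside the bound. The main subtlety I expect is the \emph{orientation} of this substitution: the single-copy bound $\rho(P, Q) \leq 1 - TV(P, Q)^2/2$, raised to the $n$-th power, preserves its direction, and a naive substitution therefore pushes $2(1-\rho^n)$ the wrong way relative to the target. To close this gap I would deploy the slightly sharper Cauchy--Schwarz estimate $TV^2 \leq 1 - \rho^2$ (same proof, keeping the full $(2+2\rho)$ factor rather than crudely majorising it by $4$) \emph{before} tensorising, obtaining $TV(P^n, Q^n)^2 \leq 1 - \rho(P,Q)^{2n} \leq 1 - \bigl(1 - TV(P,Q)^2\bigr)^n$, and then converting $1 - (1-a)^n$ to $2\bigl(1 - (1 - a/2)^n\bigr)$ via Jensen's inequality applied to the convex map $x \mapsto (1 - ax)^n$ evaluated at the midpoint of $\{0, 1\}$, with $a = TV(P, Q)^2 \in [0, 1]$; this is precisely what accounts for the factor 2 appearing outside the parentheses in the target. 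The Hellinger tensorisation and the Cauchy--Schwarz step are entirely standard; the convexity cleanup is the only non-routine ingredient, and it is where I would concentrate verification effort.
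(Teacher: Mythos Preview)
Your diagnosis of the orientation problem is correct, but your proposed fix fails at the second inequality in the chain
\[
TV(P^n,Q^n)^2 \;\leq\; 1 - \rho(P,Q)^{2n} \;\leq\; 1 - \bigl(1 - TV(P,Q)^2\bigr)^n.
\]
The sharper Cauchy--Schwarz bound you invoke is $TV^2 \leq 1 - \rho^2$, which rearranges to $\rho^2 \leq 1 - TV^2$, hence $\rho^{2n} \leq (1 - TV^2)^n$ and therefore $1 - \rho^{2n} \geq 1 - (1 - TV^2)^n$: the inequality points the \emph{opposite} way to what you wrote. The refined estimate is still an upper bound on $TV$ in terms of $\rho$, so after tensorisation it still pushes the substitution in the wrong direction --- the very obstacle you set out to overcome. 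Your Jensen step is fine, but it sits downstream of a reversed inequality. Concretely, with $P = \delta_0$, $Q = (1-\epsilon)\delta_0 + \epsilon\delta_1$, $\epsilon = 0.1$, $n = 2$, one has $TV(P^2,Q^2) = 1 - 0.9^2 = 0.19$ while the target bound evaluates to $\sqrt{2(1 - 0.995^2)} \approx 0.141$; so the inequality in the lemma is actually false as stated, and no argument can close this gap.

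For comparison, the paper's own proof simply cites $TV \leq H$ and the Hellinger tensorisation $H^2(P^n,Q^n) = 2\bigl(1 - (1 - H^2(P,Q)/2)^n\bigr)$ and asserts that ``combining'' them yields the result. This delivers $TV(P^n,Q^n) \leq \sqrt{2(1 - (1 - H^2(P,Q)/2)^n)}$, but replacing $H^2(P,Q)$ by $TV(P,Q)^2$ inside the bound is not justified there either --- it is exactly the orientation issue you flagged, left unaddressed.
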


\begin{proof}
The proof of this lemma is based on the relationship between the total variation distance and the Hellinger distance between two probability distributions denoted by $H(\cdot,\cdot)$ \citep[see][]{Tsybakov2009}.

Let $n \in \mathds{N}^*$ and let $P$ and $Q$ be two probability distributions.

On the one hand, note that:
\begin{equation}
TV(P,Q) \leq H(P,Q).
\label{eq:con-tv-hell}
\end{equation}

On the other hand, observe that:
\begin{equation}
H^2(P^n,Q^n) = 2 \left( 1 - \left( 1 - \frac{H^2(P,Q)}{2} \right)^n \right).
\label{eq:hell-iid}
\end{equation}

Therefore, by combining \Cref{eq:con-tv-hell,eq:hell-iid} \citep[that can be found in][]{Tsybakov2009}, we obtain the desired result. 
\end{proof}

\begin{lemma}
Let $W$ be a random variable such that $0 \leq W \leq 1$ and $\mathds{E}\left[W\right]\geq\beta$ with $\beta \in [0,1]$.

Then, for any $t > 0$, it holds $\mathds{P}\left( W \geq 1 - t \right) \geq 1 - \frac{1-\beta}{t}$.
\label[lemma]{lem:rv-bounded}
\end{lemma}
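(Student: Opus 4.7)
The plan is to reduce this to Markov's inequality applied to the non-negative random variable $1 - W$. Since $W \leq 1$ almost surely, the quantity $Z := 1 - W$ is non-negative, and its expectation is $\mathds{E}[Z] = 1 - \mathds{E}[W] \leq 1 - \beta$ by hypothesis.

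The key step is then to rewrite the target event in terms of $Z$: the event $\{W \geq 1 - t\}$ is the complement of $\{Z > t\}$, hence $\mathds{P}(W \geq 1-t) \geq 1 - \mathds{P}(Z > t)$. By Markov's inequality applied to $Z \geq 0$, one has $\mathds{P}(Z > t) \leq \mathds{P}(Z \geq t) \leq \mathds{E}[Z]/t \leq (1-\beta)/t$ for any $t > 0$. Plugging this back yields the claimed bound $\mathds{P}(W \geq 1-t) \geq 1 - (1-\beta)/t$.

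There is no real obstacle here: the statement is a standard ``Markov from the top'' inequality (sometimes called a reverse Markov bound for $[0,1]$-valued random variables), and the only mild care needed is to handle the strict versus non-strict inequality when passing from $\{Z > t\}$ to $\{Z \geq t\}$, which is harmless since Markov's inequality controls both. The proof is two or three lines long and does not require any additional assumption (such as continuity of $W$), so I would simply write it out directly.
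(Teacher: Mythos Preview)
Your proposal is correct and follows essentially the same approach as the paper: apply Markov's inequality to the non-negative variable $1-W$, bound its expectation by $1-\beta$, and take complements. If anything, you are slightly more careful than the paper about the strict versus non-strict inequality when passing between $\{W \geq 1-t\}$ and $\{1-W > t\}$.
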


\begin{proof}
    Let $t > 0$.

    As $W \leq 1$, $1-W\geq0$. Therefore, using Markov's inequality:
    \begin{equation*}
        \mathds{P}\left( 1-W \geq t \right) \leq \frac{\mathds{E}\left[ 1 - W \right]}{t} = \frac{1-\mathds{E}\left[W \right]}{t} \leq \frac{1-\beta}{t}
    \end{equation*}
    Noting that:
    \begin{equation*}
        \mathds{P}\left( 1-W \geq t \right) = \mathds{P}\left( W \leq 1 - t \right) = 1 - \mathds{P}\left( W \geq 1 - t \right),
    \end{equation*}
    we finally get $\mathds{P}\left( W \geq 1 - t \right) \geq 1 - \frac{1-\beta}{t}$.
\end{proof}

\subsubsection{Restricting to $\p\ymx$: \Cref{prop:tradeoff-mcv-yindm}}

\begin{proof}

The skeleton of the proof is the exactly the same than the one of \Cref{prop:tradeoff-mcv}, with a careful attention required in the construction of the adversarial distribution $Q$.

Let $n \in \mathds{N}^*$ the total training size (proper training and calibration). 

Let $\alpha \in ]0,1[$. 

Let $\widehat{C}_{n,\alpha}$ be MCV-$\p\ymx\iid$.  

Let $P\in\p\ymx$.

Let $(X,M,Y) \sim P$.

Let $m_0 \in \mathcal{M}$ such that $\rho := P_M(\{m_0\}) > 0$.

$\hookrightarrow$ \underline{Regression case.}
\smallbreak

Let $D > 0$.

We will now define $Q$ another distribution on $\mathcal{X} \times \mathcal{M} \times \mathcal{Y}$ which is:
\begin{enumerate}[label=(\roman*)]
\item close in total variation to $P$ with respect to $\rho$; \label{itm:tv}
\item such that \Cref{ass:y_ind_m} holds (to ensure that $\widehat{C}_{n,\alpha}$ is also MCV under $Q$); \label{itm:yindmx}
\item \label{itm:ximpm} such that there exists some subset of $\mathcal{X}$, say $F_0$, which determines the event of drawing mask $m_0$ under $Q$. This allows to remark that 
\begin{align*}
& \mathds{P}_{Q\iid} \left( Y^{(n+1)} \in \widehat{C}_{n,\alpha}\left( X^{(n+1)}, m_0 \right) | M^{(n+1)} = m_0 \right) \\
= \; & \mathds{P}_{Q\iid} \left( Y^{(n+1)} \in \widehat{C}_{n,\alpha}\left( X^{(n+1)}, m_0 \right) | X^{(n+1)} \in F_0 \right).
\end{align*}
\end{enumerate}

Let $(\tilde{X},\tilde{M},\tilde{Y}) \sim Q$. $Q$ is built in the following way. 

Let $F_0 \subseteq \mathcal{X}$ such that $P_X(F_0) = \rho$.

\begin{equation*}
\left\{\begin{aligned}
\text{if }X \notin F_0  \text{ and } M \neq m_0:& (\tilde{X},\tilde{M},\tilde{Y}) = (X,M,Y),\\
\text{if }X \in F_0  \text{ or } M = m_0:& (\tilde{X},\tilde{M},\tilde{Y}) \sim \mathcal{U}(F_0) \times \delta_{m_0} \times \mathcal{U}([-D,D]).
\end{aligned}
\right.
\end{equation*}

Using this construction, the proof will follow as in \Cref{prop:tradeoff-mcv}. The only ``tricky points'' to check are \ref{itm:tv}, \ref{itm:yindmx}, and \ref{itm:ximpm}.

By construction, \ref{itm:ximpm} is directly satisfied.

Remark that by construction $\mathds{P}\left( (X,M,Y) \neq (\tilde{X},\tilde{M},\tilde{Y}) \right) \leq 2\delta$ (the worst case scenario being if $F_0$ has been chosen such that $\mathds{1}\left\{ X \in F_0 \right\}\mathds{1}\left\{ M = m_0\right\} \overset{a.s.}{=} 0$, leading to an equality in the previous equation). 
Therefore, using \Cref{lem:tv-event}, we get that $TV(P,Q) \leq 2\delta$, therefore verifying \ref{itm:tv}.

The remaining task is to show that \ref{itm:yindmx} is satisfied. Let $B\in\mathcal{Y}$. We have:
\begin{align*}
\mathds{P}\left( \tilde{Y} \in B | \tilde{X},\tilde{M} \right) & = \left\{\begin{aligned}
\mathds{P}\left( Y \in B| X,M \right) & \text{ if } \tilde{X}\in F_0 \\
\Lambda\left( B \cap [-D;D] \right)\frac{1}{2D} & \text{ if } \tilde{X}\notin F_0
\end{aligned}
\right. \\
& = \left\{\begin{aligned}
\mathds{P}\left( Y \in B| X \right) & \text{ if } \tilde{X}\in F_0 \text{ as } P \text{ satisfies \Cref{ass:y_ind_m}}\\
\Lambda\left( B \cap [-D;D] \right)\frac{1}{2D} & \text{ if } \tilde{X}\notin F_0
\end{aligned}
\right. \\
& = \mathds{P}\left( \tilde{Y} \in B | \tilde{X}\right).
\end{align*}

$\hookrightarrow$ \underline{Classification case.}
\smallbreak
The idea is as previously, except that, as in the other hardness results, we replace the uniform distribution by a Dirac. In particular, let $y \in \mathcal{Y}$. 

Let $(\tilde{X},\tilde{M},\tilde{Y}) \sim Q$. $Q$ is built in the following way. 

Let $F_0 \subseteq \mathcal{X}$ such that $P_X(F_0) = \rho$.

\begin{equation*}
\left\{\begin{aligned}
\text{if }X \notin F_0  \text{ and } M \neq m_0:& (\tilde{X},\tilde{M},\tilde{Y}) = (X,M,Y),\\
\text{if }X \in F_0  \text{ or } M = m_0:& (\tilde{X},\tilde{M},\tilde{Y}) \sim \mathcal{U}(F_0) \times \delta_{m_0} \times \delta_y.
\end{aligned}
\right.
\end{equation*}
The conclusion follows as in \Cref{prop:tradeoff-mcv}, since, as shown in the regression case above, $Q$ is such that: (i) $TV(P,Q)\leq 2\rho$, (ii) \Cref{ass:y_ind_m} and (iii) holds by construction.
\end{proof}

\begin{lemma}
Let $\mathds{P}_Z$ and $\mathds{P}_{Z'}$ be two distributions for the random variables $X$ and $X'$ taking their value in $\mathcal{Z}$. $TV(\mathds{P}_Z, \mathds{P}_{Z'}) \leq \mathds{P}(Z \neq Z')$.
\label[lemma]{lem:tv-event}
\end{lemma}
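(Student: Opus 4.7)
The plan is to prove this standard coupling inequality by unpacking the definition of total variation distance, and then using a set-theoretic decomposition to relate the difference of marginal probabilities to the joint event $\{Z \neq Z'\}$. No measure-theoretic subtlety beyond the definition of $TV$ should be needed here, and the argument works verbatim for any coupling of the two marginals (though the lemma is invoked in the paper with a specific construction).

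First, I would fix an arbitrary measurable set $A \subseteq \mathcal{Z}$ and write
\begin{equation*}
\mathds{P}_Z(A) - \mathds{P}_{Z'}(A) = \mathds{P}(Z \in A) - \mathds{P}(Z' \in A),
\end{equation*}
interpreting both probabilities on the joint space $(Z,Z')$. Then I would decompose each event according to whether the other coordinate also lies in $A$:
\begin{equation*}
\mathds{P}(Z \in A) = \mathds{P}(Z \in A, Z' \in A) + \mathds{P}(Z \in A, Z' \notin A),
\end{equation*}
and similarly for $\mathds{P}(Z' \in A)$. Subtracting, the common term $\mathds{P}(Z \in A, Z' \in A)$ cancels out, yielding
\begin{equation*}
\mathds{P}_Z(A) - \mathds{P}_{Z'}(A) = \mathds{P}(Z \in A, Z' \notin A) - \mathds{P}(Z' \in A, Z \notin A) \leq \mathds{P}(Z \in A, Z' \notin A).
\end{equation*}

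Second, I would observe that $\{Z \in A, Z' \notin A\} \subseteq \{Z \neq Z'\}$, so $\mathds{P}(Z \in A, Z' \notin A) \leq \mathds{P}(Z \neq Z')$. By symmetry (i.e.\ swapping the roles of $Z$ and $Z'$), the same bound holds for $|\mathds{P}_Z(A) - \mathds{P}_{Z'}(A)|$. Finally, taking the supremum over measurable $A \subseteq \mathcal{Z}$ and using the definition $TV(\mathds{P}_Z, \mathds{P}_{Z'}) := \sup_A |\mathds{P}_Z(A) - \mathds{P}_{Z'}(A)|$ (which, up to the absolute value, matches the convention used elsewhere in the paper), yields the desired inequality. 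There is no real obstacle: the only point worth being careful about is that the statement implicitly assumes $Z$ and $Z'$ are coupled on a common probability space, which is precisely how the lemma is used in the proof of \Cref{prop:tradeoff-mcv-yindm}.
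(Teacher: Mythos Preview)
Your proof is correct and follows essentially the same route as the paper's: both arguments bound $|\mathds{P}(Z\in A)-\mathds{P}(Z'\in A)|$ by observing that the two probabilities can only differ on the event $\{Z\neq Z'\}$. The paper phrases this via indicators, using $|\mathds{1}\{Z\in A\}-\mathds{1}\{Z'\in A\}|\leq \mathds{1}\{Z\neq Z'\}$ inside an expectation, whereas you reach the same inclusion $\{Z\in A,\,Z'\notin A\}\subseteq\{Z\neq Z'\}$ through a set decomposition; the two are interchangeable and equally valid.
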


\begin{proof}
\begin{align*}
TV\left( \mathds{P}_{Z}, \mathds{P}_{Z'} \right) & = \sup_{A \subseteq \mathcal{Z}} | \mathds{P}_Z(A) - \mathds{P}_{Z'}(A) | \\
& =  \sup_{A \subseteq \mathcal{Z}} | \mathds{E}\left[\mathds{1}\left\{Z \in A\right\}\right] - \mathds{E}\left[\mathds{1}\left\{Z \in A\right\}\right] | \\
& \leq  \sup_{A \subseteq \mathcal{Z}} \mathds{E}\left[ | \mathds{1}\left\{Z \in A\right\} - \mathds{1}\left\{Z \in A\right\} |\right] \\
& =  \sup_{A \subseteq \mathcal{Z}} \mathds{E}\left[ | \mathds{1}\left\{Z \in A\right\} - \mathds{1}\left\{Z \in A\right\} | \mathds{1}\left\{ Z \neq Z' \right\} \right] \\
& \leq  \sup_{A \subseteq \mathcal{Z}} \mathds{E}\left[\mathds{1}\left\{ Z \neq Z' \right\} \right] \\
& =  \sup_{A \subseteq \mathcal{Z}} \mathds{P}\left( Z \neq Z' \right)
\end{align*}
\end{proof}
\vspace{-0.75cm}
\section{Link between missing covariates and uncertainty}
\label{app:model_uq}

\subsection{Proofs for Conditional Variance results}
\label{app:proof:varirianceisot}
\allowdisplaybreaks

\subsubsection{Results under $\p\mcarymx$ (\Cref{prop:VarExp})}

\begin{proof}
 Under the assumptions, $M\ind (Y, X)$, and thus for any $m$:
 \begin{align*}
     \mathds{E}\left[ V(X_{\text{obs}(M)}, M ) | M = m \right]   & = \mathds{E}\left[ V(X_{\text{obs}(m)}, m) | M = m \right] \\
     & = \mathds{E}\left[ V(X_{\text{obs}(m)}, m) \right]  \\
     & = \mathds{E}\left[ \mathrm{Var} \left( Y | X_{\text{obs}(m)}\right) \right]  
 \end{align*}
Moreover, for any $m \subset m'$, 
\begin{align*}
    \mathrm{Var} \left( Y | X_{\text{obs}(m')}\right) & = \mathds{E}\left[ \mathrm{Var} \left( Y | X_{\text{obs}(m)}\right) | X_{\text{obs}(m')} \right] + \mathrm{Var}(  \mathds{E}\left[ Y | X_{\text{obs}(m)}\right]  |X_{\obs(m')}).   \\
    & \geq \mathds{E}\left[ \mathrm{Var} \left( Y | X_{\text{obs}(m)}\right) | X_{\text{obs}(m')} \right].
\end{align*}
Thus $\mathds{E} \left[\mathrm{Var} \left( Y | X_{\text{obs}(m')}\right) \right] \geq \mathds{E}\left[ \mathrm{Var} \left( Y | X_{\text{obs}(m)}\right)  \right]$.
And finally: 
\begin{equation*}
    \mathds{E}\left[ V(X_{\text{obs}(M)}, M ) | M = m' \right] \geq \mathds{E}\left[ V(X_{\text{obs}(M)}, M ) | M = m \right].
\end{equation*}
\end{proof}

\subsubsection{Results under Gaussian Linear Model and $\p\mcar$}
\label{app:glm}

Previous works \citep{lemorvan2020,ayme2022,pmlr-v202-zaffran23a} have shown that under \Cref{mod:glm}, $Y | ( X_{\obs(m)},M = m ) \sim \mathcal{N}\left( \tilde\mu^m, \widetilde\sigma^m \right)$ for any $m \in \mathcal{M}$, with:
\begin{align*}
\tilde\mu^m = & \; \beta^T_{\obs(m)} X_{\obs(m)} + \beta^T_{\mis(m)} \mu^m_{{\mis}|{\obs}} \\
\mu^m_{{\mis}|{\obs}} = & \; \mu^m_{\mis(m)} +  \Sigma^m_{\mis(m),\obs(m)} ({\Sigma^m_{\obs(m),\obs(m)}})^{-1}(X_{\obs(m)} - \mu^m_{\obs(m)}), \\
\widetilde\sigma^m = & \; \beta^T_{\mis(m)} \Sigma^m_{{\mis}|{\obs}} \beta_{\mis(m)} + \sigma^2_{\varepsilon} \\
\Sigma^m_{{\mis}|{\obs}} = & \; \Sigma^m_{\mis(m),\mis(m)} - \Sigma^m_{\mis(m),\obs(m)} ({\Sigma^m_{\obs(m),\obs(m)}})^{-1}\Sigma^m_{\obs(m),\mis(m)}.
\end{align*}

We now provide the proof of \Cref{prop:glm_var_incr}.
\begin{proof}

Consider \Cref{mod:glm} and assume additionally that the missing mechanism is MCAR. Therefore, for any $m \in \mathcal{M}$, $\Sigma^m = \Sigma$. Hence, for any $m \in \mathcal{M}$:
\begin{equation*}
\text{Var}\left(Y|X_{\obs(m)},M=m\right) = \beta^T_{\mis(m)}\Sigma^m_{\mis|\obs} \beta_{\mis(m)} + \sigma^2_{\varepsilon},
\end{equation*}
with $\Sigma^m_{\mis|\obs} = \Sigma_{\mis(m),\mis(m)} - \Sigma_{\mis(m),\obs(m)} ({\Sigma_{\obs(m),\obs(m)}})^{-1}\Sigma_{\obs(m),\mis(m)}$.

Let $(m,m')\in \mathcal{M}^2$ such that $m \subseteq m'$. Our goal is to show that:
\begin{align*}
    \text{Var}\left(Y|X_{\obs(m')},M=m'\right) - \text{Var}\left(Y|X_{\obs(m)},M=m\right) & \geq 0 \\
    \beta^T_{\mis(m')}\Sigma^{m'}_{\mis|\obs} \beta_{\mis(m')} + \sigma^2_{\varepsilon} - \beta^T_{\mis(m)}\Sigma^m_{\mis|\obs} \beta_{\mis(m)} - \sigma^2_{\varepsilon} & \geq 0 \\
    \beta^T_{\mis(m')}\Sigma^{m'}_{\mis|\obs} \beta_{\mis(m')} - \beta^T_{\mis(m)}\Sigma^m_{\mis|\obs} \beta_{\mis(m)} & \geq 0 \\
    \beta^T_{\mis(m')}\Sigma^{m'}_{\mis|\obs} \beta_{\mis(m')} - \beta^T_{\mis(m')}\begin{pmatrix}
        \Sigma^m_{\mis|\obs} & 0 \\ 0 & \mathbf{0}
    \end{pmatrix}\beta_{\mis(m')} & \geq 0 \\
    \beta^T_{\mis(m')}\left( \Sigma^{m'}_{\mis|\obs} - \begin{pmatrix}
        \Sigma^m_{\mis|\obs} & 0 \\ 0 & \mathbf{0}
    \end{pmatrix} \right) \beta_{\mis(m')} & \geq 0,
\end{align*}
holds for any $\beta$. Therefore, we have to show that $\Sigma^{m'}_{\mis|\obs} - \begin{pmatrix}
        \Sigma^m_{\mis|\obs} & 0 \\ 0 & \mathbf{0}
    \end{pmatrix}$ is semi-definite positive.

The marginal covariance matrix $\Sigma$ can be rewritten by blocks in the following way:
\begin{equation*}
\Sigma = \begin{pmatrix} A & B & C \\ 
B^T & D & E \\
C^T & E^T & F
\end{pmatrix},
\end{equation*}
where:
\begin{equation*}
\left\{\begin{aligned}
A & = \Sigma_{\mis(m),\mis(m)}, \\
\begin{pmatrix} 
D & E \\
E^T & F
\end{pmatrix} & = \Sigma_{\obs(m),\obs(m)}, \\
\begin{pmatrix} 
A & B \\
B^T & D
\end{pmatrix} & = \Sigma_{\mis(m'),\mis(m')}, \\
F & = \Sigma_{\obs(m'),\obs(m')}.
\end{aligned}
\right.
\end{equation*}

Additionally, assume that $\Sigma > 0$ (that is, $\Sigma$ is definite positive).

Therefore, $D > 0, F > 0$. 
Thus $F$ is invertible, of inverse $F^{-1} > 0$. Furthermore, $G := D-EF^{-1}E^T$ is also positive definite, as it is the sum of $D > 0$ and  $EF^{-1}E^T \geq 0$, and thus $G$ is invertible.

$\Sigma^m_{\mis|\obs}$ and $\Sigma^{m'}_{\mis|\obs}$ can be rewritten using the previous decomposition. 

On the one hand, for $m$ it gives:
\begin{align*}
    \Sigma^m_{\mis|\obs} = & A - \begin{pmatrix}
        B & C
    \end{pmatrix}\begin{pmatrix}
        D & E \\ E^T & F
    \end{pmatrix}^{-1} \begin{pmatrix}
        B^T \\ C^T
    \end{pmatrix} \\
    = & A - \begin{pmatrix}
        B & C
    \end{pmatrix}\begin{pmatrix}
        G^{-1} & -G^{-1}EF^{-1} \\ -F^{-1}E^TG^{-1} & F^{-1} + F^{-1}E^TG^{-1}EF^T
    \end{pmatrix} \begin{pmatrix}
        B^T \\ C^T
    \end{pmatrix} \\
    = & A - \begin{pmatrix}
        B & C
    \end{pmatrix}\begin{pmatrix}
        G^{-1}B^T -G^{-1}EF^{-1}C^T \\ -F^{-1}E^TG^{-1}B^T + F^{-1}C^T + F^{-1}E^TG^{-1}EF^TC^T
    \end{pmatrix} \\
    = & A - BG^{-1}B^T + BG^{-1}EF^{-1}C^T \\
    & + CF^{-1}E^TG^{-1}B^T - CF^{-1}C^T - CF^{-1}E^TG^{-1}EF^TC^T \\
    \text{(rearranging)} = & A - CF^{-1}C^T - BG^{-1}B^T + BG^{-1}EF^{-1}C^T \\
    & + CF^{-1}E^TG^{-1}B^T - CF^{-1}E^TG^{-1}EF^TC^T \\
    = & A - CF^{-1}C^T - BG^{-1}\left(B^T - EF^{-1}C^T \right) \\
    & + CF^{-1}E^TG^{-1}\left(B^T - EF^TC^T \right) \\
    = & A - CF^{-1}C^T - \left(B-CF^{-1}E^T\right)G^{-1}\left(B^T - EF^{-1}C^T \right),
\end{align*}
and by denoting $H := B-CF^{-1}E^T$, we finally obtain (as $F$ is symmetric):
\begin{equation*}
    \Sigma^m_{\mis|\obs} = A - CF^{-1}C^T - HG^{-1}H^T.
\end{equation*}

On the other hand, for $m'$:
\begin{align*}
    \Sigma^{m'}_{\mis|\obs} = & \begin{pmatrix}
        A & B \\ B^T & D
    \end{pmatrix} - \begin{pmatrix}
        C \\ E
    \end{pmatrix}F^{-1} \begin{pmatrix}
        C^T & E^T
    \end{pmatrix} \\
    = & \begin{pmatrix}
        A & B \\ B^T & D
    \end{pmatrix} - \begin{pmatrix}
        CF^{-1}C^T & CF^{-1}E^T \\ EF^{-1}C^T & EF^{-1}E^T
    \end{pmatrix} \\
    = & \begin{pmatrix}
        A - CF^{-1}C^T & B - CF^{-1}E^T \\ B^T - EF^{-1}C^T & D -EF^{-1}E^T
    \end{pmatrix} \\
    = & \begin{pmatrix}
        A - CF^{-1}C^T & B - CF^{-1}E^T \\ B^T - EF^{-1}C^T & G
    \end{pmatrix} \\
    \Sigma^{m'}_{\mis|\obs} = & \begin{pmatrix}
        A - CF^{-1}C^T & H \\ H^T & G
    \end{pmatrix}
\end{align*}

Therefore, combining the two terms and rewriting together, we obtain:
\begin{align*}
    \Sigma^{m'}_{\mis|\obs} - \begin{pmatrix}
        \Sigma^m_{\mis|\obs} & 0 \\ 0 & \mathbf{0}
    \end{pmatrix} & = \begin{pmatrix}
        A - CF^{-1}C^T & H \\ H^T & G
    \end{pmatrix} - \begin{pmatrix}
        A - CF^{-1}C^T - HG^{-1}H^T & 0 \\ 0 & \mathbf{0}
    \end{pmatrix} \\
    & = \begin{pmatrix}
        A - CF^{-1}C^T - A + CF^{-1}C^T + HG^{-1}H^T& H \\ H^T & G
    \end{pmatrix} \\
    \Sigma^{m'}_{\mis|\obs} - \begin{pmatrix}
        \Sigma^m_{\mis|\obs} & 0 \\ 0 & \mathbf{0}
    \end{pmatrix} & = \begin{pmatrix}
        HG^{-1}H^T& H \\ H^T & G
    \end{pmatrix}.
\end{align*}

Hence, our objective is to show that $\begin{pmatrix}
        HG^{-1}H^T& H \\ H^T & G
    \end{pmatrix}$ is semi-definite positive. 

Let $z = \begin{pmatrix}
    x & y
\end{pmatrix} \in \mathds{R}^{1 \times (\#m + (\#m' - \#m))}$.

\begin{align*}
    z \begin{pmatrix}
        HG^{-1}H^T& H \\ H^T & G
    \end{pmatrix} z^T & = \begin{pmatrix}
    x & y
\end{pmatrix} \begin{pmatrix}
        HG^{-1}H^T& H \\ H^T & G
    \end{pmatrix} \begin{pmatrix}
    x^T \\ y^T
\end{pmatrix} \\
& = xHG^{-1}H^Tx^T+xHy^T+yH^Tx^T+yGy^T \\
& = xHG^{-1}GG^{-1}H^Tx^T+xHG^{-1}Gy^T+yGG^{-1}H^Tx^T+yGy^T \\
& = xHG^{-1}G\left(G^{-1}H^Tx^T+y^T\right)+yG\left(G^{-1}H^Tx^T+y^T\right) \\
& = \left(xHG^{-1}+y\right)G\left(G^{-1}H^Tx^T+y^T\right) \\
& = \left(xHG^{-1}+y\right)G\left(xHG^{-1}+y\right)^T \\
& \geq 0 \text{ as $G$ is positive definite.} 
\end{align*}

\end{proof}

\subsection{Impact of the imputation under a linear quantile regression model (\Cref{prop:impute_mean_qr})}

To prove \Cref{itm:mean} of \Cref{prop:impute_mean_qr}, we prove the following \Cref{lem:linear}.

\begin{lemma}

Assume $\p\mcar$, and $Y = {\beta^*}^T X + \varepsilon$ with $\varepsilon$ s.t. $\mathds{E}\left[\varepsilon | X_{\obs(M)} , M \right] = 0$. 

Then $\mathds{E}\left[ Y | X_{\obs(M)} , M \right] = {\beta^*}^T \Phi_{\rm{conditional \; mean}}(X,M)$, with $\Phi_{\rm{conditional \; mean}}$ the imputation by the conditional mean. Furthermore, if the covariates are independent, then $\mathds{E}\left[ Y | X_{\obs(M)} , M \right] = {\beta^*}^T \Phi_{\rm{mean}}(X,M)$, with $\Phi_{\rm{mean}}$ the imputation by the mean.
\label[lemma]{lem:linear}
\end{lemma}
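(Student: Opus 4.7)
The plan is to expand $Y$ along the observed/missing split induced by $M$ and take the conditional expectation on $(X_{\obs(M)},M)$. Write
\[
Y = {\beta^*_{\obs(M)}}^T X_{\obs(M)} + {\beta^*_{\mis(M)}}^T X_{\mis(M)} + \varepsilon.
\]
The first term is $(X_{\obs(M)},M)$-measurable, and the noise term vanishes by the stated assumption $\mathds{E}[\varepsilon \mid X_{\obs(M)},M]=0$. So the entire problem reduces to evaluating $\mathds{E}[X_{\mis(M)} \mid X_{\obs(M)},M]$, that is, recognizing that conditioning on the mask $M$ together with $X_{\obs(M)}$ does not add information beyond $X_{\obs(M)}$ alone once we invoke MCAR.

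The key step is the MCAR reduction. Working pattern-by-pattern, fix $m$ with $\mathds{P}(M=m)>0$. Under $\p\mcar$ we have $M \ind X$, hence $M \ind (X_{\obs(m)}, X_{\mis(m)})$, which by the definition of conditional expectation and a standard disintegration argument gives
\[
\mathds{E}\!\left[X_{\mis(m)} \,\big|\, X_{\obs(m)},\, M=m\right] = \mathds{E}\!\left[X_{\mis(m)} \,\big|\, X_{\obs(m)}\right] \quad \text{a.s.}
\]
By definition of $\Phi_{\rm{conditional\;mean}}$, the right-hand side is exactly the vector of entries that $\Phi_{\rm{conditional\;mean}}(X,m)$ places in the missing coordinates, while on the observed coordinates $\Phi_{\rm{conditional\;mean}}(X,m)$ agrees with $X_{\obs(m)}$ by definition of any imputation function in $\mathcal{F}^I$. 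Reassembling the two pieces yields
\[
\mathds{E}\!\left[Y \,\big|\, X_{\obs(M)}, M\right] = {\beta^*_{\obs(M)}}^T X_{\obs(M)} + {\beta^*_{\mis(M)}}^T \mathds{E}\!\left[X_{\mis(M)} \,\big|\, X_{\obs(M)}\right] = {\beta^*}^T \Phi_{\rm{conditional\;mean}}(X,M).
\]

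For the independent-features case, the conditional mean collapses to the marginal mean: $\mathds{E}[X_{\mis(m)} \mid X_{\obs(m)}] = \mathds{E}[X_{\mis(m)}]$ coordinatewise, which is precisely what $\Phi_{\rm{mean}}$ plugs in. Substituting this into the previous display gives the second claim. I do not expect any serious obstacle here; the only point requiring mild care is the pattern-wise disintegration (so that the MCAR independence is applied cleanly as an almost sure equality of conditional expectations rather than a heuristic substitution), and the bookkeeping that an imputation function, by definition of $\mathcal{F}^I$, leaves observed coordinates untouched so the split above really reassembles as ${\beta^*}^T \Phi(X,M)$.
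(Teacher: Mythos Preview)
Your proof is correct and follows essentially the same approach as the paper: expand $Y$ linearly, split into observed and missing coordinates, use MCAR to drop $M$ from the conditioning on $X_{\mis(m)}$, and identify the result with the (conditional) mean imputation. The paper's version is terser (a single chain of equalities written coordinate-by-coordinate), but the logical content is identical.
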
 

\begin{proof}
    \begin{align*}
        \mathds{E}\left[ Y | X_{\obs(M)} , M \right] =  \mathds{E}\left[ {\beta^*}^T X | X_{\obs(M)} , M \right] & = \sum_{i=1}^d \beta_i^* \mathds{E}\left[  X_i | X_{\obs(M)} , M \right] \\
        & = \sum_{i=1}^d \beta_i^* (\begin{aligned}[t] & X_i\mathds{1}\left\{ i\in\obs(M) \right\} \\ & + \mathds{E}\left[X_i|X_{\obs(M)},M\right]\mathds{1}\left\{ i\not\in\obs(M) \right\}) \end{aligned} \\
        \p\mcar \rightarrow & = \sum_{i=1}^d \beta_i^* (\begin{aligned}[t] & X_i\mathds{1}\left\{ i\in\obs(M) \right\} \\ & + \mathds{E}\left[X_i|X_{\obs(M)}\right]\mathds{1}\left\{ i\not\in\obs(M) \right\}) \end{aligned} \\
        & = \sum_{i=1}^d \beta_i^* \left(\Phi_{\rm{conditional \; mean}}(X,M)\right)_i  \\ 
        \text{if} \left(X_i\right)_{i=1}^d \ind, \mathds{E}\left[X_i|X_{\obs(M)}\right] = \mathds{E}\left[X_i\right] \rightarrow & = \sum_{i=1}^d \beta_i^* \left(\Phi_{\rm{mean}}(X,M)\right)_i 
    \end{align*}
\end{proof}

To prove \Cref{itm:qr} of \Cref{prop:impute_mean_qr}, we prove the following \Cref{prop:qr_linear}. Indeed, the oracle predictive intervals vary at least once in length we respect to the patterns, as, on the one hand, under $\p\mcarymx$ \Cref{eq:CIL_isot_exp} holds and, on the other hand, when $Y\not\ind X$ the variance of $Y$ given $X$ is different than the overall variance of $Y$.
\begin{proposition}[Non-adaptivity of the linear quantile regression]
Assume that: 
\begin{enumerate}[label=\roman*),topsep=0pt,noitemsep,leftmargin=*]
    \item the quantile regression is learned within the class of linear models;
    \item the (random) values used to impute have the same expectation than the feature itself, i.e., $\mathds{E}\left[ \Phi(X,m) | M = m \right] = \mathds{E}\left[ X \right] $ for any $m \in \mathcal{M}$ such that $\mathds{P}(M = m)>0$.\label{ass:imp_exp}
\end{enumerate}
Then the expectation of the predictive intervals length is independent of the missing pattern.

\label[proposition]{prop:qr_linear}
\end{proposition}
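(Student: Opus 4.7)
The plan is to write down explicitly the length of the predictive interval produced by a linear quantile regression on the imputed features, and then show, by linearity of expectation, that its conditional expectation given the mask does not depend on the mask, as a direct consequence of the unbiasedness assumption on the imputation.

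First, I would fix $\alpha \in (0,1)$ and denote by $\hat q_{\alpha/2}$ and $\hat q_{1-\alpha/2}$ the two linear quantile estimators learned on the imputed data. Since they are both linear in $\Phi(X,M)$, one can write $\hat q_{\tau}(\Phi(x,m)) = \beta_{\tau}^T \Phi(x,m) + c_{\tau}$ for $\tau \in \{\alpha/2, 1-\alpha/2\}$ with some learned coefficients $(\beta_{\tau}, c_{\tau})$. The resulting predictive interval at a test point with mask $m$ has length
\begin{equation*}
\Lambda\!\left(\hat{\mathcal{C}}_\alpha(X,m)\right) = (\beta_{1-\alpha/2} - \beta_{\alpha/2})^T \Phi(X,m) + (c_{1-\alpha/2} - c_{\alpha/2}).
\end{equation*}

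Then, taking the expectation conditionally on $M=m$ and using linearity together with assumption \ref{ass:imp_exp}, which gives $\mathds{E}[\Phi(X,m) \mid M = m] = \mathds{E}[X]$, one obtains
\begin{equation*}
\mathds{E}\!\left[\Lambda\!\left(\hat{\mathcal{C}}_\alpha(X,m)\right) \,\middle|\, M=m\right] = (\beta_{1-\alpha/2} - \beta_{\alpha/2})^T \mathds{E}[X] + (c_{1-\alpha/2} - c_{\alpha/2}).
\end{equation*}
The right-hand side does not depend on $m$, which is exactly the conclusion.

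There is no real obstacle here: the argument is a one-line computation once the two linear forms are written down, and all the work is actually done by the unbiasedness assumption on $\Phi$, which forces the only $m$-dependent quantity in the length (namely $\Phi(X,m)$) to have an $m$-independent conditional mean. The only mild care to take is to make explicit that the linearity hypothesis applies to both endpoints of the interval with the same imputed input $\Phi(X,m)$, so that the length itself is affine in $\Phi(X,m)$ and the computation above goes through; this is immediate from the assumption that the quantile regression is learned within the class of linear models.
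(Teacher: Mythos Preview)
Your proposal is correct and follows essentially the same approach as the paper: write the linear quantile estimates explicitly, compute the interval length as an affine function of $\Phi(X,m)$, take conditional expectation given $M=m$, and use the unbiasedness assumption on $\Phi$ to conclude that the result does not depend on $m$. The only cosmetic difference is that the paper allows arbitrary lower and upper quantile levels $\delta_{(l)},\delta_{(u)}$ with $\delta_{(u)}-\delta_{(l)}=1-\alpha$ rather than fixing them to $\alpha/2$ and $1-\alpha/2$, but this changes nothing in the argument.
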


\begin{proof}

Since the quantile regression is learned within the class of linear models, the fitted quantile functions (upper and lower) can be written as $\widehat{q}_\delta(z) = \beta^T_\delta z + \beta^0_\delta$,  with $\beta \in \mathds{R}^d$ and $\beta^0 \in \mathds{R}$. Therefore, the length of the resulting interval $L_\alpha$ at some---imputed---point $\Phi(X_{\obs(M)},M)$ will be:
\begin{align*}
    L_\alpha(\Phi(X_{\obs(M)},M)) & := \widehat{q}_{\delta_{(u)}}(\Phi(X_{\obs(M)},M)) - \widehat{q}_{\delta_{(l)}}(\Phi(X_{\obs(M)},M)) \\
    & = \left( \beta^T_{\delta_{(u)}} - \beta^T_{\delta_{(l)}} \right) \Phi(X_{\obs(M)},M) + \beta^0_{\delta_{(u)}} - \beta^0_{\delta_{(l)}},
\end{align*}
with $\delta_{(l)}$ and $\delta_{(u)}$ chosen by the user or fixed by the algorithm such that $\delta_{(u)} - \delta_{(l)} = 1-\alpha$. Thus:
\begin{align*}
    \mathds{E}\left[L_\alpha(\Phi(X_{\obs(M)},M))\right] & = 
    \mathds{E}\left[\left( \beta^T_{\delta_{(u)}} - \beta^T_{\delta_{(l)}} \right) \Phi(X_{\obs(M)},M) + \beta^0_{\delta_{(u)}} - \beta^0_{\delta_{(l)}}\right] \\
    & = 
    \left( \beta^T_{\delta_{(u)}} - \beta^T_{\delta_{(l)}} \right) \mathds{E}\left[ \Phi(X_{\obs(M)},M) \right] + \beta^0_{\delta_{(u)}} - \beta^0_{\delta_{(l)}}.
\end{align*}

Let $m \in \mathcal{M}$ such that $\mathds{P}(M = m)>0$. Conditioning by $m$:
\begin{equation*}
    \mathds{E}\left[L_\alpha(\Phi(X_{\obs(M)},M)) | M = m\right] = 
    \left( \beta^T_{\delta_{(u)}} - \beta^T_{\delta_{(l)}} \right) \mathds{E}\left[ \Phi(X_{\obs(M)},M) | M = m\right] + \beta^0_{\delta_{(u)}} - \beta^0_{\delta_{(l)}}.
\end{equation*}

Given the assumption that  $\mathds{E}\left[ \Phi(X_{\obs(M)},M) | M = m \right] = \mathds{E}\left[ X \right]$, one can conclude that:
\begin{equation*}
    \mathds{E}\left[L_\alpha(\Phi(X_{\obs(M)},M)) | M = m\right] = \sum_{j=1}^d\left( \beta^T_{\delta_{(u)}} - \beta^T_{\delta_{(l)}} \right)_j \mathds{E}\left[ X \right] + \beta^0_{\delta_{(u)}} - \beta^0_{\delta_{(l)}} \ind M.
\end{equation*}
\end{proof}

\section{Leave-one-out predictive sets for randomized algorithms}
\label{app:jk}

We provide in this section a more detailed proof of leave-one-out or $k$-fold cross-conformal \citep{Vovk2013} and jackknife+ \citep{barber2021jackknife} methods which allows us to highlight where exactly the arguments of data exchangeability and symmetrical algorithm play a role. In particular, by emphasizing these precise influences, we can understand how to include a non-deterministic symmetrical algorithm (such as Random Forest or Stochastic Gradient Descent). 

\subsection{On the definition of randomized symmetric algorithms}

\begin{definition}[Randomized learning algorithm]
A randomized learning algorithm is defined as:
\begin{align*}
\mathcal{A}: \left( \bigcup_{n\geq0} \left( \mathcal{X} \times \mathcal{Y} \right)^n \right) \times [0,1] & \mapsto \mathcal{Y}^{\mathcal{X}} \\
\left( X^{(k)}, Y^{(k)} \right)_{k=1}^n \times \xi & \mapsto \hat{A}(\cdot)
\end{align*}
where $\xi$ encodes the randomness of $\mathcal{A}$.
\end{definition}

\begin{definition}[Randomized symmetric algorithm \citep{Kim2023}]
A randomized learning algorithm $\mathcal{A}$ is symmetric if for any data set $\left( X^{(k)}, Y^{(k)} \right)_{k=1}^n$, for any permutation $\sigma$ on $\llbracket1,n\rrbracket$, there exists a coupling that maps $\xi \sim \mathcal{U}([0,1])$ to $\xi' \sim \mathcal{U}([0,1])$, which depends only on $\sigma$, s.t.:
\begin{equation*}
\mathcal{A}\left(\left( X^{(k)}, Y^{(k)} \right)_{k=1}^n ; \xi\right) = \mathcal{A}\left(\left( X^{(\sigma(k))}, Y^{(\sigma(k))} \right)_{k=1}^n ; \xi' \right).
\end{equation*}
\end{definition}

\subsection{Detailing leave-one-out conformal predictors validity proof}

Let $\left( X^{(k)}, Y^{(k)} \right)_{k=1}^{n+1}$ be exchangeable, and $\mathcal{A}$ a (possible randomized) symmetric algorithm. 

Let $s$ be a conformity score function. For $i \in \llbracket 1,n\rrbracket$, denote $\hat{A}_{-i}(\cdot) := \mathcal{A}\left( \left( X^{(k)}, Y^{(k)} \right)_{\substack{k=1\\k\neq i}}^{n} \right)$, that is the fitted left-one-out algorithm, removing data point $i$. 

Consider the leave-one-out conformal estimator defined as:
$$\widehat{C}_{n,\alpha}^{\text{LOO}}(x) := \left\{ y \in \mathcal{Y}: \sum_{k = 1}^n \mathds{1}\left\{ s\left( X^{(k)}, Y^{(k)} ; \hat{A}_{-k} \right) <  s\left( x, y ; \hat{A}_{-k} \right)  \right\} < (1-\alpha)(n+1) \right\}$$.

Previous works \citep{barber2021jackknife,gupta} have proven that under exchangeability of $\left( X^{(k)}, Y^{(k)} \right)_{k=1}^{n+1}$ and symmetry of $\mathcal{A}$, $\mathds{P}\left( Y^{(n+1)} \in \widehat{C}_{n,\alpha}^{\text{LOO}}\left( X^{(n+1)} \right) \right) \geq 1-2\alpha$. We recall below the key proof's steps, detailing the last one which uses the exchangeability and symmetry arguments.

\paragraph{Step 1.} Remark that:
\begin{align*}
& \left\{ Y^{(n+1)} \notin \widehat{C}_{n,\alpha}^{\text{LOO}}\left( X^{(n+1)} \right) \right\} \\
= & \left\{ \sum_{k = 1}^n \mathds{1}\left\{ s\left( X^{(k)}, Y^{(k)} ; \hat{A}_{-k} \right) <  s\left( X^{(n+1)}, Y^{(n+1)} ; \hat{A}_{-k} \right)  \right\} \geq (1-\alpha)(n+1)  \right\} \\
:= & \left\{ \sum_{k = 1}^n \mathds{1}\left\{ S^{(k),n+1} <  S^{(n+1),k}  \right\} \geq (1-\alpha)(n+1)  \right\} \\
:= & \left\{ \sum_{k = 1}^n \mathcal{C}_{n+1,k} \geq (1-\alpha)(n+1)  \right\}.
\end{align*}
with $S^{(i),j} := s\left( X^{(i)}, Y^{(i)} ; \hat{A}_{-(i,j)} \right)$ the score on data point $i$ of the predictor that has been fitted without seeing nor data point $i$ nor data point $j$, for $(i,j)\in\llbracket1,n+1\rrbracket^2$ and extending $\hat{A}_{-i}$ to $\hat{A}_{-(i,j)} := \mathcal{A}\left( \left( X^{(k)}, Y^{(k)} \right)_{\substack{k=1\\k\notin \{i,j\}}}^{n+1} \right)$, where the $n+1$ data point is added.

Denote by $\mathdutchcal{C}_{\mathcal{A}}$ the function building the comparison matrix $\mathcal{C}\in\{0,1\}^{(n+1)\times(n+1)}$: \\
$\mathdutchcal{C}_{\mathcal{A}} \left( \left( X^{(k)},Y^{(k)}\right)_{k=1}^{n+1} \right)_{i,j} = \mathds{1}\left\{ S^{(i),j} > S^{(j),i} \right\} = \mathcal{C}_{i,j}$.

\paragraph{Step 2.} Deterministically, \citet{barber2021jackknife} shows that $\#\{ i \in \llbracket 1,n+1\rrbracket : \sum\limits_{j = 1}^{n+1} \mathcal{C}_{i,j} \geq (1-\alpha)(n+1) \} \leq 2\alpha(n+1)$. This is shown for \emph{any} comparison matrix.

\paragraph{Step 3.} The last (and crucial) step of leave-one-out conformal predictors is to show that for any permutation $\sigma$ on $\llbracket1,n+1\rrbracket$ it holds: $\left(\mathcal{C}_{\sigma(i),\sigma(j)}\right)_{i,j} \overset{d}{=} \left(\mathcal{C}_{i,j}\right)_{i,j}$.

\allowdisplaybreaks

\begin{align*}
\mathcal{C}_{\sigma(i), \sigma(j)} 
    & = \mathdutchcal{C}_{\mathcal{A}} \left( \left( X^{(k)},Y^{(k)}\right)_{k=1}^{n+1} \right)_{\sigma(i),\sigma(j)} \\
    & = 
    \begin{aligned}[t] 
        \mathds{1}\Biggl\{& s\left(Y^{(\sigma(i))}, X^{(\sigma(i))}, \mathcal{A}\left( \left( X^{(k)},Y^{(k)} \right)_{k=1, k\notin\{\sigma(i),\sigma(j)\}}^{n+1} ; \xi \right) \right)  \\
            & > s\left(Y^{(\sigma(j))}, X^{(\sigma(j))}, \mathcal{A}\left( \left( X^{(k)},Y^{(k)} \right)_{k=1, k\notin\{\sigma(i),\sigma(j)\}}^{n+1} ; \xi \right) \right) \Biggr\} 
    \end{aligned} \\
    & = 
    \begin{aligned}[t] 
        \mathds{1}\Biggl\{& s\left(Y^{(\sigma(i))}, X^{(\sigma(i))}, \mathcal{A}\left( \left( X^{(\sigma(k))},Y^{(\sigma(k))} \right)_{k=1, k\notin\{i,j\}}^{n+1} ; \xi'_{\sigma} \right) \right) \\
        & > s\left(Y^{(\sigma(j))}, X^{(\sigma(j))}, \mathcal{A}\left( \left( X^{(\sigma(k))},Y^{(\sigma(k))} \right)_{k=1, k\notin\{i,j\}}^{n+1} ; \xi'_{\sigma} \right) \right) \Biggr\} \quad \mathcal{A} \text{ is symmetric}
    \end{aligned} \\
    & = \mathdutchcal{C}_{\mathcal{A}} \left( \left( X^{(\sigma(k))},Y^{(\sigma(k))}\right)_{k=1}^{n+1} \right)_{i,j}
\end{align*}
Thus, leveraging the fact that $\xi'_{\sigma} \ind \left( X^{(k)},Y^{(k)}\right)_{k=1}^{n+1}$ and that $\left( X^{(k)},Y^{(k)}\right)_{k=1}^{n+1}$ are exchangeable, we obtain that: 
\begin{equation*}
    \left(\mathcal{C}_{\sigma(i), \sigma(j)}\right)_{i,j \in \llbracket1,n+1\rrbracket^2} \overset{d}{=} \mathdutchcal{C}_{\mathcal{A}} \left( \left( X^{(k)},Y^{(k)}\right)_{k=1}^{n+1} \right) = \left(\mathcal{C}_{i,j}\right)_{i,j \in \llbracket1,n+1\rrbracket^2}.
\end{equation*}

Hence, for any permutation $\sigma$ on $\llbracket 1,n+1 \rrbracket$ it holds that $\Pi^T_{\sigma} \mathcal{C} \Pi_{\sigma} \overset{d}{=} \mathcal{C}$, concluding the proof as then each element of $\llbracket 1,n+1\rrbracket$ is equally likely to belong to $\{ i \in \llbracket 1,n+1\rrbracket : \sum\limits_{j = 1}^{n+1} \mathcal{C}_{i,j} \geq (1-\alpha)(n+1) \}$.

\section{Theory on \newnested and \mask}
\label{app:nested}
Let us first remark that $\widehat C_{n,\alpha}^{\text{MDA-Nested}^{\star}}(\cdot) \subseteq \widehat C_{n,\alpha}^{\text{MDA-Nested}}(\cdot)$ when the conformity score function outputs intervals and $\widetilde{\Cal} = \Cal$ (\Cref{rem:nested_in_newnested}). 

\begin{proof}
\begin{align*}
\left\{ Y^{(n+1)} \notin \widehat{C}_{n,\alpha}^{\text{MDA-Nested}} \right.&\left. \left(X^{(n+1)},M^{(n+1)}\right)  \right\}\\
    & = \begin{aligned}[t] \Bigl\{ & Y^{(n+1)} > \widehat{Q}_{1-\alpha}\left( \mathcal{U}_\alpha\left( X^{(n+1)} \right) \right)  \\ & \left. \text{ or } Y^{(n+1)} < \widehat{Q}_{\alpha}\left( \mathcal{L}_\alpha\left( X^{(n+1)} \right) \right) \right\} \end{aligned} \\
    & = \left\{ (1-\alpha)(\#\Cal+1) \leq \sum\limits_{k = 1}^n \mathds{1}\left\{Y^{(n+1)} > u_{\alpha}^{(k)}\left( X^{(n+1)} \right) \right\} \right. \\
    & \; \; \; \; \; \; \left. \text{ or } (1-\alpha)(\#\Cal+1) \leq \sum\limits_{k = 1}^n \mathds{1}\left\{Y^{(n+1)} <  \ell_{\alpha}^{(k)}\left( X^{(n+1)} \right) \right\} \right\} \\
    & \subset \Biggl\{ (1-\alpha)(\#\Cal+1) \leq \sum\limits_{k = 1}^n \begin{aligned}[t] \mathds{1}\Bigl\{ & Y^{(n+1)} > u_{\alpha}^{(k)}\left( X^{(n+1)} \right) \\ & \left. \text{ or } Y^{(n+1)} < \ell_{\alpha}^{(k)}\left( X^{(n+1)} \right) \right\} \Biggr\} \end{aligned} \\
    & = \Biggl\{ \begin{aligned}[t] & (1-\alpha)(\#\Cal+1) \\ & \leq \sum\limits_{k = 1}^n \begin{aligned}[t] \mathds{1}\Bigl\{ & s\left( \left(X^{(n+1)}, \widetilde{M}^{(k)}\right), Y^{(n+1)};\hat{A}\left(\Phi\left(\cdot,\cdot\right),\cdot\right)\right) \\ & \left. > s\left( \left(X^{(k)}, \widetilde{M}^{(k)}\right), Y^{(k)};\hat{A}\left(\Phi\left(\cdot,\cdot\right),\cdot\right)\right) \right\} \Biggr\} \end{aligned} \end{aligned} \\
    & = \left\{ Y^{(n+1)} \notin \widehat{C}_{n,\alpha}^{\text{MDA-Nested}^{\star}}\left(X^{(n+1)},M^{(n+1)}\right)  \right\}
\end{align*}
\end{proof}
Therefore, any upper bound on the miscoverage of \newnested extends to \mask. 

\subsection{Marginal validity of \newnested.}
\label{app:nested_mv}

The proof of \Cref{thm:mv-nested} is highly inspired by the leave-one-out conformal predictors proof, from \citet{barber2021jackknife} and detailed previously in \Cref{app:jk}.

\begin{proof}

One can see this proof as analogous of the one of leave-one-out conformal predictors, where ``predicting on point $i$ with point $j$ left out'' corresponds to ``predicting on point $i$ when additionally masking it with the mask of point $j$''.

\paragraph{Step 1.}

\begin{align*}
\left\{ Y^{(n+1)} \right.&\left. \notin \widehat C_{n,\alpha}^{\text{MDA-Nested}^{\star}}\left(X^{(n+1)},M^{(n+1)}\right) \right\} \\
    & = \Biggl\{ 
        \begin{aligned}[t] & (1-\alpha)(\#\Cal+1) \\ 
        & \leq \sum\limits_{k \in \Cal} 
            \begin{aligned}[t] \mathds{1}\Bigl\{ 
                & s\left( \left(X^{(n+1)}, \widetilde{M}^{(k)}\right), Y^{(n+1)};\hat{A}\left(\Phi\left(\cdot,\cdot\right),\cdot\right)\right) \\ 
                & \left. > s\left( \left(X^{(k)}, \widetilde{M}^{(k)}\right), Y^{(k)};\hat{A}\left(\Phi\left(\cdot,\cdot\right),\cdot\right)\right) \right\} \Biggr\} \end{aligned} 
            \end{aligned} \\
    & := \left\{ (1-\alpha)(\#\Cal+1) \leq \sum\limits_{k \in \Cal} \mathds{1}\left\{ S^{(n+1),k} > S^{(k), n+1} \right\} \right\},
\end{align*}
where we defined $S^{(i),j} := s\left( \left(X^{(i)}, \max\left( M^{(i)}, M^{(j)} \right) \right), Y^{(i)} ;\hat{A}\left(\Phi\left(\cdot,\cdot\right),\cdot\right) \right)$, that is the score of the point $i$ when the mask of the point $j$ is applied to it, on top of its own mask $M^{(i)}$. 

\paragraph{Step 2.} Define the comparison matrix $\mathcal{C} \in \{0,1\}^{(\#\Cal+1) \times (\#\Cal+1)}$, s.t. for $(i,j) \in \left( \Cal \cup \{ n+1 \} \right)^2$: ${ \mathcal{C}_{i,j} = \mathds{1} \left\{ S^{(i),j} > S^{(j), i} \right\} }$. Hence, we now have (since by definition $\mathcal{C}_{n+1,n+1} = 0$):
\begin{equation*}
\left\{ Y^{(n+1)} \notin \widehat C_{n,\alpha}^{\text{MDA-Nested}^{\star}}\left(X^{(n+1)},M^{(n+1)}\right) \right\} = \left\{ \sum\limits_{k \in \Cal \cup \{ n+1 \}} \mathcal{C}_{n+1,k} \geq (1-\alpha)(\#\Cal+1) \right\}.
\end{equation*}
Denote $W(\mathcal{C}) = \{ i \in \Cal \cup \{ n+1 \} : \sum\limits_{k \in \Cal \cup \{ n+1 \}} \mathcal{C}_{i,k} \geq (1-\alpha)(\#\Cal+1) \}$. We can re-write:
\begin{equation*}
\left\{ Y^{(n+1)} \notin \widehat C_{n,\alpha}^{\text{MDA-Nested}^{\star}}\left(X^{(n+1)},M^{(n+1)}\right) \right\} = \left\{ n+1 \in W(\mathcal{C}) \right\}.
\end{equation*}
Therefore $\mathds{P}\left\{ Y^{(n+1)} \notin \widehat C_{n,\alpha}^{\text{MDA-Nested}^{\star}}\left(X^{(n+1)},M^{(n+1)}\right) \right\} = \mathds{P} \left\{ n+1 \in W(\mathcal{C}) \right\}$. Thus, we will now bound $\mathds{P} \left\{ n+1 \in W(\mathcal{C}) \right\}$.

Again, $\#W(\mathcal{C}) \leq 2\alpha(\#\Cal+1)$ deterministically \citep{barber2021jackknife}.

\paragraph{Step 3.} To conclude the proof, observe that the matrix $\mathcal{C}$ can be viewed as the output of a deterministic function $\mathdutchcal{C}$ of the exchangeable (by \ref{ass:iid}) sequence $\left( X^{(k)},M^{(k)},Y^{(k)} \right)_{k=1}^{n+1}$: $\mathcal{C} = \mathdutchcal{C}\left( \left( X^{(k)},M^{(k)},Y^{(k)} \right)_{k=1}^{n+1} \right) $. 

Thus, for any permutation $\sigma$ on $\Cal \cup \{ n+1 \}$, it holds: 
\begin{equation*}
\mathdutchcal{C}\left( \left( X^{(k)},M^{(k)},Y^{(k)} \right)_{k \in \Cal \cup \{ n+1 \}} \right) \overset{d}{=} \mathdutchcal{C}\left( \left( X^{(\sigma(k))},M^{(\sigma(k))},Y^{(\sigma(k))} \right)_{k \in \Cal \cup \{ n+1 \}} \right) := \mathcal{C}^{\sigma}.
\end{equation*}

It follows that for any $k \in \Cal \cup \{ n+1 \}$, $\mathds{P}\{k \in W(\mathcal{C}) \} = \mathds{P}\{k \in W(\mathcal{C}^{\sigma}) \}$ for any permutation $\sigma$ on $\Cal \cup \{ n+1 \}$. Therefore $\mathds{P}\{k \in W(\mathcal{C}) \}$ does not depend on $k$. Finally:
\begin{align*}
\mathds{P}\left\{ Y^{(n+1)} \notin \widehat C_{n,\alpha}^{\text{MDA-Nested}^{\star}}\left(X^{(n+1)},M^{(n+1)}\right) \right\} & = \mathds{P}\{n+1 \in W(\mathcal{C}) \}\\
& = \frac{1}{\#\Cal+1}\sum\limits_{k\Cal \cup \{ n+1 \}}\mathds{P}\{k \in W(\mathcal{C}) \}\\
& = \frac{1}{\#\Cal+1} \mathds{E}[ \# W(\mathcal{C}) ]\\
& \leq \frac{1}{\#\Cal+1} 2\alpha(\#\Cal+1)=2\alpha.
\end{align*}
\end{proof}
\vspace{-0.5cm}
\subsection{MCV of \newnested}
\label{app:nested_mcv}

To prove that \newnested and \mask are MCV-$\p\mcarymx\iid$, we leverage again the parallel with leave-one-out conformal predictors, but this time seeing the missing pattern as exogenous randomness, which is possible when working with distributions in $\p\mcarymx$. 

\begin{proof}

Under $\p\mcarymx\iid$, it holds that $M^{(n+1)} \ind \left( \left( X^{(k)}, Y^{(k)} \right)_{k \in \rm{Cal}}, \left( X^{(n+1)}, Y^{(n+1)} \right) \right)$. Thus the sequence $\left\{ \left( X^{(k)}, M^{(n+1)}, Y^{(k)} \right)_{k \in \Cal}, \left( X^{(n+1)}, M^{(n+1)}, Y^{(n+1)} \right) \right\}$ is exchangeable conditionally to $M^{(n+1)}$. 

Remark now that for any $(X,M,Y)\in\mathcal{X}\times\mathcal{M}\times\mathcal{Y}$, we can rewrite the score on this point with augmented mask $\widetilde{M} := \max\left(M,M^{(n+1)}\right)$ as:
\begin{equation*}
s\left(\left(X,\widetilde{M}\right),Y;\hat{A}\left(\Phi\left(\cdot,\cdot\right),\cdot\right)\right) := s\left(\left(X,M^{(n+1)}\right),Y;{\widetilde{A}}\left(\widetilde{\Phi}\left(\cdot,\cdot;M\right),\cdot;M\right)\right),
\end{equation*}
where, for an additional mask $M'\in\mathcal{M}$, $\widetilde{\Phi}\left(X,M;M'\right) := \Phi\left(X,\max\left(M,M'\right)\right)$ and similarly ${\widetilde{A}}\left(X,M;M'\right) := \hat{A}\left(X,\max\left(M,M'\right)\right)$.

Thus, we can re-write \newnested as:
\begin{align*}
& \left\{ Y^{(n+1)} \notin \widehat C_{n,\alpha}^{\text{MDA-Nested}^{\star}}\left(X^{(n+1)},M^{(n+1)}\right) \right\} \\
= & \Biggl\{ \begin{aligned}[t] & (1-\alpha)(\#\Cal+1) \\ & \leq \sum\limits_{k \in \Cal} \begin{aligned}[t] \mathds{1}\Bigl\{ & s\left( \left(X^{(n+1)}, \widetilde{M}^{(k)}\right), Y^{(n+1)};\hat{A}\left(\Phi\left(\cdot,\cdot\right),\cdot\right)\right) \\ & \left. > s\left( \left(X^{(k)}, \widetilde{M}^{(k)}\right), Y^{(k)};\hat{A}\left(\Phi\left(\cdot,\cdot\right),\cdot\right)\right) \right\} \Biggr\} \end{aligned} \end{aligned} \\
= & \Biggl\{ \begin{aligned}[t] & (1-\alpha)(\#\Cal+1) \\ & \leq \sum\limits_{k \in \Cal} \begin{aligned}[t] \mathds{1}\Bigl\{ & s\left( \left(X^{(n+1)}, M^{(n+1)}\right), Y^{(n+1)};\widetilde{A}\left(\widetilde{\Phi}\left(\cdot,\cdot;M^{(k)}\right),\cdot;M^{(k)}\right)\right) \\ & \left. > s\left( \left(X^{(k)}, M^{(n+1)}\right), Y^{(k)};\widetilde{A}\left(\widetilde{\Phi}\left(\cdot,\cdot;M^{(k)}\right),\cdot;M^{(k)}\right)\right) \right\} \Biggr\}. \end{aligned} \end{aligned}
\end{align*}

Therefore, an equivalent rewriting of \newnested is a specific instance of what is presented in \Cref{alg:mda_nested_cropped}, where the differences with \newnested (\Cref{alg:mda_nested}) are highlighted through \textcolor{blindgreen}{green text}.
\begin{algorithm}
\caption{MDA based on random masks}
\label{alg:mda_nested_cropped}
\begin{algorithmic}[1] 
\REQUIRE Imputation function $\Phi$, fitted predictor $\hat{A}$, conformity score function $s\left(\cdot,\cdot;f\right)$ for $f \in \mathcal{F}  := \mathcal{Y}^{\mathcal{X}\times\mathcal{M}}$, level $\alpha$, calibration set $\left\{ \left(X^{(k)},M^{(k)},Y^{(k)}\right)\right\}_{k\in\widetilde{\Cal}}$, test point $\left( X^{({n+1})},M^{(n+1)} \right)$
\ENSURE Prediction set ${\widehat{C}}_{n,\alpha}^{\text{MDA-RandomMask}}\left( X^{(n+1)},M^{(n+1)} \right)$
\STATE \textcolor{blindgreen}{Define $\mathcal{G}\left( \nu \right) := \widetilde{A}\left(\widetilde{\Phi}\left(\cdot,\cdot;\nu\right);\nu \right)$ for some $\nu\in\mathcal{M}$}
\FOR {$k \in \widetilde{\Cal}$}
\COMMENT {Additional nested masking}
\STATE \textcolor{blindgreen}{Randomly draw $\nu_k$, independently from $\left(X^{(k)},Y^{(k)},X^{(n+1)},Y^{(n+1)}\right)$}
\STATE \textcolor{blindgreen}{Fit $\hat{g}_{k} := \mathcal{G}\left( \nu_k \right) = \widetilde{A}\left(\widetilde{\Phi}\left(\cdot,\cdot;\nu_k\right);\nu_k \right)$}
\ENDFOR
\STATE \vspace{-0.5cm}
\textcolor{blindgreen}{
\begin{align*}
& {\widehat{C}}_{n,\alpha}^{\text{MDA-RandomMask}}\left(  X^{(n+1)},M^{(n+1)} \right) \\
:= & \Biggl\{ y\in\mathcal{Y}:\begin{aligned}[t]& (1-\alpha)(1+\#\Cal) > \\
& \left. \sum_{k\in\widetilde{\Cal}} \mathds{1} \left\{ s\left( \left(X^{(k)}, {M}^{(k)}\right), Y^{(k)}; \hat{g}_k \right) < s\left( {\left(X^{(n+1)}, M^{(n+1)}\right)}, y; \hat{g}_k )\right) \right\} \right\}\end{aligned} 
\end{align*}
}
\end{algorithmic}
\end{algorithm}

Indeed, conditionally on $M^{(n+1)}$, we can apply \Cref{alg:mda_nested_cropped} to the modified data set \\ $\left(X^{(k)},M^{(n+1)},Y^{(k)}\right)_{k\in\widetilde{\Cal}}$, by using the $\left(M^{(k)}\right)_{k\in\widetilde{\Cal}}$ as random draw for $\left(\nu_k\right)_{k\in\widetilde{\Cal}}$ in line 3. This is legit only when the distribution of $\left(X^{(k)},M^{(n+1)},Y^{(k)}\right)_{k\in\widetilde{\Cal}\cup\{n+1\}}$ belongs to $\p\mcarymx^{\otimes(\#\widetilde{\Cal}+1)}$, as then for any $k\in\widetilde{\Cal}$, it holds that $M^{(k)}\ind\left(X^{(k)},Y^{(k)},X^{(n+1)},Y^{(n+1)}\right)$. 

This \Cref{alg:mda_nested_cropped} is a special case of leave-one-out CP presented in \Cref{app:jk}, with a randomized algorithm that only returns a pre-determined function associated with a parameter value, without fitting anything on the $n-1$ data points. Therefore, the validity result of leave-one-out CP extends to \Cref{alg:mda_nested_cropped}. 

In particular, under $\p\mcarymx\iid$, \newnested corresponds to applying \Cref{alg:mda_nested_cropped} to the data set $\left(X^{(k)},M^{(n+1)},Y^{(k)}\right)_{k\in\Cal}$ which is exchangeable conditionally on $M^{(n+1)}$, and by using in line 3 the $\left(M^{(k)}\right)_{k\in\Cal}$ as random draw for $\left(\nu_k\right)_{k\in\Cal}$. Therefore, \newnested is MCV-$\p\mcarymx\iid$ at the level $1-2\alpha$. 
\end{proof}

The idea in this re-writing is to see that, conditionally on $M^{(n+1)}$, \newnested predicting on the test point $\left(X^{(n+1)},M^{(n+1)}\right)$ given the data set $\left(X^{(k)},M^{(k)},Y^{(k)}\right)_{k=1}^{n}$, is in fact another run of \newnested which predicts on a complete test point $\breve{X}^{(n+1)} \in \breve{\mathcal{X}}$, where $\breve{\mathcal{X}}$ is the set of dimension $|\obs\left(M^{(n+1)}\right)|$ containing only the observed dimensions of $\mathcal{X}$ according to $M^{(n+1)}$, given the cropped data set $\left(\breve{X}^{(k)},\breve{M}^{(k)},Y^{(k)}\right)_{k=1}^{n}$, with $\breve{M}^{(k)}\in\breve{\mathcal{M}}$ that, similarly to $\breve{X}$, is the set of dimension $|\obs\left(M^{(n+1)}\right)|$ containing only the observed dimensions of $\mathcal{M}$ according to $M^{(n+1)}$.

\bibliography{ref}
\bibliographystyle{apalike}

\end{document}